\renewcommand{\Pr}{\mathbb{P}}
\newcommand{\R}{\mathbb{R}}
\newcommand{\E}{\mathbb{E}}
\newcommand{\cE}{\mathcal{E}}
\newcommand{\ket}[1]{| #1 \rangle}
\newcommand{\bra}[1]{\langle #1|}
\newcommand{\braket}[1]{\langle #1 \rangle}
\DeclareMathOperator{\poly}{poly}
\DeclareMathOperator{\linspan}{span}
\DeclareMathOperator{\ind}{\mathbbm{1}}
\DeclareMathOperator{\swap}{\mathrm{SWAP}}
\newtheorem{theorem}{Theorem}
\newtheorem{lemma}{Lemma}
\newtheorem{corollary}{Corollary}
\newtheorem{conjecture}{Conjecture}
\newcommand{\norm}[1]{\left\lVert#1\right\rVert}
\newcommand{\ones}{\mathbf{1}}
\newcommand{\HT}{\mathrm{HT}}
\newcommand{\CT}{\mathrm{CT}}
\newcommand{\ET}{\mathrm{ET}}
\newcommand{\EHT}{\mathrm{EHT}}
\newcommand{\esc}{\mathrm{esc}}
\newcommand{\eps}{\varepsilon}
\newcommand{\tO}{\widetilde{O}}
\newcommand{\cl}{\mathrm{cl}}
\title{Elfs, trees and quantum walks}
\author{Simon Apers\thanks{Universit\'e Paris Cit\'e, CNRS, IRIF, Paris, France. \texttt{apers@irif.fr}}
\and Stephen Piddock\thanks{Department of Computer Science, Royal Holloway, University of London, United Kingdom. \texttt{stephen.piddock@rhul.ac.uk}}}
\date{\vspace{-10mm}}
\begin{document}

\begin{titlepage}
	\maketitle \pagenumbering{roman}
	
\begin{abstract}
We study an elementary Markov process on graphs based on electric flow sampling (elfs).
The elfs process repeatedly samples from an electric flow on a graph.
While the sinks of the flow are fixed, the source is updated using the electric flow sample, and the process ends when it hits a sink vertex.

We argue that this process naturally connects to many key quantities of interest.
E.g., we describe a random walk coupling which implies that the elfs process has the same arrival distribution as a random walk.
We also analyze the electric hitting time, which is the expected time before the process hits a sink vertex.
As our main technical contribution, we show that the electric hitting time on trees is logarithmic in the graph size and weights.

The initial motivation behind the elfs process is that quantum walks can sample from electric flows, and they can hence implement this process very naturally.
This yields a quantum walk algorithm for sampling from the random walk arrival distribution, which has widespread applications.
It complements the existing line of quantum walk search algorithms which only return an element from the sink, but yield no insight in the distribution of the returned element.
By our bound on the electric hitting time on trees, the quantum walk algorithm on trees requires quadratically fewer steps than the random walk hitting time, up to polylog factors.
\end{abstract}

 	\setcounter{tocdepth}{2}
 	\newpage
    \renewcommand{\baselinestretch}{0.9}\normalsize
    \tableofcontents
    \renewcommand{\baselinestretch}{1.0}\normalsize
	\newpage
\end{titlepage}

\newpage
\pagenumbering{arabic}

\section{Introduction and summary}

It has long been known that there is an elegant and fundamental connection between electric networks and random walks on graph, justified through the theory of harmonic functions~\cite{doyle1984random}.
This connection gives precise characterizations of random walk and Markov chain properties in terms of the associated electric network properties \cite{chandra1996electrical}, and it has been instrumental in the study of recurrence and percolation properties \cite{lyons2017probability}, and the design of randomized and dynamic algorithms \cite{broder1989generating,gao2022fully}.
In this work we consider a new process called the \emph{electric flow sampling process} or \emph{elfs process}, which repeatedly samples from an electric flow on a graph.
While the sinks of the flow are fixed, the source is updated using the electric flow sample, and the process ends when a sample hits a sink vertex.
We analyze this process, uncovering many connections with the usual random walks and electric networks framework.
One highlight is a coupling of the elfs process to the random walk from the initial source to the sinks.
This shows that the process has the same arrival distribution as the random walk.
The coupling also reveals a close connection to a new quantity called the random walk ``escape time''.
We conjecture, and provide evidence, that a random walk can sample from an electric flow in a number of steps given by the escape time.
Then we analyze the ``electric hitting time'', which is the expected number of electric flow samples before the process hits a sink vertex (put more mundanely, how long until the elfs hit the sink?).
As the main technical contribution, we show that on trees the electric hitting time is logarithmic in the graph size and edge weights, and we reveal a connection to the entropy of the arrival distribution.

The elfs process is alternatively (and originally) motivated by a more recent but similarly profound connection between electric networks and \emph{quantum} walks.
Quantum walks are the quantum analogue of random walks on graphs, and they are instrumental in the design of quantum algorithms and the study of quantum computing and quantum physics more broadly.
The connection between quantum walks and electric networks seems to first have been described by Belovs \cite{belovs2013quantum,belovs2013time}, who used it in a quantum walk search algorithm for 3-distinctness.
It was later used and refined with applications to backtracking algorithms \cite{montanaro18quantum,jarret2018improved}, the traveling-salesman problem \cite{moylett2017quantum}, 2-player games \cite{ambainis2017quantum} and branch-and-bound algorithms \cite{montanaro2020quantum}.
The direct impulse for this work came from work by Piddock \cite{piddock2019quantum}, in which it was shown that quantum walks can be used very naturally to sample from electric flows.
We show that this requires a number of quantum walk steps bounded by the \emph{square root} of the random walk escape time, providing a quadratic speedup over the conjectured random walk complexity.
By repeatedly using a quantum walk for electric flow sampling, we obtain a quantum walk algorithm that approximately samples from the random walk arrival distribution.
This opens up new avenues for finding quantum speedups over random walk algorithms.
Using our bound on the electric hitting time on trees, we show that the resulting quantum walk algorithm samples from the arrival distribution on trees in quadratically fewer steps than the random walk hitting time, up to polylogarithmic factors.

\subsection{Elfs}

Here we describe in more detail our results concerning the elfs process and its connection to random walks.
Consider a graph $G = (V,E,w)$ with nonnegative weights $w_e \geq 0$ for $e \in E$.
The unit electric flow between a source vertex $s$ and sink $M \subseteq V$ is defined as the unique unit flow $f:E \to \R$ from $s$ to $M$ that minimizes the energy
\[
\cE(f)
= \sum_e \frac{f_e^2}{w_e}.
\]
The resulting energy of the electric flow is called the effective resistance $R_s = \cE(f)$.
If we interpret the graph as an electric network, by replacing every edge $e$ by a resistor with conductance $w_e$, then the electric flow is the flow resulting from applying a voltage of $R_s$ between $s$ and $M$.
Alternatively, it is the unique unit potential flow from $s$ to $M$ induced by potentials or \emph{voltages} $v_x$, in the sense that $f_{(x,y)} = (v_x-v_y) w_{x,y}$.

The elfs process is now defined as follows:

\begin{algorithm}
\caption*{\bf Electric flow sampling (elfs) process} \label{alg:elfs}
\begin{algorithmic}
\vspace{1mm}
\State
From an initial source vertex $s$ and fixed sink $M \subseteq V$, repeat the following:
\begin{enumerate}
\item
Let $f$ denote the unit electric flow from source $s$ to sink $M$.\newline
Sample an edge $e$ with probability proportional to $f_e^2/w_e$.
\item
Pick a random endpoint $x$ from $e$ and set the source $s = x$.
\end{enumerate}
The process ends when step 2.~picks a vertex $x \in M$.
\end{algorithmic}
\end{algorithm}

We define the limit or \emph{arrival distribution} $\mu$ as the distribution over $M$ when the process ends.
The \emph{electric hitting time} $\EHT_s$ from $s$ is the expected number of iterations or electric flow samples it takes from $s$ to hit $M$.

\paragraph{Arrival distribution and electric hitting time}

With some effort, we can find a closed and algebraic form for the transition matrix of the elfs process, expressed in terms of the graph Laplacian.
This leads to some (a priori) mysterious conclusions.

Consider the elfs process with initial source vertex $s$ and sink $M$, and let $f$ denote the unit electric flow from $s$ to $M$.
From the algebraic expression, we find that the arrival distribution equals
\[
\mu_x
= \sum_y f_{y,x}.
\]
I.e., the process ends at vertex $x \in M$ with probability equal to the electric flow from $s$ into vertex $x$.
From the connection between random walks and electric networks, we know that this equals the arrival distribution of a \emph{random walk} from $s$ to $M$.

The algebraic expression also gives some insight into the electric hitting time.
Let $\HT_s$ denote the random walk hitting time from $s$ to $M$, let $v_x$ denote the electric potential at $x$ in the unit electric $s$-$M$ flow, and consider a new quantity defined as
\[
\ET_s
= \frac{1}{R_s} \sum_x v^2_x d_x.
\]
We call this quantity the \emph{escape time} from $s$, for reasons to be clarified later.
The expression should be compared to a similar expression for the hitting time, which is $\HT_s = \sum_x v_x d_x$.
Now if $Y_1$ is the random endpoint picked after sampling an edge from the electric $s$-$M$ flow, then we show that
\begin{equation} \label{eq:intro-ET}
\E[\HT_{Y_1}]
= \HT_s - \frac{1}{2} \ET_s,
\end{equation}
where the expectation is over the random vertex $Y_1$.
So, in expectation, a single electric flow sample knocks off $\ET_s/2$ from the hitting time.
Now let $\{Y_0 = s, Y_1, \dots, Y_\rho \in M \}$ denote the Markov chain corresponding to the elfs process (i.e., the random sequence of sources), so that $\E[\rho] = \EHT_s$ equals the electric hitting time.
Then a direct consequence is that
\begin{equation} \label{eq:sum-ET}
\E\left[ \sum_{t=0}^{\rho-1} \ET_{Y_t} \right]
= 2 \HT_s.
\end{equation}
Using that $\ET_x \geq 1$ for $x \notin M$,\footnote{This follows from the fact that $\ET_s \geq \frac{1}{R_s} v^2_s d_s = R_s d_s \geq 1$ for $s \notin M$.} this implies for instance that the electric hitting time $\EHT_s \leq 2 \HT_s$, and this inequality is tight for a graph consisting of a single edge.

\paragraph{Random walk coupling}

We obtain a more intelligible explanation for these observations (and terminologies) by proving that there exists a coupling between a random walk from $s$ to $M$ and the elfs process.
Consider the respective Markov chains
\[
\{ X_0 = s,X_1,\dots,X_\tau \in M \}
\quad \text{and} \quad \{ Y_0 = s,Y_1,\dots,Y_\rho \in M \}.
\]
We show that there exist random stopping times $0 < \nu_1 < \nu_2 < \dots < \nu_\rho = \tau$ such that
\[
\left( \{X_0=s, X_1,\dots,X_\tau \in M\},\,
\{Y_0=s,Y_1 = X_{\nu_1},\dots,Y_\rho = X_{\nu_\rho} \in M\} \right)
\]
describes a coupling between the two processes.
In other words, we can think of the elfs process as a subsampling of the random walk from $s$ to $M$.
A direct consequence of this coupling is that both processes have the same arrival distribution (i.e., $\Pr[Y_\rho = y] = \Pr[X_\tau = y]$ for all $y$), explaining our earlier observation.

\begin{figure}[htb]
\centering
\includegraphics[width=.6\textwidth]{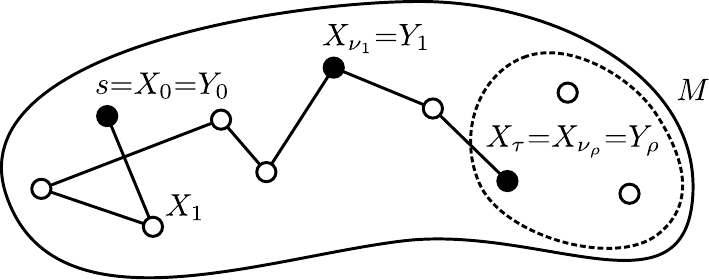}
\caption{Coupling between a random walk and elfs process through stopping rules $\nu_1<\dots<\nu_\rho$.}
\label{fig:intro-coupling}
\end{figure}

In the terminology of \cite{lovasz1995efficient}, the stopping time $\nu_1$ is a ``local stopping rule'' from $s$ to the elfs distribution.
The rule has a ``halting state'' (any sink vertex), and hence it is ``mean-optimal'': its expected length $\E[\nu_1]$ is optimal among all stopping rules, and it equals the ``access time'' from $s$ to the elfs distribution.

We show that this access time is equal to half the random walk escape time, so that for example $\E[\nu_1] = \ET_s/2$.
This gives an alternative proof for \cref{eq:intro-ET}.
It also motivates our conjecture that it is possible to efficiently produce an (approximate) electric flow sample from $s$ using $O(\ET_s)$ steps of a random walk.\footnote{While this seems to follow from the stopping rule, the construction of the stopping rule requires knowledge of the global electric flow, and is hence not feasible to implement.}

We proceed by giving a random walk interpretation of the escape time $\ET_s = \frac{1}{R_s} \sum_x v^2_x d_x$.
Letting $\esc_s = 1 + \max\{t \mid X_t = s\}$ denote the time at which the random walk has left $s$ for the final time (before hitting the sink), we show that
\[
\E[\esc_s]
= \ET_s.
\]
While this seems like an elementary random walk quantity, it seems to not have been studied before.
Apart from its connection to electric flow sampling, we also demonstrate it as a useful quantity for designing random walk algorithms.
Specifically, we design a random walk algorithm for approximating the random walk escape probability or the effective resistance $R_s$ from $s$ to $M$ to a multiplicative factor $\eps$ using $O(\hat\ET_s/\eps^3)$ random walk steps, where $\hat\ET_s \geq \ET_s$ is some given upper bound on the electric hitting time.
This contrasts with a simpler algorithm that requires $O(\HT_s/\eps^2)$ random walk steps.
While trivially $\HT_s \geq \ET_s$, the escape time can be much smaller than the commute time.

\subsection{Trees}
We defined the electric hitting time $\EHT_s$ as the expected number of electric flow samples before the elfs process from source $s$ hits the sink $M$.
As a trivial but important example, consider a path of length $n$ with the source on one end and the sink on the other end.
After a single sample, the source will jump approximately uniformly at random between the source and the sink.
In expectation, this repeatedly halves the distance between the source and the sink, yielding an electric hitting time $\EHT_s \in \Theta(\log n)$.

\begin{figure}[htb]
\centering
\includegraphics[width=.65\textwidth]{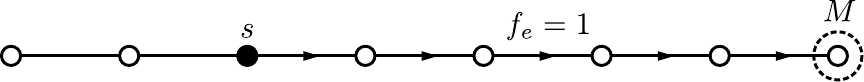}
\caption{$n$-vertex path graph with source $s$ and sink $M$. The electric hitxting time from $s$ to~$M$ is $\EHT_s \in \Theta(\log n)$.}
\label{fig:path}
\end{figure}

Surprisingly, if both endpoints of the line are marked (or equivalently, we consider an $n$-cycle) then bounding the electric hitting time is much more fiddly, yet a similar $\Theta(\log n)$ bound holds.
In fact, and this is our main technical contribution, we show that a similar bound holds for any tree graph.
This contrasts with general graphs, as we show that for instance a complete graph or an expander graph can have electric hitting time $\EHT_s \in \Omega(\poly(n))$.

To state our bound, consider a tree $T$, a source vertex $s$ and a sink $M$.
Without loss of generality, we can assume that $s$ is the root of the tree and $M$ is the set of leaves.
If the tree is unweighted, then we show that
\[
\EHT_s
\leq 2 + \log(R_s) + 2 H(\mu),
\]
with $H(\mu)$ the entropy of the arrival distribution $\mu$ on $M$.
Since $R_s \leq n$ and $H(\mu) \leq \log(|M|)$, this implies that $\EHT_s \in O(\log n)$.
If the tree is weighted, then
\[
\EHT_s
\leq 2 + \log(R_s d_M) + H(\mu) - D(\mu || \sigma_M),
\]
with $d_M$ the total weight of the edges incident on $M$, and $D(\mu || \sigma_M) \geq 0$ the relative entropy between the arrival distribution $\mu$ and the stationary distribution $\sigma_M$ restricted to $M$.
With $w_{\min}$ and $w_{\max}$ the minimum and maximum edge weight, respectively, we can use that $R_s \leq n/w_{\min}$ to get $\EHT_s \in O(\log(n w_{\max}/w_{\min}))$.

As hinted by the path with both endpoints in the sink, the proof for general trees is rather involved.
On a high level, we apply a divide-and-conquer argument.
We use the notion of a Schur complement to show how to contract part of the tree, without affecting the elfs process in the uncontracted part.
This allows us to bound the electric hitting time ``from the bottom up'', i.e., we prove a recurrence relation on the electric hitting time in a tree, based on the electric hitting time in its subtrees.

\paragraph{Schur complement}
On a general graph $G$, the Schur complement on a vertex subset $S$ describes a modified graph $G'$ with vertex set $S$ such that a random walk on $G'$ behaves the same as a random walk on $G$, when considering only the vertices in $S$.
Similarly, the potentials of an electric flow between any pair of vertices in $S$ are the same in $G$ and $G'$.
As a consequence, it seems natural that also the elfs process on $G'$ can be related to that on $G$.
We show that this is the case when there is a single vertex connecting $S$ to its complement $S^c$.
We call such a vertex a ``cut vertex''.
We exploit how a cut vertex implies that, for any source vertex $s \in S$, the electric flow on $S^c$ is always the same (up to a global rescaling).
Since any vertex in a tree graph is a cut vertex, this implies the Schur complement result on trees.

\begin{figure}[htb]
\centering
\includegraphics[width=.9\textwidth]{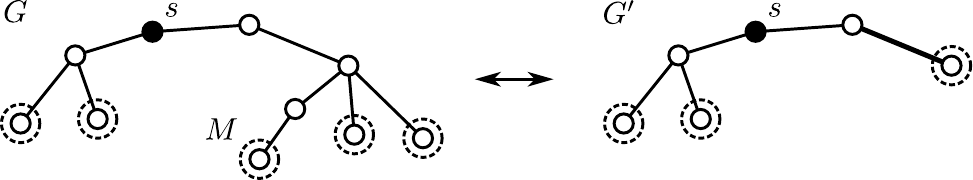}
\caption{Schur complement $G'$ of a tree graph $G$. We relate the number of steps of the elfs process on $G$ with source $s$ and sink $M$ (dotted circles) to the elfs process on $G'$.}
\label{fig:schurcomplement}
\end{figure}

\paragraph{Induction on tree depth}
We then bound the electric hitting time with an inductive argument.
Using the Schur complement, we find that the number of steps that the elfs process spends in a subtree only depends on the structure of that subtree, and the electric flow going from the source $s$ into that subtree.
In particular, there is an easy expression for the number of steps on an edge going directly into the sink.
This is the base case for our inductive argument.
The inductive step then relates the number of steps of the elfs process in a subtree to the number of steps in smaller subtrees of that subtree.
This allows us to bound the electric hitting time by working from the sink, at the bottom of the tree, up to the source, at the root of the tree.

\subsection{Quantum walks}
We showed that electric flow sampling is closely tied to random walks.
However, the original motivation came from the study of \emph{quantum} walks on graphs.
While there are varying definitions, the canonical definition by Szegedy \cite{szegedy2004quantum} describes quantum walks in terms of reflection operators and invariant subspaces.
It was recognized, most notably by Belovs \cite{belovs2013quantum}, that this definition naturally connects with the linear algebraic formulation of electric networks and electric flows.
This was further exploited by Piddock \cite{piddock2019quantum}, who showed that quantum walks can be used very naturally to prepare the quantum flow state
\[
\ket{f}
= \frac{1}{\sqrt{R_s}} \sum_e \frac{f_e}{\sqrt{w_e}} \ket{e}.
\]
Measuring this state returns an edge $e$ with probability $f_e^2/(R_s w_e)$, which corresponds to an electric flow sample.
We show that an $\eps$-approximation of $\ket{f}$ can be prepared using $\tO(\sqrt{\hat\ET_s}/\eps)$ steps of a quantum walk, given an upper bound $\hat\ET_s \geq \ET_s$.
This quadratically improves (and further motivates) the conjectured number of random walk steps required for electric flow sampling.

\paragraph{Quantum elfs process}
We can simulate the elfs process by repeatedly applying the quantum walk algorithm for electric flow sampling.
Ignoring errors, the total number of quantum walks steps required would be
\[
\E\left[ \sum_{j=0}^{\rho-1} \sqrt{\ET_{Y_j}} \right]
\leq \sqrt{2 \HT_s \EHT_s} \leq 2 \HT_s,
\]
where $\{Y_0 = s,Y_1,\dots,Y_\rho \in M\}$ is the original elfs process.
The first inequality follows from combining \cref{eq:sum-ET} with Jensen's inequality, and the second inequality uses $\EHT_s \leq 2 \HT_s$.
This gives an almost quadratic speedup when the electric hitting time $\EHT_s$ is small.
Taking the errors into account, we conjecture that it is possible to $\eps$-approximately simulate the elfs process in running time $\tO(\sqrt{\hat\HT_s \hat\EHT_s}/\poly(\eps))$, where $\hat\HT_s \geq \HT_s$ and $\hat\EHT_s \geq \EHT_s$ are given bounds.
Falling short of proving this, we do prove a weaker bound~of
\begin{equation} \label{eq:intro-QW}
\tO\left( \sqrt{\hat\HT_s} \hat\EHT_s^{3/2} / \eps^2 \right).
\end{equation}
For the particular case of trees we can invoke our earlier bound $\EHT_s \in O(\log( n w_{\max}/w_{\min}))$.
Combined with \cref{eq:intro-QW} this yields a quantum walk algorithm with complexity scaling as $\sqrt{\hat\HT_s}/\eps^2$ (up to polylogarithmic factors) for sampling from a distribution $\eps$-close to the elfs process.

\paragraph{Quantum walk search}
The resulting quantum walk algorithm connects most naturally to the long line of research on quantum walk search, which studies the use of quantum walks for finding distinguished or ``marked'' vertices on a graph.
Building on a long line of work \cite{szegedy2004quantum,magniez2011search,krovi2016quantum,apers2019quantum}, it was recently proven that quantum walks can find such a marked element quadratically faster than the random walk hitting time \cite{ambainis2019quadratic}, provided that we start from the stationary distribution.
This is a nontrivial restriction.
If instead we start from a fixed initial vertex, then a quadratic speedup is only known over the random walk \emph{commute time} \cite{belovs2013quantum,apers2019unified,piddock2019quantum}, which can be much larger than the hitting time.
Our quantum elfs process naturally fits in this quantum walk search framework.
Conceptually, we make two new contributions.

On the one hand, the quantum elfs process is a quantum walk algorithm that finds a marked element that is drawn from the random walk arrival distribution.
This contrasts with most earlier quantum walk algorithms that yield no insight into the quantum walk arrival distribution.
Moreover, it opens up the use of quantum walks to speed up the many classical algorithms that rely on the random walk arrival distribution, including algorithms for random spanning trees \cite{broder1989generating}, semi-supervised learning \cite{zhu2003semi} and linear system solving \cite{chung2015solving}.
We also discuss how our quantum walk algorithm can be generalized to sample from arbitrary target distributions, provided that we have access to a ``local stopping rule'' for that distribution \cite{lovasz1995efficient}.
This complements a recent quantum walk algorithm \cite{bencivenga2020sampling} for speeding up stopping rules when given access to the sampling probabilities.

On the other hand, the runtime of our quantum walk algorithm is closely related to the random walk hitting time.
E.g., on trees our quantum walk algorithm has complexity $\tO(\sqrt{\HT_s})$.
More generally, the conjectured complexity of $\tO(\sqrt{\HT_s\EHT_s})$ is always bounded by $\tO(\HT_s)$.
This contrasts with the aforementioned quantum walk algorithms that start from a single vertex and have a runtime $\tO(\sqrt{\CT_s})$  \cite{belovs2013quantum,apers2019unified,piddock2019quantum}.
Here $\CT_s$ is the random walk \emph{commute time} from $s$ to $M$,
and this is mostly incomparable to the random walk hitting time.

Finally, we also describe a quantum walk algorithm for estimating the escape probability or effective resistance.
This is a quantum counterpart to our random walk algorithm based on the escape time, and has a quadratically better runtime of $\tO(\sqrt{\hat\ET_s}/\eps^{3/2})$, for some given upper bound $\hat\ET_s \geq \ET_s$.
The quantum algorithm refines existing quantum walk algorithms based on the commute time, whose runtime is $\tO(\sqrt{\CT_s}/\eps^{3/2})$ \cite{ito2019approximate,piddock2019quantum}.

\newpage
\part{Elfs}

In this part we analyze the elfs process and its connection to random walks.

\section{Preliminaries}

\subsection{Graphs and electric networks}

Throughout this paper, we denote by $G = (V,E,w)$ a weighted, undirected graph with vertex set $V$, edge set $E$ and nonnegative weights $w:E \to \R_{\geq 0}$.
These weights can typically be thought of as the inverse ``cost'' of an edge.
For convenience, we extend $w$ to a function over all vertex pairs $(x,y) \in V^2$ by setting $w_{xy} = 0$ if $(x,y) \notin E$.

We can associate an electric network to $G$ by interpreting any edge $e \in E$ as a resistor with conductance $w_e$.
If $(x,y) \notin E$ then it has conductance $w_{xy} = 0$.
For any \emph{source}~$s \in V$ and \emph{sink} $M \subset V$, consider the unit electric flow $f^{s,M} = f$ from $s$ to $M$.\footnote{We drop the superscripts on $f^{s,M}$ when the source $s$ and sink $M$ are clear from context.}
This is the unique unit flow $f:E \to \R$ from $s$ to $M$ that minimizes the dissipated energy
\[
\cE(f)
= \sum_e \frac{f_e^2}{w_e}.
\]
The resulting energy of the electric flow is called the effective resistance $R_s = \cE(f^{s,M})$.
Alternatively, the electric flow is the unique potential flow from $s$ to $M$.
In other words, there exists a \emph{potential} vector $v^{s,M} = v$ over $V$ such that for all $x,y$ it holds that
\[
f_{xy}
= w_{xy} (v_x - v_y).
\]
While this potential vector is not unique, we fix it by demanding $v_y = 0$ for $y \in M$.
In this case we have that $v_s = R_s$.

We denote by $L = D - A$ the Laplacian associated to $G$, where $A$ is the weighted adjacency matrix with $(A)_{xy} = w_{xy}$ and $D$ is the diagonal degree matrix with $(D)_{xx} = \sum_y w_{xy}$.
We can describe the potential vector $v$ using the Laplacian by
\[
v_x
= (L_{UU}^{-1})_{sx}.
\]
where $L_{UU}$ denotes the principal submatrix with rows and columns indexed by vertices in $U = V \backslash M$.
The effective resistance is then $R_s = v_s = (L_{UU}^{-1})_{ss}$.

\subsection{Markov chains and random walks}

Consider a Markov chain $\{X_0,X_1,X_2,\dots\}$ with states $X_t \in V$.
Its dynamics are determined by transition probabilities $P_{xy} = \Pr(X_{t+1} = y \mid X_t = x)$.
As a consequence, if the distribution of $X_t$ is described by row vector $p_t$, then $X_{t+1}$ is distributed according to the distribution $p_{t+1} = p_t P$.
A \emph{random walk} on a graph $G$ is described by a Markov chain with transition matrix $P=D^{-1}A=I-D^{-1}L$.
It turns out that any \emph{reversible} Markov chain can be written as a random walk on an appropriately weighted graph.

A Markov chain that is \emph{absorbing} in a sink $M \subseteq V$ has a transition matrix of the form
\[
P = \left( \begin{array}{cc}
P_{UU} & P_{UM} \\
0 & I_{MM} \\
\end{array} \right),
\]
where we have split the state space into sink vertices ($M$) and remaining vertices ($U = V \backslash M$).

\paragraph{Hitting time.}
For a general Markov chain, the hitting time $\tau_M$ of sink $M$ is the expected time at which the chain hits a vertex in $M$.
Starting from a vertex $s$, we denote its expectation value $\E_s(\tau)$ by $\HT_s$.
We have that (see e.g.~\cite{kemeny1983finite})
\[
\HT_s
= \sum_x (I_{UU} - P_{UU})^{-1}_{sx}.
\]
For a random walk from $s$ this becomes
\[
\HT_s
= \sum_x (I_{UU} - P_{UU})^{-1}_{sx}
= \sum_x (L_{UU}^{-1} D)_{sx}
= \sum_x v_x d_x,
\]
where we recall that $v_x$ is the voltage at $x$ in the unit electric flow from $s$ to $M$.

\paragraph{Arrival distribution.}
We will also be interested in the distribution with which the random walk hits $M$.
We call this the \emph{arrival distribution} $\mu$.
For a Markov chain that starts from~$s$ and that is absorbing in $M$, this is given by the infinite time distribution $\mu_y = \lim_{n\rightarrow \infty} (P^n)_{sy}$.
Assuming that the hitting time is finite from any initial vertex (and so $\norm{P_{UU}} < 1$), we can rewrite
\[
\lim_{t \to \infty} P^t
= \lim_{t \to \infty} \left( \begin{array}{cc}
P_{UU}^t & \sum_{i=0}^{t-1} P_{UU}^t P_{UM} \\
0 & I_{MM} \\
\end{array}\right)
= \left( \begin{array}{cc}
0 & (I_{UU}-P_{UU})^{-1} P_{UM} \\
0 & I_{MM} \\
\end{array}
\right),
\]
and so
\begin{equation} \label{eq:arrival-distr}
\mu_y
= \big((I_{UU}-P_{UU})^{-1} P_{UM}\big)_{sy}.
\end{equation}
For a random walk this is
\begin{equation} \label{eq:RW-arrival-distr}
\mu_y
= \big((I_{UU}-P_{UU})^{-1} P_{UM}\big)_{sy}
= \big((L_{UU}^{-1} D_{UU}) (-D_{UU}^{-1}L_{UM}) \big)_{sy}
= -(L_{UU}^{-1} L_{UM})_{sy}.
\end{equation}
Recalling that $(L_{UU}^{-1})_{sx} = v_x$, this becomes
\[
\mu_y
= \sum_x v_x w_{xy}
= \sum_x f_{xy}.
\]
I.e., the probability of arriving in sink vertex $y$ equals the electric flow going into $y$.

\paragraph{Other useful identities.}

We summarize some additional identities that relate properties of random walks and electric flows.
Proofs of these identities can be found in e.g.~\cite{lyons2017probability}.
For a random walk from $s$ that is absorbing in $M$, let $S_{xy}$ denote the expected number of traversals from $x$ to $y$.
Then
\begin{equation} \label{eq:exp-traversals}
\E_s[S_{xy}]
= v_x w_{xy}.
\end{equation}
With $T_x$ the expected number of visits to $x$, this implies that
\begin{equation} \label{eq:exp-visits}
\E_s[T_x]
= v_x d_x.
\end{equation}
This generalizes the expression for the random walk arrival distribution.

Now consider a non-absorbing random walk starting from $s$.
We denote by $\tau^+_s$ the expected \emph{return time} to $s$.
By Kac's lemma this is equal to
\begin{equation} \label{eq:exp-return}
\E_s[\tau^+_s]
= \frac{1}{\pi(s)}
= \frac{d_s}{\sum_x d_x}.
\end{equation}
Finally, we also have that
\begin{equation} \label{eq:return-prob}
\Pr_s[\tau_s^+ > \tau_M]
= \frac{1}{R_s d_s}
\quad\text{ and }\quad
\Pr_x[\tau_s < \tau_M]
= \frac{v_x}{R_s}.
\end{equation}

\section{Electric flow sampling process}

We begin our analysis of the elfs process, which we recall below.

\begin{algorithm}[H]
\caption*{\bf Electric flow sampling (elfs) process} \label{alg:elfsrep}
\begin{algorithmic}
\vspace{1mm}
\State
From an initial source vertex $s$ and fixed sink $M \subseteq V$, repeat the following:
\begin{enumerate}
\item
Let $f$ denote the unit electric flow from source $s$ to sink $M$.\newline
Sample an edge $e$ with probability proportional to $f_e^2/w_e$.
\item
Pick a random endpoint $x$ from $e$ and set the source $s = x$.
\end{enumerate}
The process ends when step 2.~picks a vertex $x \in M$.
\end{algorithmic}
\end{algorithm}

In the following, we give a closed form algebraic expression for the transition matrix of the elfs process.
From this expression, we find expressions related to the arrival distribution and electric hitting time.

\subsection{Algebraic form and arrival distribution}

For a fixed sink $M$, the elfs process describes a Markov chain on the vertex set $V$.
We let $Q$ denote its transition matrix.
Starting from a source $s$, the transition probability $Q_{sx}$ denotes the probability of picking $x$ in step 2.
In the following lemma we give a closed and useful expression for~$Q$.
We use shorthand $A * B$ for the entrywise product of two matrices, and introduce voltage matrix $V$ and diagonal resistance matrix $R$ defined by
\[
(V)_{xy}
= v^x_y
\quad \text{ and } \quad
(R)_{xx}
= R_x.
\]

\begin{lemma} \label{lem:algebraic-Q}
The transition matrix of the elfs process with sink $M$ is
\[
Q
= \left(\begin{matrix}
	I_{UU}-\frac{1}{2}R^{-1}(V*V)L_{UU} & - \frac{1}{2}R^{-1}(V*V)L_{UM}\\
	0 & I_{MM}
\end{matrix}\right). \]
In particular, with $v$ the voltages of the unit electric flow from $s$ to $M$,
\begin{equation} \label{eq:Qsx}
Q_{sx}
= \frac{1}{2 R_s} \sum_y (v_x - v_y)^2 w_{xy}.
\end{equation}
\end{lemma}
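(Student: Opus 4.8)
The plan is to first establish the explicit scalar expression \cref{eq:Qsx} directly from the definition of the process, and then massage it into the claimed matrix form using flow conservation. The sink rows are immediate: once the process reaches a vertex of $M$ it halts, so for $s \in M$ we have $Q_{ss} = 1$ and $Q_{sx} = 0$ otherwise, which accounts for the bottom block row $\begin{pmatrix} 0 & I_{MM}\end{pmatrix}$. For a source $s \in U$, write $v = v^{s}$ for the voltages of the unit $s$-$M$ flow. Sampling an edge $e$ with probability $f_e^2/(w_e\,\cE(f))$ and then picking a uniform endpoint, the probability of landing on $x$ is $\tfrac12 \sum_y (f_{xy}^2/w_{xy})/R_s$. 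Substituting $f_{xy} = w_{xy}(v_x - v_y)$ and $\cE(f) = R_s$ turns $f_{xy}^2/w_{xy}$ into $(v_x - v_y)^2 w_{xy}$, which yields \cref{eq:Qsx}. Note this holds for every target $x$, including $x \in M$ where $v_x = 0$.

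The heart of the argument is to recognize the sum $\sum_y (v_x - v_y)^2 w_{xy}$ as a Laplacian expression. Expanding the square gives three pieces, $v_x^2 d_x$, $-2 v_x \sum_y v_y w_{xy}$, and $\sum_y v_y^2 w_{xy}$. I would rewrite the cross term using $\sum_y v_y w_{xy} = v_x d_x - (Lv)_x$ and the last term using $\sum_y v_y^2 w_{xy} = v_x^2 d_x - \sum_z v_z^2 (L)_{zx}$, both of which are just $L = D - A$ together with $w_{xx}=0$. After cancellation of the $v_x^2 d_x$ contributions this collapses to the clean identity
\[
\sum_y (v_x - v_y)^2 w_{xy} = 2 v_x (Lv)_x - \sum_z v_z^2 (L)_{zx},
\]
valid for all $x$. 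The point of this rewriting is that the two surviving terms are exactly the two ingredients needed for the target matrix.

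It remains to evaluate these two terms on $U$. Flow conservation gives $L_{UU} v = e_s$ (the indicator of $s$), and since $v$ vanishes on $M$ this extends to $(Lv)_x = \delta_{sx}$ for every $x \in U$; combined with $v_s = R_s$, the first term becomes $2 v_x (Lv)_x = 2 R_s \delta_{sx}$, contributing the identity block after dividing by $2R_s$. For the second term, $v_z = 0$ on $M$ lets me restrict the summation index $z$ to $U$, so $\sum_z v_z^2 (L)_{zx} = ((V * V)L)_{sx}$, reading the $s$-th row of $V*V$ as $(v_z^2)_z$; splitting the column index $x$ into $U$ and $M$ produces $((V*V)L_{UU})_{sx}$ and $((V*V)L_{UM})_{sx}$ respectively. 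Dividing by $2R_s$ (i.e.\ left-multiplying by $\tfrac12 R^{-1}$) and substituting into \cref{eq:Qsx} gives $Q_{sx} = \delta_{sx} - \tfrac12 (R^{-1}(V*V)L)_{sx}$, which is precisely the top block row once $\delta_{sx}$ is identified with $I_{UU}$ on the $U$-columns and recognized to vanish on the $M$-columns. I expect the main obstacle to be purely bookkeeping — keeping the index conventions straight, and in particular checking that restricting $V$, $R$, and the summation to $U$ is harmless because $v$ is supported on $U$ — rather than any genuine analytic difficulty.
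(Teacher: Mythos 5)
Your proposal is correct and takes essentially the same route as the paper's proof: both derive \cref{eq:Qsx} from the edge-sampling probability and then use flow conservation together with $v_s = R_s$ and the vanishing of $v$ on $M$ to collapse $\sum_y (v_x - v_y)^2 w_{xy}$ into $2R_s\delta_{sx} - \sum_z v_z^2 L_{zx}$, which gives the claimed block form. Your expansion of the square with separate Laplacian rewritings of the cross and quadratic terms is just a regrouped version of the paper's one-line identity $(v_x-v_y)^2 = 2v_x(v_x-v_y) - (v_x^2-v_y^2)$, so the two computations coincide step for step.
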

\begin{proof}
The probability of picking an edge $(x,y)$ when sampling from the electric flow from $s$ to $M$ is
\[
Q_{s,(x,y)}
= \frac{1}{R_s} \frac{(f_{xy})^2}{w_{xy}}
= \frac{1}{R_s} (v_x-v_y)^2 w_{xy}.
\]
The probability of picking a vertex $x$ is hence
\begin{align*}
Q_{sx}
= \sum_y \frac{Q_{s,(x,y)}}{2}
&= \sum_y \frac{(v_x-v_y)^2 w_{xy}}{2R_s} \\
&= 2 v_x \sum_y \frac{(v_x-v_y)w_{xy}}{2 R_s}
	- \sum_y \frac{(v_x^2-v_y^2)w_{xy}}{2 R_s} \\
&= \frac{v_x}{R_s} \delta_{sx} - \sum_y \frac{(v_x^2 - v_y^2) w_{xy}}{2 R_s}
= \delta_{sx} - \frac{1}{2 R_s} \sum_y v_y^2 L_{yx}.
\end{align*}
In the last equality we used that $v_s = R_s$.
In the third equality we used flow conservation and the fact that $v_x = 0$ for $x \in M$ to rewrite
\[
v_x \left(\sum_y (v_x - v_y) w_{xy}\right)
= v_x \left(\sum_y f_{xy}\right)
= v_x \left( \delta_{sx} - \sum_{z \in M} q_z \delta_{xz} \right)
= v_x \delta_{sx},
\]
with $q_z$ the flow going into vertex $z \in M$.
The final expression for $Q_{sx}$ exactly equals the claimed expression.
\end{proof}

From this expression we can easily prove that the elfs arrival distribution equals the random walk arrival distribution.

\begin{corollary} \label{cor:arrival-distribution}
The arrival distribution of the elfs process from initial source $s$ to sink $M$ is equal to the arrival distribution of a random walk from $s$ to $M$.
\end{corollary}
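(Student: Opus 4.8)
The plan is to use the closed-form expression for the transition matrix $Q$ from \cref{lem:algebraic-Q} and compute the arrival distribution directly by the same algebraic manipulation used earlier for the random walk in \cref{eq:RW-arrival-distr}. Since the elfs process is an absorbing Markov chain with sink $M$, its arrival distribution is given (exactly as in \cref{eq:arrival-distr}) by
\[
\mu^{\mathrm{elfs}}_y
= \big( (I_{UU} - Q_{UU})^{-1} Q_{UM} \big)_{sy},
\]
where $Q_{UU}$ and $Q_{UM}$ are read off from the block form in \cref{lem:algebraic-Q}.

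First I would substitute the explicit blocks $I_{UU} - Q_{UU} = \tfrac{1}{2} R^{-1}(V*V) L_{UU}$ and $Q_{UM} = -\tfrac{1}{2} R^{-1}(V*V) L_{UM}$. The key observation is that both blocks share the common left factor $\tfrac{1}{2} R^{-1}(V*V)$, which is an invertible diagonal-times-entrywise-product operator restricted to $U$. When we form $(I_{UU} - Q_{UU})^{-1} Q_{UM}$, this shared factor cancels:
\[
(I_{UU} - Q_{UU})^{-1} Q_{UM}
= \Big( \tfrac{1}{2} R^{-1}(V*V) L_{UU} \Big)^{-1} \Big( -\tfrac{1}{2} R^{-1}(V*V) L_{UM} \Big)
= - L_{UU}^{-1} L_{UM},
\]
which is exactly the expression appearing in \cref{eq:RW-arrival-distr} for the random walk arrival distribution. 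Hence $\mu^{\mathrm{elfs}}_y = -(L_{UU}^{-1} L_{UM})_{sy} = \mu_y$, matching the random walk arrival distribution entry by entry.

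The main point requiring care is justifying the cancellation rigorously, i.e.\ that $\tfrac{1}{2} R^{-1}(V*V)$ acts as an invertible operator on the $U$-indexed coordinates and commutes out cleanly. This reduces to checking that the matrix $\tfrac{1}{2} R^{-1}(V*V)$, when restricted to rows and columns in $U$, is nonsingular so that $(I_{UU}-Q_{UU})$ inherits invertibility from $L_{UU}$; this is where the assumption that the hitting time is finite (equivalently $\norm{Q_{UU}} < 1$, so $(I_{UU}-Q_{UU})^{-1}$ exists) comes in, exactly paralleling the random walk argument. I expect this invertibility bookkeeping to be the only genuine obstacle; once it is in place, the cancellation is purely formal and the corollary follows immediately. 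As a sanity check, the same conclusion was already anticipated from the flow-based identity $\mu_x = \sum_y f_{y,x}$ stated in the introduction, so the algebraic derivation here simply confirms it through the transition matrix.
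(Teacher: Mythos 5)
Your proposal is correct and follows essentially the same route as the paper's own proof: both plug the block form of $Q$ from \cref{lem:algebraic-Q} into \cref{eq:arrival-distr} and cancel the common left factor $\tfrac{1}{2}R^{-1}(V*V)$ to arrive at $-L_{UU}^{-1}L_{UM}$, matching \cref{eq:RW-arrival-distr}. Your additional care about invertibility of the prefactor (deduced from $\norm{Q_{UU}}<1$ together with invertibility of $L_{UU}$) is a sound piece of bookkeeping that the paper leaves implicit when it writes $(I_{UU}-Q_{UU})^{-1} = 2L_{UU}^{-1}(V*V)^{-1}R$.
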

\begin{proof}
Plugging the expression for the elfs transition matrix $Q$ into expression \cref{eq:arrival-distr} for the Markov chain arrival distribution, we get that the elfs arrival distribution equals $q_y
= ((I_{UU}-Q_{UU})^{-1} Q_{UM})_{sy}$.
By \cref{lem:algebraic-Q} we can rewrite
\begin{align*}
(I_{UU}-Q_{UU})^{-1} Q_{UM}
&= \left(2 L_{UU}^{-1}(V*V)^{-1}R\right) \left(-\frac{1}{2}R^{-1}(V*V)L_{UM}\right)
= -L_{UU}^{-1}L_{UM},
\end{align*}
and so $q_y = (-L_{UU}^{-1}L_{UM})_{sy}$.
This matches the standard random walk arrival distribution in \cref{eq:RW-arrival-distr}.
\end{proof}

In fact, even the edge from which the elfs process arrives in the sink is distributed equally to the edge with which a random walk arrives in the sink (whose probability equals the flow along that edge).
This is easily seen by considering a modified graph $\hat G$ obtained by replacing any sink vertex $y$ by a separate vertex $u_{xy}$ for every incoming edge $(x,y)$.
The probability of arriving in $u_{xy}$ (and hence of taking the edge $(x,y)$) is then $f_{xy}$ for both the random walk and the elfs process.
Given that both processes behave exactly the same on $G$ and $\hat G$, this must be the case on the original graph as well.

\subsection{Electric hitting time}

Next we consider the \emph{electric hitting time}.
From an initial source vertex $s$ and a fixed sink $M$, the electric hitting time $\EHT_s$ equals the expected number of samples before the elfs process hits the sink.
We discuss the electric hitting time on illustrative examples such as the path graph, the binary tree and the complete graph.
Then we analyze the electric hitting time using the algebraic expression from \cref{lem:algebraic-Q}.
We show that a single electric flow sample reduces in expectation the (random walk) hitting time by a certain amount.
This proves for instance that $\EHT_s \leq 2\HT_s$.
In the later \cref{part:trees} we prove that the electric hitting on trees is logarithmic in the graph size and weights.

\subsubsection{Examples} \label{sec:examples}

In the following we analyze the electric hitting time on some canonical (unweigthed) graph examples.

\paragraph{Path graph.}
A simple yet key example is that of the elfs process on a path graph with one sink vertex (see \cref{fig:path} in the intro).
The unit electric flow from a source vertex $s$ to the sink $M$ takes value $f_{xy} = 1$ on each of the edges between $s$ and $M$.

A single electric flow sample from $s$ will return an approximately uniform vertex between $s$ and $M$.
This will repeatedly half the distance from $s$ to $M$, and for a path of length $n$ this yields an electric hitting time $\EHT_s \in \Theta(\log n)$.

As trivial as this example may seem, the situation becomes surprisingly more complicated when the sink includes \emph{both} ends of the path.\footnote{Equivalently, consider a single sink vertex on a \emph{cycle} graph.}
It follows from our later bound for the electric hitting time on trees in \cref{thm:elfstreebound} that also this example has electric hitting time $O(\log n)$.

\paragraph{Complete graph.}
Another insightful exercise is to analyze the electric hitting time on the complete graph.
Consider an initial source vertex $s$ and a sink $M \subseteq V$.
By explicit calculation we find the probability $p$ of hitting a sink vertex after a single electric flow sample from $s$.
Thanks to the symmetry of the complete graph, this probability is independent of $s$.
Hence, the elfs process effectively corresponds to a repeated Bernoulli experiment with success probability $p$.
The electric hitting time is then simply the number of trials until success, which is $\EHT_s = 1/p$.

\begin{figure}[htb]
\centering
\includegraphics[width=.35\textwidth]{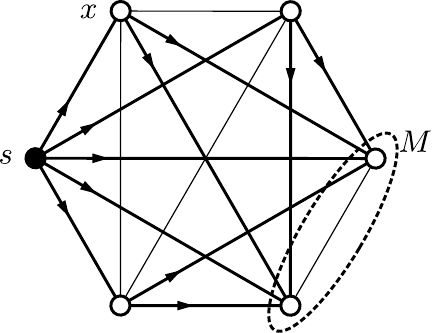}
\caption{$n$-vertex complete graph with source $s$ and sink $M$. The electric hitting time from $s$ to $M$ is $\Theta(|M|(1-|M|/n))$.}
\label{fig:complete}
\end{figure}

It remains to calculate the probability $p$ that an electric flow sample from $s$ returns a sink vertex.
To this end, one can check that the unit electric flow from $s$ to $M$ is determined by the voltages $v_s = R_s = (|M|+1)/(n|M|)$ and $v_x = 1/(n|M|)$ for $x \notin M$.
After some straightforward calculation this yields a success probability $p$ that is
\[
p
= \frac{1}{2 R_s} \sum_{y \in M} \sum_{x \notin M} f_{xy}^2
= \frac{1}{2} \left( \frac{1}{|M|+1} + \frac{|M|}{n} \right).
\]
The electric hitting time is hence $\EHT_s = 1/p \in \Theta(\min\{|M|,n/|M|\})$.
In contrast to the random walk hitting time, this shows that adding more vertices to the sink can actually \emph{increase} the electric hitting time.

\paragraph{Expander graph.}
Using a more qualitative argument, we see that a similar bound holds for an expander graph.
We use that for a $d$-regular expander it holds that $R_{x,M} \in \Theta(1/d)$, $\HT_x \in \Theta(n/|M|)$ and $\CT_x \in \Theta(n)$.

It is now easy to see that $\EHT_s \in O(\min\{|M|,n/|M|\})$. The $O(n/|M|)$ bound follows from the fact that $\EHT_s \in O(\HT_s) \in O(n/|M|)$.
The $O(|M|)$ bound follows from the fact that the flow going into $M$ can only spread over $O(d |M|)$ edges, and that the effective resistance is always $\Theta(1/d)$.

\subsubsection{What happens to the classical hitting time?}
We can use the classical hitting time as a measure of ``distance'' from the sink. After one step of electric flow sampling, the expected hitting time of the classical random walk is (using $\ket{\ones}$ to denote the all ones vector)
\begin{align}\bra{s} Q_{UU}VD \ket{\ones} &=\bra{s} VD \ket{\ones}-\frac{1}{2}\bra{s} R^{-1}(V*V) L_{UU} VD \ket{\ones}\\
	&=\bra{s} VD \ket{\ones}-\frac{1}{2}\bra{s} R^{-1}(V*V)D \ket{\ones} \\
	&= \HT_s - \frac{\ET_s}{2}, \label{eq:HTdecrease} \end{align}
where we introduced the shorthand
\begin{equation} \label{eq:def-ETs}
\ET_s
= \bra{s} R^{-1}(V*V)D \ket{\ones}
= \frac{1}{R_s}\sum_x v_x^2 d_x.
\end{equation}
We will call this quantity the ``random walk escape time'' from $s$, a name that we will justify in the next section.
So, in expectation, a single elf sample knocks off $\ET_s/2$ from the expected hitting time.
The following bounds will be useful:
\begin{equation} 1 \leq R_s d_s\le \ET_s \le \HT_s 
	\label{eq:xbound}
\end{equation}
\begin{proof}
	For the right hand side, we use that $v_x \le R_s$ for all $x$ to show that \item $\frac{1}{R_s} \sum_x v_x^2 d_x \le \sum_x v_x d_x = \HT_s$.
For the left hand side we have
\[
\frac{1}{R_s} \sum_x v_x^2 d_x
\ge \frac{1}{R_s} v_s^2 d_s
= R_s d_s \ge 1. \qedhere
\]
\end{proof}

With $\{Y_0 = s,Y_1,\dots,Y_\rho \in M\}$ the elfs process, it follows (and we can check directly) that:
\begin{equation}
	\label{eq:Esumx}
	\EHT_s
	= \E[\rho]
	\leq \E\left[\sum_{j=0}^{\rho-1} \ET_{Y_j}\right]
	= 2 \HT_s,
\end{equation}
so that the electric hitting time is at most twice the random walk hitting time.

\subsubsection{What happens to the electric potential?}
Alternatively, for any $s \notin M$, we can use the electric potential $\bra{s}V$ to measure our progress.
Take an arbitrary initial vertex $x \notin M$ and let $y$ be the sampled vertex, then we have that
\[
\E_x[V_{sy}]
= \E_x[\braket{y|V|s}]
= \braket{x|Q_{UU}V|s}.
\]
Using the expression $Q_{UU} = I_{UU} - \frac{1}{2} R^{-1} (V \ast V) L_{UU}$ this becomes
\[
\E_x[v^s_y]
= \braket{x|V|s} - \frac{\braket{x|V|s}^2}{2R_x}
= v^s_x \left( 1 - \frac{v^s_x}{2R_x} \right).
\]
For $x = s$ this shows that after a single sample the potential halves in expectation.
Using that $v^s_x = v^x_s$ and $v^s_x = R_x \Pr_s[\tau_x < \tau_M]$ we can cleanly rewrite the expression as
\[
\E_x[v^s_y]
= v^s_x \left( 1 - \frac{1}{2} \Pr_s[\tau_x < \tau_M] \right).
\]

\section{Random walks}

In this section we show that there exists a coupling between the elfs process and a random walk from source $s$ to sink $M$.
This gives more insightful proofs for the facts that (i) the elfs process has the same arrival distribution as a random walk, and (ii) a single electric flow sample reduces the hitting time by $\ET_s$ in expectation.
Then, we give an interpretation of this same quantity $\ET_s$ as the random walk \emph{escape time}, and use it to design random walk algorithms for estimating effective resistances.

\subsection{Random walk coupling} \label{sec:RW-coupling}

Here we construct a coupling between a random walk from $s$ to $M$ and the elfs process, described by respective Markov chains
\[
\{X_0=s, X_1,\dots,X_\tau \in M\}
\quad \text{ and } \quad
\{Y_0=s,Y_1,\dots,Y_\rho \in M\}.
\]
Note that $\E[\tau] = \HT_s$ and $\E[\rho] = \EHT_s$.
We also proved (and will reprove) that both walks have the same arrival distribution, and so $\Pr[Y_\rho = y] = \Pr[X_\tau = y]$ for all $y$.

The coupling is based on a simple random walk stopping rule $\nu$ for which $X_\nu = Y_1$.
For a random walk starting from $s$, the rule $\nu$ is the following: when visiting a vertex $x$, stop with probability
\begin{equation} \label{eq:stopping-prob}
p_x
= \frac{w_{xx'}}{w_{xx'} + d_x},
\quad \text{ with }
w_{xx'}
= \sum_y \left( \frac{v_x-v_y}{v_x} \right)^2 w_{xy},
\end{equation}
and otherwise continue.
This is equivalent to appending to every vertex $x \in U$ a new sink vertex $x'$ through an edge with weight $w_{xx'}$.
The main result of this section is the following lemma, whose proof we postpone to the end of the section.
\begin{lemma}[Coupling lemma] \label{lem:coupling}
The probability that a random walk from $s$ with stopping rule~$\nu$ stops at vertex $x$ is
\[
\Pr_s[X_\nu = x]
= \Pr_s[Y_1 = x] = Q_{s,x}.
\]
Moreover, the expected length of the stopping rule is $\E_s[\nu] = \ET_s/2$.
\end{lemma}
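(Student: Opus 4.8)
The plan is to realize the stopping rule $\nu$ as an ordinary absorbing random walk on the modified graph $\hat G$ obtained from $G$ by attaching to every $x \in U$ a pendant sink $x'$ through an edge of weight $w_{xx'}$, exactly as indicated below \cref{eq:stopping-prob}. In $\hat G$ a walk from $s$ is absorbed either at some pendant $x'$ — which I interpret as ``stopping at $x$'', so $X_\nu = x$ — or at an original sink $y \in M$, giving $X_\nu = y$. Both claims then become statements about this single absorbing walk, and the whole argument hinges on understanding its voltages.

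The key step is to show that the voltages of the unit electric flow from $s$ in $\hat G$ are
\[
\hat v_x = \frac{v_x^2}{2 R_s},
\]
i.e.\ up to the factor $1/(2R_s)$ they are the \emph{squares} of the original voltages. I would establish this by verifying directly that this candidate solves the defining linear system $(\hat L_{UU}\hat v)_x = \delta_{sx}$, where $\hat L_{UU} = L_{UU} + \operatorname{diag}(w_{xx'})$ is the Laplacian of $\hat G$ restricted to $U$. Expanding $w_{xx'} v_x^2 = \sum_y (v_x - v_y)^2 w_{xy}$ and collecting terms, the pendant contribution and the quadratic cross terms combine so that
\[
(\hat L_{UU}\hat v)_x = \frac{v_x}{R_s}\,(L_{UU} v)_x = \frac{v_x}{R_s}\,\delta_{sx} = \delta_{sx},
\]
using the original harmonicity $(L_{UU}v)_x = \delta_{sx}$ and $v_s = R_s$. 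This computation is the heart of the argument and, I expect, the main obstacle: one has to recognize that squaring the voltages (rather than some less transparent transformation) is precisely what the pendant weights $w_{xx'}$ are engineered to produce.

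With the modified voltages in hand, the first claim follows from the expected-traversal identity \cref{eq:exp-traversals} applied in $\hat G$. Since each pendant $x'$ is absorbing it is traversed at most once, so the probability of stopping at $x \in U$ equals the expected number of traversals $x \to x'$, namely $\hat v_x w_{xx'} = \frac{v_x^2}{2R_s}\,w_{xx'} = \frac{1}{2R_s}\sum_y (v_x - v_y)^2 w_{xy} = Q_{sx}$ by \cref{eq:Qsx}. The same identity covers $y \in M$: the absorption probability at $y$ is $\sum_x \hat v_x w_{xy} = \frac{1}{2R_s}\sum_x v_x^2 w_{xy}$, which equals $Q_{sy}$ once we use $v_y = 0$. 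Hence $\Pr_s[X_\nu = x] = Q_{sx} = \Pr_s[Y_1 = x]$ for every $x$.

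For the expected length I would count only the traversals of \emph{original} edges, since a step into a pendant is not a step of the walk on $G$. Summing \cref{eq:exp-traversals} over the original neighbors of each $x \in U$ gives
\[
\E_s[\nu] = \sum_{x \in U}\sum_y \hat v_x w_{xy} = \sum_{x \in U} \hat v_x d_x = \frac{1}{2R_s}\sum_x v_x^2 d_x = \frac{\ET_s}{2},
\]
where the inner sum contributes the original degree $d_x$ (the pendant edge of weight $w_{xx'}$ being excluded), matching the definition \cref{eq:def-ETs} of $\ET_s$. The only delicate point here is the bookkeeping that distinguishes the original-edge traversals from the single pendant traversal; beyond that it is the same computation as the classical identity $\HT_s = \sum_x v_x d_x$, now carried out in $\hat G$.
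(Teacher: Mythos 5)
Your proposal is correct and takes essentially the same route as the paper's own proof: the same pendant-sink graph, the same candidate voltages $\hat v_x = v_x^2/(2R_s)$, and the same traversal-counting via \cref{eq:exp-traversals} for both the stopping probabilities and $\E_s[\nu] = \ET_s/2$. Your verification of the linear system $(\hat L_{UU}\hat v)_x = \delta_{sx}$ is just the paper's flow-conservation check written in Laplacian form, so the two arguments are computationally identical.
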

By repeatedly applying this stopping rule (until the random walk stops at a vertex in~$M$), we obtain stopping times $\nu_1,\nu_2,\dots,\nu_\rho = \tau$ such that
\[
\left( \{X_0=s, X_1,\dots,X_\tau \in M\},\,
\{Y_0=s,Y_1 = X_{\nu_1},\dots,Y_\rho = X_{\nu_\rho} \in M\} \right)
\]
describes a coupling between the random walk and the elfs process.
We illustrate this coupling in \cref{fig:coupling}.

\begin{figure}[htb]
\centering
\includegraphics[width=.6\textwidth]{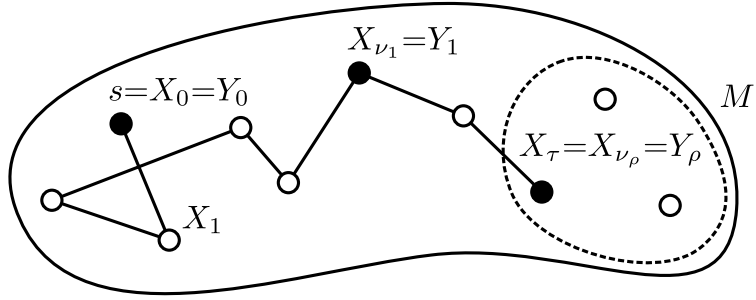}
\caption{(identical to Figure \ref{fig:intro-coupling}) Coupling between a random walk $\{X_0 = s,X_1,\dots,X_\tau \in M\}$ and an elfs process $\{Y_0 = s,Y_1,\dots,Y_\rho \in M\}$ through a stopping rule $\nu$.
From $s$ this ensures that $X_\nu = Y_1$.}
\label{fig:coupling}
\end{figure}

The coupling implies directly that the elfs process has the same arrival distribution as a random walk, reproving \cref{cor:arrival-distribution}.
The coupling also gives an alternative proof for \cref{eq:Esumx}, which effectively states that
\[
\E\left[ \sum_{j=1}^\rho (\nu_j-\nu_{j-1}) \right]
= \E[\nu_\rho]
= \E[\tau] = \HT_s.
\]
We now give the proof of the coupling lemma.
\begin{proof}[Proof of \cref{lem:coupling} (coupling lemma)]
We consider the equivalent formulation of the stopping rule using a random walk on a modified graph $G'$, which is obtained by adding to each non-sink vertex $x \in U$ an additional sink vertex $x'$ with an edge $(x,x')$ of weight $w_{xx'}$ (as in \cref{eq:stopping-prob}).
Let $U'$ denote this new set of vertices.
If we run a random walk on $G'$ until it hits a vertex in the new sink $U' \cup M$, then it will stop at a vertex $x' \in U'$ with arrival probability $\Pr[X_\nu = x]$ (and it will stop at a vertex $y \in M$ with probability $\Pr[X_\nu = y]$).

\begin{figure}[htb]
\centering
\includegraphics[width=.6\textwidth]{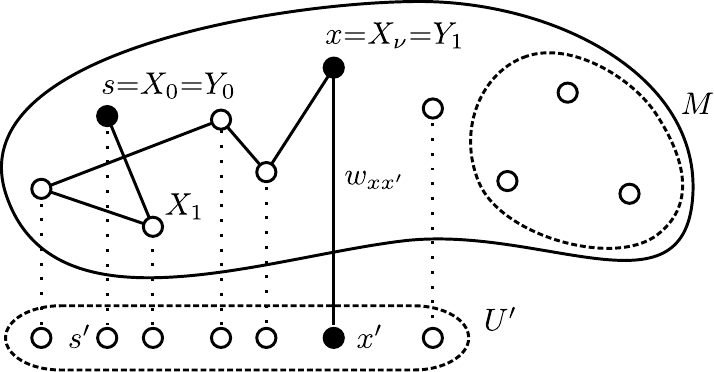}
\caption{Modified graph $G'$. The stopping rule $\nu$ is equivalent to a random walk from $s$ hitting a sink vertex in $U' \cup M$.}
\label{fig:mod-graph}
\end{figure}

Let $f'$ denote the unit electric flow on $G'$ from $s$ to $U' \cup M$.
We argue that this flow is determined by the voltages
\[
v'_x
= v_x^2/(2R_s)
\]
for $x \in U$ and $v'_x = 0$ for $x \in U' \cup M$.
Recalling that the voltages are the \emph{unique} potentials that induce a unit flow from $s$ to $U' \cup M$, it suffices to prove that these voltages induce such a unit flow.
This follows from direct computation: for any $x \in U$ we have that the outgoing flow is
\begin{align*}
\sum_y f'_{xy} + f'_{xx'}
&= \sum_y (v'_x-v'_y) w_{xy} + v'_x w_{xx'} \\
&= \sum_y \frac{v_x^2 - v_y^2}{2R_s} w_{xy}
    + \frac{v_x^2}{2R_s} \sum_y \left( \frac{v_x-v_y}{v_x} \right)^2 w_{xy} \\
&= \frac{v_x}{R_s} \sum_y (v_x - v_y) w_{xy}
= \frac{v_x}{R_s} \delta_{sx} = \delta_{sx}.
\end{align*}
For $y \in M$ the incoming flow is
\[
\sum_x f'_{xy}
= \sum_{x \in U} v'_x w_{xy}
= \sum_{x \in U} \frac{v_x^2}{2R_s} w_{xy}.
\]
By \cref{eq:Qsx} this equals $Q_{sy}$.
As the incoming flow equals the arrival probability $\Pr[X_\nu = y]$, we have that $\Pr[X_\nu = y] = Q_{sy}$.
For $x' \in U'$ it is
\[
f'_{xx'}
= v'_x w'_{xx'}
= \frac{v_x^2}{2R_s} \sum_y \left( \frac{v_x-v_y}{v_x} \right)^2 w_{xy}
= \frac{1}{2R_s} \sum_y (v_x-v_y)^2 w_{xy}.
\]
By \cref{eq:Qsx} this equals $Q_{sx}$, and hence $\Pr[X_\nu = x] = Q_{sx}$.
This proves the first statement of the lemma.

To prove the second statement, note that the expected length of the stopping rule $\nu$ equals the expected number of edges traversed by the random walk between the ``original'' vertices $U \cup M$ (i.e., excluding edges of the form $(x,x')$), up until the random walk hits $U' \cup M$.
By \cref{eq:exp-traversals}, the expected number of times that the random walk crosses such an edge $(x,y)$ is $v'_x w_{xy}$.
Summing this over all edges in the original graph we get
\[
\sum_{x,y \in U \cup M} v'_x w_{xy}
= \sum_{x \in U} v'_x d_x
= \frac{1}{2 R_s} \sum_{x \in U} v_x^2 d_x
= \frac{\ET_s}{2}. \qedhere
\] 
\end{proof}

There exists a similar stopping rule for sampling \emph{edges}.
A \emph{lazy} random walk $\{\tilde X_0=s,\tilde X_1,\dots,\tilde X_\tau \in M\}$ from $s$ to $M$ can be described as follows: from the current vertex $\tilde X_i = x$, pick a random outgoing edge $Z_i = e = (x,y)$ with probability $w_{xy}/d_x$, and let $\tilde X_{i+1}$ be a random endpoint of $Z_i$.
Now consider the edge stopping rule $\mu$: when picking edge $e = (x,y)$, stop with probability
\[
p_{xy}
= \frac{(v_x-v_y)^2}{v_x^2 + v_y^2}.
\]
We have the following edge coupling lemma, the proof of which we defer to \cref{app:edge-coupling}.
\begin{restatable}[Edge coupling lemma]{lemma}{edgecoupling} \label{lem:edge-coupling}
The probability that a random walk from $s$ with stopping rule~$\mu$ stops at edge~$e$ is
\[
\Pr_s[Z_\mu = e]
= f_e^2/(R_s w_e).
\]
\end{restatable}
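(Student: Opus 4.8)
The plan is to mirror the proof of the vertex coupling lemma (\cref{lem:coupling}): I will realize the lazy walk together with the edge stopping rule $\mu$ as an ordinary absorbing random walk on a modified graph $G''$, and then identify the stopping probability $\Pr_s[Z_\mu = e]$ with the electric flow absorbed into a dedicated sink of $G''$. For each edge $e = (x,y)$ of $G$ with $x \in U$, I subdivide $e$ by inserting a midpoint $m_e$ joined to $x$ and $y$ by edges of weight $w_e$ each, and attach to $m_e$ a fresh sink $t_e$ through an edge of weight $b_e = w_e(v_x - v_y)^2/(v_x v_y)$. A short computation shows that a random walk on $G''$ started at a non-sink vertex $x$ first moves to $m_e$ with probability $w_e/d_x$ (``it picks edge $e$''), and from $m_e$ is absorbed at $t_e$ with probability $b_e/(2w_e + b_e) = p_{xy}$ and otherwise steps to a uniformly random endpoint $x$ or $y$. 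This is exactly one step of the lazy walk with stopping rule $\mu$, so $\Pr_s[Z_\mu = e]$ equals the probability that the walk on $G''$ is absorbed at $t_e$, which in turn equals the electric flow entering $t_e$ (the arrival-distribution identity $\mu_y = \sum_x f_{xy}$ from \cref{cor:arrival-distribution}).

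It then remains to compute the unit electric flow on $G''$ from $s$ to the enlarged sink $M \cup \{t_e\}_e$, and following \cref{lem:coupling} I will guess the potentials rather than solve a linear system. My ansatz is $\phi_x = v_x^2/R_s$ for $x \in U \cup M$ and $\phi_{m_e} = v_x v_y/R_s$ for $e = (x,y)$, with $\phi \equiv 0$ on each $t_e$. Because the electric flow is the \emph{unique} unit potential flow to the sink, it suffices to verify Kirchhoff's law at every internal vertex together with unit injection at $s$. At a midpoint $m_e$ the net current sent to $x$ and $y$ is $(2\phi_{m_e} - \phi_x - \phi_y)w_e = -w_e(v_x-v_y)^2/R_s$, which is cancelled exactly by the current $\phi_{m_e} b_e = w_e(v_x-v_y)^2/R_s$ drawn into $t_e$. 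At a non-sink vertex $x$ the net outgoing current telescopes to
\[
\sum_y (\phi_x - \phi_{m_e}) w_{xy} = \frac{v_x}{R_s} \sum_y (v_x - v_y) w_{xy} = \frac{v_x}{R_s} \sum_y f_{xy},
\]
which vanishes for $x \neq s$ by flow conservation in the original flow and equals $v_s/R_s = 1$ at $s$. Hence the ansatz is the electric flow, and the current absorbed at $t_e$ is $\phi_{m_e} b_e = w_e (v_x - v_y)^2 / R_s = f_e^2/(R_s w_e)$, which is the claimed value. As a sanity check, summing over all edges gives $\frac{1}{R_s}\sum_e f_e^2/w_e = \cE(f)/R_s = 1$, so the rule stops almost surely.

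The step requiring the most care, and the main obstacle, is finding the correct gadget: both the bilinear midpoint potential $\phi_{m_e} = v_x v_y/R_s$ and the weight $b_e$ must be guessed so that everything cancels, and the construction degenerates on edges incident to the sink. Indeed, if $y \in M$ then $v_y = 0$, so $p_{xy} = 1$ and $b_e$ is formally infinite. I would treat such edges by replacing the gadget with a direct edge from $x$ to a new sink $t_e$ of weight $w_e$ (the $b_e \to \infty$ limit, in which $m_e$ is shorted to the sink so that $\phi_{m_e} = v_x v_y/R_s = 0$); the absorbed current is then $\phi_x w_e = v_x^2 w_e/R_s = f_e^2/(R_s w_e)$, so the formula and the conservation check at $x$ go through unchanged once sink-incident terms are read with $v_y = 0$. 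Everything else is routine verification in the spirit of the vertex coupling lemma.
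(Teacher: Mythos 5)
Your proof is correct and takes essentially the same approach as the paper's: the same subdivide-each-edge gadget with a pendant sink of weight $w_e(v_x-v_y)^2/(v_x v_y)$, the same potential ansatz $\phi_x = v_x^2/R_s$ and $\phi_{m_e} = v_x v_y / R_s$, and the same verification via uniqueness of the unit potential flow. The only (cosmetic) difference is that where the paper assigns a formally infinite weight to the pendant edge when $y \in M$, you resolve this degenerate case with an explicit limiting gadget, which is if anything slightly more careful.
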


\subsection{Random walk escape time}

For a random walk $\{X_0=s,X_1,\dots,X_\tau \in M\}$ from $s$ to $M$, consider the quantity $\esc = 1 + \max\{t \mid X_t = s\}$, which denotes the time at which the random walk has left $s$ for the final time.
In this section we prove the following lemma, with $v$ the voltage in the unit electric flow from $s$ to $M$.
\begin{lemma}[escape time identity] \label{lem:escape-time}
$\E_s(\esc) = \ET_s = \frac{1}{R_s} \sum_x v_x^2 d_x$.
\end{lemma}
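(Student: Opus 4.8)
The plan is to write $\esc = 1 + L$, where $L = \max\{t \mid X_t = s\}$ is the time of the final visit to $s$, and to compute $\E_s[L]$ directly by summing over its possible values. The starting observation is that, by the Markov property at time $t$, the event $\{L=t\}$ factors into ``the walk is at $s$ at time $t$'' and ``started afresh from $s$, the walk never returns to $s$''. Since $s \notin M$ and $M$ is absorbing, being at $s$ at time $t$ already forces the walk to have stayed in $U = V \setminus M$, so $\Pr_s[X_t = s] = (P_{UU}^t)_{ss}$; and never returning to $s$ has probability exactly the escape probability $p := \Pr_s[\tau_s^+ > \tau_M] = 1/(R_s d_s)$ from \cref{eq:return-prob}. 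This should give $\Pr_s[L=t] = p\,(P_{UU}^t)_{ss}$ and hence $\E_s[L] = p \sum_{t \ge 0} t\,(P_{UU}^t)_{ss}$.

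Next I would evaluate the matrix sum via $\sum_{t\ge 0} t\,M^t = M(I-M)^{-2}$ (valid since $\norm{P_{UU}} < 1$) applied to $M = P_{UU}$. Writing $G := (I-P_{UU})^{-1} = L_{UU}^{-1}D_{UU}$ for the fundamental matrix and using $P_{UU}G = G - I$, the sum reduces to $(P_{UU}G^2)_{ss} = (G^2)_{ss} - G_{ss}$. Both terms are then read off from the electric-network dictionary of the preliminaries: $G_{sx} = (L_{UU}^{-1})_{sx} d_x = v_x d_x$, and by symmetry of $L_{UU}^{-1}$ also $G_{xs} = v_x d_s$, so that $G_{ss} = R_s d_s$ and $(G^2)_{ss} = \sum_x G_{sx}G_{xs} = d_s \sum_x v_x^2 d_x$. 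Combining the pieces yields $\E_s[L] = p\bigl(d_s \sum_x v_x^2 d_x - R_s d_s\bigr) = \frac{1}{R_s}\sum_x v_x^2 d_x - 1 = \ET_s - 1$, and therefore $\E_s[\esc] = 1 + \E_s[L] = \ET_s$.

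I expect the main obstacle to be the careful justification of the factorization $\Pr_s[L=t] = p\,(P_{UU}^t)_{ss}$, rather than the subsequent algebra: one must argue that $\{L=t\}$ really is the intersection of a time-$t$ event and an independent ``never return'' event, that absorption in $M$ is compatible with never returning to $s$ (so the escape probability is the relevant factor), and that the restriction from $P$ to $P_{UU}$ is automatic. An equivalent but more bookkeeping-heavy route would decompose the walk into returning excursions from $s$ plus one final escaping excursion and apply Wald's identity to the (geometric) number of returns, but the direct time-sum above seems cleanest. As sanity checks I would verify the identity on a single edge ($\ET_s = 1 = \E_s[\esc]$) and on the two-edge path $s\!-\!a\!-\!m$, where both sides equal $3$.
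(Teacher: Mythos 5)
Your proof is correct, but it takes a genuinely different route from the paper's. The paper computes $\E_s[\esc]$ via the tail-sum formula $\E_s[\esc] = \sum_{t\ge 0}\Pr_s[t < \esc]$, splits each term over the walker's current position, and invokes two electric identities: $\Pr_x[\tau_s<\tau_M] = v_x/R_s$ and the expected-visits identity $\sum_t \Pr_s[X_t=x] = v_x d_x$ (\cref{eq:exp-visits,eq:return-prob}); the quantity $\frac{1}{R_s}\sum_x v_x^2 d_x$ thus emerges as ``expected visits to $x$'' weighted by ``probability of returning to $s$ from $x$''. You instead determine the \emph{full law} of the last-exit time: $\Pr_s[L=t] = p\,(P_{UU}^t)_{ss}$ with $p = 1/(R_s d_s)$, using the other half of \cref{eq:return-prob}, and then evaluate $\sum_t t\,(P_{UU}^t)_{ss}$ algebraically through the fundamental matrix $G=(I-P_{UU})^{-1}=L_{UU}^{-1}D_{UU}$, where the key computation $(G^2)_{ss} = d_s\sum_x v_x^2 d_x$ rests on the symmetry of $L_{UU}^{-1}$ --- a reversibility/reciprocity input that plays the role the visit-and-return identities play in the paper. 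The steps you flagged as delicate all check out: the Markov property at time $t$ gives the factorization, absorption in $M$ makes $\Pr_s[X_t=s]=(P_{UU}^t)_{ss}$ automatic, and the paper's standing assumption $\norm{P_{UU}}<1$ (finite hitting times) both identifies ``never return to $s$'' with $\{\tau_M<\tau_s^+\}$ and licenses the series manipulation $\sum_t t\,P_{UU}^t = P_{UU}G^2 = G^2 - G$; you even get the built-in consistency check $\sum_t \Pr_s[L=t]=p\,G_{ss}=1$. What your route buys is strictly more information --- the exact geometric-type distribution of $\esc$, not just its mean --- at the price of a purely linear-algebraic middle section; the paper's argument stays probabilistic throughout and generalizes more transparently. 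Your sketched alternative via excursions and Wald's identity is, incidentally, close in spirit to the paper's second proof in \cref{app:escape-time-doob}, which conditions on returning via the Doob transform and applies Kac's lemma.
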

\begin{proof}
We prove this identity by combining two existing identities.
First, on a random walk from $s$ to $M$, the expected number of visits to a vertex $x$ is $v_x d_x$ (\cref{eq:exp-visits}).
Second, the probability that a walk starting at $x$ hits $s$ before $M$ is $\Pr_x[\tau_s<\tau_M] = v_x/R_{s}$ (\cref{eq:return-prob}).
Hence, a fraction $v_x/R_s$ of the visits to $x$ will return to $s$ before $M$, and this gives a total of $v_x^2 d_x/R_s$ visits to $x$ before the walk hits $s$ for the final time.
Taking the sum over all $x$ then gives the expected escape time $\E_s[\esc] = \frac{1}{R_s} \sum_x v_x^2 d_x$.

More rigorously, we expand
\[
\E_s[\esc]
= \E_s\left[ \sum_{t=0}^\infty \ind(t < \esc) \right]
= \sum_{t=0}^\infty \Pr_s[t < \esc].
\]
By rewriting
\[
\Pr_s[t < \esc]
= \sum_x \Pr_s[X_t = x] \Pr_x[\tau_s < \tau_M]
= \sum_x \Pr_s[X_t = x] \frac{v_x}{R_{s}},
\]
this becomes
\[
\E_s[\esc]
= \sum_{t=0}^\infty \sum_x \Pr_s[X_t = x] \frac{v_x}{R_{s}}
= \sum_x \frac{v_x}{R_{s}} \sum_{t=0}^\infty \Pr_s[X_t = x].
\]
By \cref{eq:exp-visits} we have that $\sum_{t=0}^\infty \Pr_s[X_t = x] = \E_s\left[\sum_{t=0}^\infty \ind[X_t = x]\right] = v_x d_x$, and this gives the claimed expression.
\end{proof}
\noindent
For the interested reader, we give in \cref{app:escape-time-doob} an alternative proof of the escape time identity using an elegant tool called the \emph{Doob transform}.

The escape time identity justifies the name \emph{random walk escape time} for the quantity $\ET_s$, which we encountered previously.
Note that the escape time from $s$ clearly satisfies the inequalities
\[
1
\le |\{ t \: | \: X_t = s \}|
\le \esc_s
\le \tau,
\]
with $\tau$ the hitting time from $s$ to $M$.
Taking expectations, and using that $\E_s[|\{ t \: | \: X_t = s \}|] = R_s d_s$ (by \cref{eq:return-prob}), we find an alternative proof of the bounds in \cref{eq:xbound} which state that
\[
1
\le R_s d_s
\le \ET_s
\le \HT_s.
\]

\subsection{Local rules and general target distributions} \label{sec:stopping-rules}

Here we reframe the past few results in the theory of stopping rules as described by Lovász and Winkler \cite{lovasz1995efficient}.

\paragraph*{Local rule.}
Following the terminology of \cite{lovasz1995efficient}, the stopping rule in \eqref{eq:stopping-prob} is an example of a \emph{local rule}.
For some fixed initial distribution $\sigma$ and target distribution $\gamma$, a local rule associates to each vertex $x$ a stopping probability $p_x$.
This describes a stopping rule from $\sigma$ to $\gamma$ if, when the walk starts from $\sigma$, the walk stops according to $\gamma$.
They show that for any choice of $\sigma$ and $\gamma$ there exists such a choice of probabilities $\{p_x\}$.
In fact, it is not hard to see that our stopping rule corresponds to the precise local rule construction in \cite{lovasz1995efficient} that is based on ``exit frequencies'' (see their paper for a precise definition).\footnote{They set $p_x = \gamma_x/(\gamma_x + \mathrm{EF}_x)$ where $\mathrm{EF}_x$ is the exit frequency at $x$. If $\sigma$ is the indicator on the source~$s$, and $\gamma$ the elfs distribution, then one can show that the exit frequency $\mathrm{EF}_x = v_x^2 d_x/(2R_s)$ and their construction reproduces our rule from \eqref{eq:stopping-prob}.}

\paragraph*{Mean-optimal rules.}
Another interesting result in \cite{lovasz1995efficient} concerns ``mean-optimal'' stopping rules.
A mean-optimal stopping rule is a stopping rule from $\sigma$ to $\gamma$ whose expected length (i.e., the expected number of steps before stopping) is minimal over all possible (not necessarily local) stopping rules.
The expected length of such a mean-optimal rule is defined as the \emph{access time} (or generalized hitting time) $\HT_{\sigma,\gamma}$ from $\sigma$ to $\gamma$.
In \cite[Theorem 5.1]{lovasz1995efficient} they show that \emph{a stopping rule is mean-optimal if and only if it has a halting state}, where a halting state is a vertex such that the walk \emph{always} stops when visiting that vertex.
Now note that any sink vertex in our local rule \eqref{eq:stopping-prob} is a halting state, and hence our stopping rule is mean-optimal.
Since our rule has expected length $\ET_s/2$ (\cref{lem:coupling}), this proves that the access time from $s$ to the elfs distribution equals $\ET_s/2$, which is half the escape time (\cref{lem:escape-time}).

\paragraph*{General arrival distributions.}
We already mentioned that for any $\sigma$ and $\gamma$ there exist local stopping probabilities $\{p_x\}$ that form a stopping rule from $\sigma$ to $\gamma$.
We can reinterpret such stopping probabilities as in \cref{fig:mod-graph}: we connect every vertex $x$ to a new sink vertex $x'$ via an edge with weight $w_{xx'} = p_x d_x/(1-p_x)$, so that a random walk jumps from $x$ to the sink vertex $x'$ with probability $w_{xx'}/(w_{xx'}+d_x) = p_x$  (if $p_x=1$ then we let $x$ be the sink vertex).
Let $G'$ be the resulting graph, and let $M$ denote its sink. 
If $\sigma$ is the indicator on a source vertex $s$, then the target distribution $\gamma$ is precisely the arrival distribution of a random walk on $G'$ from $s$ to the sink $M$.
This small observation implies that we can massage any target distribution $\gamma$ into a random walk arrival distribution, and so our preceding discussion carries over to this generalized setting as well.
In our discussion on quantum walks this will imply that there exist quantum walk algorithms for sampling from arbitrary target distributions (provided we are given a local stopping rule).

\subsection{Approximating effective resistances}

In this section we illustrate the use of the escape time in designing random walk algorithms.
We describe a classical algorithm that approximates the escape probability or effective resistance $R_s$ using $O(\hat\ET_s)$ random walk steps from $s$, for some give upper bound $\hat\ET_s \geq \ET_s$.
We contrast this with a simpler algorithm that requires $O(\HT_s)$ random walk steps (but has a better dependency on the approximation factor).
The use of random walks to estimate graph parameters is an interesting area on itself, and we refer the interested reader to for instance \cite{benjamini2006waiting,cooper2014estimating,ben2019estimating,peng2021local}.
In the second half of the paper, we will discuss how these random walk algorithms parallel existing algorithms based on \emph{quantum} walks \cite{ito2019approximate,piddock2019quantum}.

At the end of this section, we formulate a conjecture stating that it is possible to produce an electric flow sample (approximately) from $s$ using $O(\ET_s)$ random walk steps.

\subsubsection{Approximating the effective resistance} \label{sec:RW-eff-res}

Here we describe two algorithms for estimating the random walk escape probability $p_s$ from a source $s$ and sink $M$, which is defined as
\[
p_s
\coloneqq \Pr[\tau_M < \tau^+_s],
\]
with $\tau_M$ the hitting time from $s$ to $M$ and $\tau^+_s$ the return time from $s$ back to $s$.
Recalling \cref{eq:return-prob}, this relates to the effective resistance $R_s$ through the relation
\[
p_s
= \frac{1}{d_s R_s},
\]
so that equivalently the algorithms estimate the effective resistance (when given $d_s$).
Both algorithms are based on random walks, and can be implemented space efficiently.

\paragraph{Hitting time algorithm.}

The following algorithm has expected runtime $O(\HT_s/\eps^2)$ and it returns an $\eps$-approximation of $R_s d_s$ (and hence the escape probability) with constant probability.

\begin{algorithm}[H]
\caption*{\bf Hitting time algorithm for estimating $R_s d_s$} \label{alg:HT-R}
\begin{algorithmic}
\vspace{1mm}
\State
Repeat for $i = 1,2,\dots,k \in O(1/\eps^2)$:
\begin{enumerate}
\item
From source $s$, run a random walk until you hit $M$.
Let $S_i$ be the number of visits to $s$ during the walk.
\end{enumerate}
Return $\frac{1}{k} \sum_{i=1}^k S_i$.
\end{algorithmic}
\end{algorithm}

We now argue that, with constant probability, the algorithm returns an approximation of $R_s d_s$ with multiplicative error $\eps$.
Using that $\Pr[\tau_M < \tau_s^+] = 1/(R_s d_s) = p_s$, we have that the random variable $S_i$ corresponds to the number of trials until success of a Bernoulli variable with success probability $p_s$.
For any $i$ we then have that $\E_s(S_i)=1/p_s=R_s d_s$ and $\E_s(S_i^2)= O(1/p_s^2)=O(R_s^2d_s^2)$.
By Chebyshev's inequality, we have that the returned estimate $\frac{1}{k} \sum_{i=1}^k S_i = (1 \pm \eps ) R_s d_s$ with constant probability.

\paragraph{Escape time algorithm}

In Step 1.~of the commute time algorithm, we run a random walk from $s$ until it either returns to $s$ or it hits $M$.
The statistic that we are interested in is the number of returns to $s$ before hitting $M$.
However, this is equivalent to the number of returns to $s$ before \emph{escaping} from $s$.
Hence, we can actually interrupt the walk after $O(\ET_s)$ steps instead of $O(\HT_s)$ steps, and this can lead to a faster algorithm.

Given an upper bound $\hat\ET_s \geq \ET_s$, the following algorithm requires $O(\hat\ET_s/\eps^3)$ steps of a random walk.

\begin{algorithm}[H]
\caption*{\bf Escape time algorithm for estimating $R_s d_s$} \label{alg:ET-R}
\begin{algorithmic}
\vspace{1mm}
\State
Let $\hat\ET_s \geq \ET_s$ be an upper bound on the escape time.\\
Repeat for $i = 1,2,\dots,k \in O(1/\eps^2)$:
\begin{enumerate}
\item
From source $s$, run $\hat\ET_s/\eps$ steps of a random walk (absorbing in $M$).
Let $\hat S_i$ be the number of visits to $s$ during the walk.
\end{enumerate}
Return $\frac{1}{k} \sum_{i=1}^k \hat S_i$.
\end{algorithmic}
\end{algorithm}

To check correctness, first we check that $\hat S_i$ is a good estimate for the actual number of returns $S_i$ before hitting $M$.
To this end, note that $\hat S_i \le S_i$ and 
\begin{align*}
\E&_s(S_i-\hat S_i) \\
&= \Pr \bigg( \esc_s < \frac{\hat\ET_s}{\eps} \bigg) \; \E\bigg(S_i-\hat S_i \;\Big|\; \esc_s < \frac{\hat\ET_s}{\eps} \bigg) + \Pr\bigg(\esc_s \ge \frac{\hat\ET_s}{\eps} \bigg) \; \E\bigg(S_i-\hat S_i \;\Big|\; \esc_s \ge \frac{\hat\ET_s}{\eps} \bigg) \\
&= 0 + \Pr\bigg(\esc_s \ge \frac{\hat\ET_s}{\eps} \bigg) \E\bigg(S_i-\hat S_i \;\Big|\; \esc_s \ge \frac{\hat\ET_s}{\eps} \bigg)
\le \eps \E(S_i).
\end{align*}
Hence, we have $(1-\eps)R_s d_s \le \E(\hat S_i) \le  R_s d_s$.
Furthermore $\hat S_i \le S_i$ implies that $\E(\hat S_i^2) \le \E(S_i^2) \in O(R_s^2 d_s^2)$.
By again applying Chebyshev's inequality, we get that the mean over $O(1/\eps^2)$ samples from $\hat S_i$ gives with constant probability an $\eps$-multiplicative approximation to $R_s d_s$, and hence $\frac{1}{k} \sum_{i=1}^k \hat S_i = (1 \pm \eps) R_s d_s$.

\subsubsection{Electric flow sampling with random walks}

A different, potential application of the escape time is related to sampling from the electric flow from $s$ to $M$, and thus generating a single step of the elfs process.
From the random walk coupling in \cref{sec:RW-coupling} it is clear that such an electric flow sample can be obtained from a random walk with a stopping rule whose expected length equals the escape time $\ET_s/2$.
However, the stopping rule in \cref{eq:stopping-prob} requires exact knowledge of the electric vertex potentials in the electric $s$-$M$ flow.
This seems hard to ensure in general, and hence the stopping rule as stated seems infeasible to actually implement.
Nevertheless, we conjecture that there does exist a random walk algorithm with a similar runtime, at least for \emph{approximately} sampling from the elfs process.

\begin{conjecture} \label{conj:RW-elfs}
For any $\eps > 0$, it is possible to sample $\eps$-approximately from the electric flow from $s$ to $M$ using $O(\ET_s \poly(1/\eps))$ random walk steps.
\end{conjecture}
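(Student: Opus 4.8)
The natural starting point is the coupling lemma (\cref{lem:coupling}), or its edge analogue (\cref{lem:edge-coupling}), which already exhibits an \emph{exact} electric flow sampler: run a random walk from $s$ and apply the stopping rule $\nu$ of \cref{eq:stopping-prob}, whose expected length is $\ET_s/2$ and which stops at a vertex (or edge) distributed exactly as one elfs step. The only obstruction is that the stopping probabilities $p_x$ depend on the global voltage ratios $v_y/v_x$ for the neighbours $y$ of each visited vertex $x$. The plan is therefore to implement an \emph{approximate} stopping rule $\tilde\nu$ that replaces the exact ratios by empirical estimates obtained from auxiliary random walks, and to argue that (i) the resulting sample is $\eps$-close in total variation to a true electric flow sample, and (ii) the total number of walk steps stays within $O(\ET_s \poly(1/\eps))$.

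First I would set up the estimation of the required ratios. Using the reciprocity $v_x = v^s_x = v^x_s$ together with \cref{eq:return-prob}, every voltage can be written as a hitting probability, e.g.\ $v_x/R_s = \Pr_x[\tau_s < \tau_M]$, so that $v_y/v_x = \Pr_y[\tau_s<\tau_M]/\Pr_x[\tau_s<\tau_M]$. The structural observation I would try to exploit is that only the $O(\ET_s)$ vertices actually visited by the stopping-rule walk ever need estimated ratios, so it suffices to produce, on the fly and for each visited vertex, estimates of the ratios to its neighbours, ideally reusing the excursions of the main walk and truncating each probe walk in the spirit of the escape-time algorithm of \cref{sec:RW-eff-res}.

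Next I would carry out the two-part analysis. Since the target is a \emph{single} electric flow sample, the relevant perturbations are those accumulated along the one stopping-rule walk, which makes an expected $O(\ET_s)$ stop-or-continue decisions. For the distributional guarantee, I would prove a stability estimate bounding the total variation between the true and perturbed stopping distributions by the expected sum of the per-decision errors, so that per-decision accuracy $\delta$ yields overall error $O(\ET_s\,\delta)$, fixing the required accuracy at $\delta = \eps/O(\ET_s)$. For the runtime, I would show that the perturbed stopping weights $\tilde w_{xx'}$ stay within a constant factor of the true $w_{xx'}$, so that the expected length of $\tilde\nu$ remains $O(\ET_s)$, and then multiply by the per-decision estimation cost.

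The main obstacle — and the reason the statement is left as a conjecture — is precisely this last multiplication. The budget $O(\ET_s \poly(1/\eps))$ is \emph{linear} in $\ET_s$, so it leaves only $\poly(1/\eps)$ work per decision, yet the accuracy $\delta \approx \eps/\ET_s$ demanded above, estimated naively, costs $\poly(\ET_s/\eps)$ per decision; moreover the hitting probabilities $\Pr_x[\tau_s<\tau_M]$ can be exponentially small and seem to require walks of length $\HT_s$ rather than $\ET_s$. Resolving this tension is the crux: one must either show that the accuracy requirement is far milder than the worst-case accumulation suggests (for instance because only the local voltage \emph{differences} entering $w_{xx'}$ matter, and these are governed by short excursions around $x$ with cost independent of the graph size), or replace estimation altogether by a potential-free scheme that directly realizes the modified graph $G'$. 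I expect this local, escape-time-bounded estimation step, rather than the coupling or the accounting, to be where the real difficulty lies.
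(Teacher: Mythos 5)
First, a point of status: the statement you are proving is \cref{conj:RW-elfs}, which the paper itself leaves as a \emph{conjecture} --- there is no proof of it in the paper to compare against. The authors explicitly note that the stopping rule of \cref{eq:stopping-prob} ``requires exact knowledge of the electric vertex potentials'' and is ``hence not feasible to implement,'' which is precisely why they state the approximate version as an open problem. So the right question is whether your proposal closes that gap, and by your own admission it does not: you set up the program (replace the exact ratios $v_y/v_x$ in the stopping probabilities by empirical estimates, prove a TV-stability bound of the form $O(\ET_s\,\delta)$ for per-decision accuracy $\delta$, and keep the walk length at $O(\ET_s)$ by showing the perturbed weights $\tilde w_{xx'}$ stay within a constant factor of $w_{xx'}$), but the decisive step --- producing each estimate to accuracy $\delta \approx \eps/\ET_s$ at a cost of only $\poly(1/\eps)$ walk steps per decision --- is exactly what you flag as unresolved. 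That is the genuine gap, and it is not a technicality: as you observe, $v_x/R_s = \Pr_x[\tau_s < \tau_M]$ can be exponentially small, naive estimation of such probabilities costs on the order of $\HT_s$ (not $\ET_s$) steps and blows up the relative error, and the budget in the conjecture leaves only $\poly(1/\eps)$ work per visited vertex. Absent a proof that short local excursions suffice (or a potential-free realization of the modified graph $G'$ of \cref{fig:mod-graph}), the argument does not go through.

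That said, your diagnosis is accurate and coincides with the paper's own reason for conjecturing rather than proving: the coupling of \cref{lem:coupling} is an exact sampler with expected length $\ET_s/2$, and the sole obstruction is the non-local information in the stopping probabilities. Two remarks on your program. The per-decision stability analysis needs more care than an additive-$\delta$ accounting suggests, since the stopping probabilities $p_x = w_{xx'}/(w_{xx'}+d_x)$ can themselves be tiny, so you likely need \emph{relative} accuracy on $w_{xx'}$, which worsens the estimation cost where $v_x$ nearly matches its neighbours' voltages. On the other hand, the edge stopping rule of \cref{lem:edge-coupling} has $p_{xy} = (v_x - v_y)^2/(v_x^2+v_y^2)$, which depends only on the single ratio $v_y/v_x$ along the currently traversed edge; if any local-estimation route works, this formulation is the more promising target, since $v_y/v_x$ is a conditional hitting-probability ratio between adjacent vertices and might plausibly be estimated from excursions whose length is governed by escape-type rather than hitting-type quantities. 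But as it stands, your text is a research plan with the crux open, not a proof of \cref{conj:RW-elfs}.
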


We anticipate that a resolution of this conjecture would have various applications.
One example is an open question from \cite{andoni2018solving} about sublinear algorithms for sampling according to the $\ell_2$-weights of the solution of a linear system (similar to quantum algorithms for linear system solving \cite{harrow2009quantum}).
Since the electric flow can be interpreted as the solution of a linear system in the edge-vertex incidence matrix, a resolution of \cref{conj:RW-elfs} implies a sublinear algorithm for sampling from the $\ell_2$-weights of this linear system.

\newpage
\part{Trees} \label{part:trees}

In this part we study the electric hitting time on trees.
In \cref{sec:examples} we already discussed the $n$-vertex line with a sink on one end, whose electric hitting time is easily shown to be $O(\log n)$.
Extending this to a line with sinks on both ends is significantly more involved.
In this section, we show that in fact a similar, logarithmic bound on the electric hitting time holds for any trees.
This is captured by the following theorem, which is the main technical contribution of our work.

\begin{theorem}[Electric hitting time on trees]
\label{thm:elfstreebound}
Let $T$ be a tree, $s$ an initial source vertex and $M \subseteq V$ a sink.
For each $m \in M$, let $w_m$ be the weight of the edge incident to $m$, and let $f_m$ be the flow on this edge in the unit electric flow from $s$ to $M$.
Then it holds that
\[
\EHT_s
\leq 2 + \sum_{m \in M} f_m \log\left(\frac{R_s w_m}{f_m^2}\right).
\] 
\end{theorem}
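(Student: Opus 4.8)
The plan is to root $T$ at the source $s$ and take the sink $M$ to be the set of leaves, each leaf $m$ joined to the tree by a single edge of weight $w_m$ (the form assumed in the statement). Since the unit flow sends total flow $1$ into $M$ and $f_m = \mu_m$ is exactly the arrival probability at $m$ by \cref{cor:arrival-distribution}, I have $\sum_m f_m = 1$, and the claimed bound rearranges into
\[
\EHT_s \le 2 + \log R_s + \sum_m f_m \log w_m + 2H(\mu).
\]
So the real target is ``$\log(\text{effective resistance}) + \text{entropy of the arrival distribution}$'', a form well suited to an inductive argument since both logarithms and entropy add up cleanly across a branching structure. I would first record the quantitative ``halving'' already visible in the excerpt: from a source at $x$ one elfs sample reduces the self-potential in expectation, $\E_x[v^s_y] = v^s_x\left(1 - \frac{1}{2}\Pr_s[\tau_x < \tau_M]\right)$, which at $x = s$ gives $R_s/2$. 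This is the rigorous version of ``repeatedly halve the distance to the sink'', and the deviations from pure halving, which pile up near the leaves, are precisely what should generate the entropy term.

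Next I would install the Schur-complement / cut-vertex tool. Every vertex $v$ of a tree is a cut vertex, so for the subtree $T_v$ hanging below the edge into $v$ I can contract everything outside $T_v$ onto $v$, and separately contract $T_v$ onto $v$, without disturbing the elfs process on the complementary region. The key consequence is \emph{locality}: whenever the current source lies outside $T_v$, the electric flow restricted to $T_v$ is a fixed rescaling of the unit flow from $v$ to the leaves of $T_v$, so both the probability of sampling an edge inside $T_v$ and the conditional landing distribution inside $T_v$ depend only on $T_v$ and on the flow $g$ entering at $v$, not on where the external source sits. This lets me define a subtree quantity $h(T_v)$, the expected number of elfs samples landing on edges of $T_v$ over the whole run, and argue that it is governed entirely by $T_v$ equipped with a single virtual ``escape'' sink at $v$ that models the flow leaving $T_v$ upward.

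I would then set up a recurrence for $h$. Each visit of the source to $T_v$ is itself an elfs process on $T_v$ whose sink is the true leaves of $T_v$ together with the escape vertex at $v$: the source is either absorbed in a genuine leaf, or escapes upward and may later re-enter at $v$, restarting a fresh visit. Hence $h(T_v) = (\text{expected steps per visit}) \times (\text{expected number of visits})$, and expected-steps-per-visit obeys its own recurrence over the children subtrees of $v$. The base case is a single edge $(v,m)$ into a leaf: from $v$ one either samples this edge, jumping to $m$ or back to $v$ with probability $\frac{1}{2}$ each, or escapes, a geometric-type count that should contribute the constant $2$ together with a $\log(1/q_m)$ term, where $q_m = (f_m^2/w_m)/R_s$ is the one-step probability of sampling edge $m$ from $s$; note $\log(1/q_m) = \log(R_s w_m/f_m^2)$. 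Collapsing the recurrence with the concavity of $\log$ (Jensen's inequality applied to the branching probabilities at each internal vertex) is what converts the per-leaf logarithmic contributions into the aggregate $\sum_m f_m \log(R_s w_m/f_m^2) = \log R_s + 2H(\mu) + \sum_m f_m \log w_m$.

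The main obstacle is the inductive step. Making ``number of steps spent in a subtree'' genuinely independent of the rest of the tree requires the cut-vertex argument to remain watertight even though the source crosses $v$ up and down many times, and the re-entry bookkeeping (visits $\times$ steps per visit) must be assembled so that the logarithms from different subtrees combine with the correct weights $f_m$, with no over- or under-counting. Extracting the \emph{exact} constants — the leading $+2$ and the factor $2$ in front of the entropy — from the base case and the Jensen step, rather than merely an $O(\log)$ bound, is the delicate part; I expect to need a carefully chosen subtree potential (roughly $\sum_{m \in T_v} f_m \log(1/q_m)$, normalised by the entering flow) for which the recurrence closes exactly, leaning on convexity to absorb the branching at each internal node.
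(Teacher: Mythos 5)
Your skeleton coincides with the paper's actual route: root the tree at the source, use the cut-vertex/Schur-complement locality of the elfs process (\cref{lem:ElfsSchur}, \cref{lem:pBA}) so that the expected number of samples in a subtree depends only on that subtree and the flow entering it, and induct bottom-up over subtrees (\cref{lem:elfstreebound}). But the quantitative core of your plan mislocates where the logarithms come from, and the aggregation step you propose goes in the wrong direction. In the paper's base case (a depth-1 subtree $T_x$) the expected number of samples is \emph{exactly} $2\sum_{m\in M_x} f_{x\to m}$ -- this is where the constant $2$ comes from, and the term $\log(R_x w_m/f_{x\to m}^2)=\log(1/f_{x\to m})$ is merely nonnegative slack, since $f_{x\to m}=R_x w_m$ at depth 1. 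No geometric count at a leaf produces your $\log(1/q_m)$; indeed your $q_m$ involves the global $R_s$, which is not available to a local base case. The logarithms are instead generated by the occupation time of each \emph{internal} edge: solving the exact $2\times 2$ linear system on the four-vertex Schur complement gives $E_b(e_2)=q\,\frac{1-p-q}{1-p-q+2pq}$, which is bounded by $q\log\bigl(\frac{(1-p)(1-q)}{pq}\bigr)$ using only $x\le\log(1+x)$ (\cref{lem:treerecurrence}); your ``geometric-type count'' intuition does not capture these internal-edge occupation times, where flow enters from both sides.

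The inductive step you flag as delicate is exactly where the plan fails as stated. Concavity of $\log$ gives $\sum_m f_m\log x_m \le \log\bigl(\sum_m f_m x_m\bigr)$, i.e.\ Jensen collapses a weighted sum of logarithms into a single logarithm -- the opposite of what you need, since the target bound \emph{is} the weighted sum (that is where the entropy lives); applying Jensen can only yield a weaker single-log bound of the form $2+\log\bigl(\sum_m R_s w_m/f_m\bigr)$, not the stated per-leaf bound. The paper closes the recursion with two \emph{exact} identities rather than an inequality: moving the source across an edge rescales the subtree occupation, $E_b(T_c)=\frac{q_c}{1-p_c}E_c(T_c)$ (equivalently $f_{c\to m}/f_{b\to m}=(1-p_c)/q_c$), and the resistance ratio satisfies $R_c/R_b=\frac{p_c(1-p_c)}{q_c(1-q_c)}$. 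Substituting these into the inductive hypothesis makes each edge's contribution $q_c\log\bigl(\frac{(1-p_c)(1-q_c)}{p_cq_c}\bigr)$ merge exactly with the child's per-leaf logarithms, telescoping along every root-to-leaf path into $f_m\log(R_s w_m/f_m^2)$. Your ``carefully chosen subtree potential, normalised by the entering flow'' gestures at this mechanism, but without these two identities (or an equivalent exactly-closing recurrence) the proposal does not reach the stated bound.
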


We mention some instructive ways of rewriting this bound.
If the tree is unweighted (and so $w_m=1$ for all $m \in M$), then
\[
\EHT_s
\leq 2+\log R_s + 2\sum_{m\in M} f_m \log\left(\frac{1}{f_m}\right)
= 2 + \log R_s + 2 H(\mu),
\]
where $H(\mu)$ is the entropy of the arrival distribution $\mu$ at the sink vertices.
Since $R_s \leq n$ and $H(\mu) \leq \log|M|$, we have $\EHT_s = O(\log n)$.

In the case of a weighted tree, we can write
\begin{align*}
\EHT_s
&\leq 2 + \log (R_s d_M) + \sum_{m\in M} f_m \log\left(\frac{1}{f_m}\right) + \sum_{m\in M} f_m\log\left(\frac{w_m}{d_Mf_m}\right) \\
&= 2 + \log(R_s d_M) + H(\mu) - D(\mu || \sigma_M )
\end{align*}
where $d_M=\sum_{m \in M} w_m$ and $D(\mu || \sigma_M ) \geq 0$ is the relative entropy between the arrival distribution $\mu$ and $\sigma_M$, the stationary distribution restricted to $M$, defined by $\sigma_M(m) = w_m/d_M$.
Using the bound $R_s < n/w_{\min}$, this is $O(\log(n w_{\max}/w_{\min}))$, with $w_{\max}$ and $w_{\min}$ the maximum and minimum edge weights, respectively.

We prove Theorem~\ref{thm:elfstreebound} using an inductive argument.
First, in \cref{sec:schur}, we introduce the notion of Schur complement for graphs and random walks, and we show that it also applies to the elfs process on trees.
This allows us to contract a part of the tree without affecting the elfs process on the uncontracted part.
Then, in \cref{sec:EHT-trees}, we use this to prove an inductive step that relates the number of steps of the elfs process in a subtree to that in smaller subtrees of that subtree.
The base case of the inductive argument corresponds to edges directly incident on the sink, which we can treat explicitly.

\section{Schur complement and cut vertices} \label{sec:schur}

The Schur complement of a graph with respect to some subset $S$ is a useful tool.
Intuitively, it marginalizes out the vertices outside $S$, in such a way that the behavior of a random walk on $S$ is still recovered.
We will use this to achieve a divide-and-conquer approach for bounding the electric hitting time on trees.

\subsection{Schur complement of matrices and graphs} \label{sec:Schur}

We first define the Schur complement for general matrices, and then zoom in on the case of Laplacian matrices and graphs.
We refer the interested reader to \cite{zhang2006schur} for the matrix case, and to \cite{kyng2017approximate} for the Laplacian case.

For a general square matrix $M$, consider a partition $(S,C)$ of its indices so that $M=\begin{pmatrix} M_{SS}& M_{SC} \\ M_{CS} & M_{CC} \end{pmatrix}$. If $M_{CC}$ is invertible, then the \emph{Schur complement}  $M^c$ of $M$ on $S$ as:
 \begin{equation}M^c=M_{SS}- M_{SC}M_{CC}^{-1}M_{CS}.
 \label{eq:Schurcomplement}
 \end{equation}
This definition ensures the following key fact about the block inverse of $M$:\footnote{If $M$ is Laplacian then we can replace the inverse with the pseudoinverse \cite[Fact 2.3.2]{kyng2017approximate}.}
 \begin{equation}(M^c)^{-1}=(M^{-1})_{SS}.
 \label{eq:Schurinverse}\end{equation}
In words, the inverse of $M$ restricted to $S$ equals the inverse of the Schur complemented matrix $M^c$.

We are interested in the case where $M = L$, with $L$ the Laplacian of some graph $G$.
We assume that $G$ is connected, which ensures that $M_{CC}$ is invertible.
In such case, it turns out that the Schur complement $L^c$ of $L$ on $S$ is again the Laplacian of a graph $G'$ with vertex set $S$.
We call this the Schur complemented graph.
From the definition of the Schur complement, we can see that the weight of an edge $(x,y)$, for $x, y \in S$, only changes if $x$ and $y$ are both in the boundary with $C$ (i.e., both $x$ and $y$ have a neighbor in $C$).\footnote{To be very precise, the Schur complement also introduces selfloops on the boundary vertices of $S$. However, these do not influence the elfs process and so we can safely ignore these.}

The property in \cref{eq:Schurinverse} plays an important role when considering electric flows on the graph.
Consider an external net current or demand $i_{ext} \in \mathbb{R}^n$ (e.g., in a unit $s$-$t$ flow we have $i_{ext} = e_s - e_t$).
Then the electric flow corresponding to this external current has voltages $v = L^+ i_{ext}$ \cite{kemeny1983finite}.
Now if $C$ contains no edges where flow enters or leaves (i.e., $i_{ext}(v) = 0$ for all $v \in C$), then by \cref{eq:Schurinverse} the voltages in $S$ are the same when routing $i_{ext}$ in $G$ or the Schur complemented graph $G'$.

Another important fact is that a random walk on $G'$ behaves the same as a random walk on $G$ restricted to vertices in $S$ (i.e., ignoring all vertices of the random walk outside $S$).
To see this, let $P=I-D^{-1}L$ be the probability transition matrix  of the random walk in $G$ and let $P'$ be the probability transition matrix of the walk when ignoring vertices in $C$. 
In the walk on $G$ from a vertex in $S$, the next vertex in $S$ can be reached either by going straight there, or by stepping to a vertex in $C$ and back to $S$, or by stepping to $C$ and staying in $C$ for one step and then back to $S$, and so on. 
Therefore
\begin{align*}
P' & = P_{SS}+P_{SC}(I+P_{CC}+P_{CC}^2+\dots)P_{CS} \\
& =P_{SS}+P_{SC}(I-P_{CC})^{-1}P_{CS} \\
& = I - D_{SS}^{-1} (L_{SS}-L_{SC}L_{CC}^{-1}L_{CS})
\end{align*} 
which is the transition matrix for the random walk on the Schur complemented graph $G'$ with Laplacian $L^c=L_{SS}-L_{SC}L_{CC}^{-1}L_{CS}$.

\subsection{Elfs process and cut vertices}
In the previous section, we saw how the Schur complement makes it easy to analyse how the random walk on the whole graph $G$ behaves when restricted to a subset of vertices $S$. We would like to prove a similar result to hold for the elfs process, but are only able to do so in the special case where there is a single boundary or \emph{cut vertex}.

A cut vertex, if removed, increases the number of connected components of the graph.
On the left of \cref{fig:schur} we depict the generic case of a cut vertex $s$.
The graph $G$ has vertex set $V_A \cup \{s\} \cup V_B$ and edge set $A \cup B$ where $A$ consists of edges of the form $(x,y)$ for $x,y  \in V_A\cup \{s\}$, and $B$ contains edges of the form $(x,y)$ for $x,y  \in V_B\cup \{s\}$.

\begin{figure}[htb]
 \centering
 \begin{subfigure}[t]{0.45\textwidth}
 \centering
 \begin{tikzpicture}
 \tikzstyle vertex=[draw=black,circle]

 \node[vertex,label=below:s] (s) at (2,0) {};

\draw[dashed] (1,0) ellipse (1 and 0.5) ;
\node at (1,0) {$A$};

\draw[dashed] (3,0) ellipse (1 and 0.5) ;
\node at (3,0) {$B$};
\end{tikzpicture}
\label{fig:cutvertex}
\end{subfigure}
\hfill
\begin{subfigure}[t]{0.45\textwidth}
\centering
\begin{tikzpicture}
\tikzstyle vertex=[draw=black,circle]
 \node[vertex,label=below:s] (a) at (2,0) {};
 \node[vertex,label=below:b] (b) at (4,0) {};
 \draw (a) -- (b);

\draw[dashed] (1,0) ellipse (1 and 0.5) ;
\node at (1,0) {$A$}; 
 \end{tikzpicture}
\label{fig:Schurcutvertex}
 \end{subfigure}
\caption{(left) A graph $G$ with cut vertex $s$. (right) The Schur complement of $G$ with vertices in $V_B\backslash b$ removed.}
\label{fig:schur}
 \end{figure}
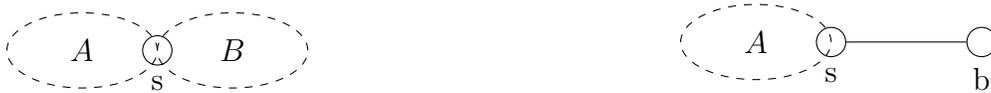

We will consider random walks and elfs processes on $G$ with a sink $M \subseteq V$ of absorbing vertices.
We can merge the sink vertices in $M\cap V_B$ to a single vertex $b$, as this will not affect the random walk or the elfs process.
Now consider the Schur complement of $G$ on $V_A \cup \{s,b\}$, which removes all vertices in $V_B\backslash b$, and let $G'$ denote the resulting graph (depicted on the right in \cref{fig:schur}).  Since the only vertices in the boundary of $V_B\backslash b$ are $b$ and $s$, there is only one new edge $(s,b)$ in $G'$, and the rest of the edges in $A$ remain the same as in $G$.\footnote{Recall that we can ignore selfloops introduced by the Schur complement.}
By the definition of the Schur complement, the weight of the new edge $(s,b)$ equals $w_{sb} = R_{s,b}^{-1}$, with $R_{s,b}$ the effective resistance between $s$ and $b$ in the original graph $G$.

We are now ready to state how the local behaviour of the elfs process in $A$ is the same for both $G$ and $G'$.
In slight contrast with the other parts of this work, we here focus on the \emph{edges} sampled in the elfs process rather than the vertices.
\begin{lemma}[Elfs schur complement]
\label{lem:ElfsSchur}
Let $G$ be a graph with cut vertex $s$ (left of \cref{fig:schur}) and sink $M$, with $M \cap V_B = \{b\}$.
Let $G'$ be the Schur complemented graph on $V_A \cup \{s,b\}$. We consider the elfs process starting at a vertex in $V_A \cup \{s\}$.
If $X=\{X_1,X_2,\dots\}$ is the sequence of edges sampled in $A$ during the elfs process on $G$, and $Y=\{Y_1,Y_2,\dots\}$ is the sequence of edges sampled in $A$ during the elfs process on $G'$, then $X$ and $Y$ have the same distribution.
\end{lemma}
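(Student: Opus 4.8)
The plan is to run an induction on the elfs process that watches only the sampled $A$-edges, reducing the whole statement to a single one-step comparison. Since the elfs process is a Markov chain in its source vertex and every $A$-edge has both endpoints in $V_A\cup\{s\}$, it suffices to prove: starting from any source $x\in V_A\cup\{s\}$, run the process until it either samples an $A$-edge (recording it and moving to a uniformly random endpoint) or terminates at $b$; then the law of the first recorded $A$-edge $e$ together with the probability of reaching $b$ before any $A$-edge coincide in $G$ and $G'$. The endpoint is a fair coin on the two ends of $e$ in both graphs, so only the edge law and the reach-$b$ probability must be matched, and iterating from the new source then gives the lemma.

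First I would record what the Schur complement buys us. Writing $M_A=M\cap V_A$, the external demand of the unit $x$-$M$ flow is supported on $V_A\cup\{s,b\}$, so by \cref{eq:Schurinverse} the voltages on $V_A\cup\{s,b\}$, and hence the flow on every $A$-edge, the effective resistance $R_x$, and the arrival value $\mu_b^x$ at $b$, all agree in $G$ and $G'$. In particular the probability of sampling a given $A$-edge directly from $x$ matches, and an energy computation (using $w_{sb}=R_{s,b}^{-1}$) shows that the total probability of sampling \emph{some} $B$-edge from $x$ equals the probability of sampling the single edge $(s,b)$ in $G'$; both equal $(\mu_b^x)^2 R_{s,b}/R_x$.

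The crux is a structural simplification driven by the cut vertex. Because all current into $V_B$ crosses $s$, for every source the flow restricted to $B$-edges is a scalar multiple of the fixed unit $s$-$b$ flow $\hat f^B$, and whenever the source lies in $V_B\cup\{s\}$ the flow restricted to $A$-edges is a scalar multiple of the fixed unit $s$-$M_A$ flow $\tilde f^A$. Two consequences follow. First, conditioned on sampling a $B$-edge, that edge and its chosen endpoint have the same law for every source in $V_A\cup\{s\}$, so the excursion into $V_B$ it triggers is source-independent. Second, while the source stays in $V_B\cup\{s\}$ any sampled $A$-edge has conditional law $\propto(\tilde f^A_e)^2/w_e$; hence the first recorded $A$-edge, conditioned on recording one at all, is distributed as $\propto(\tilde f^A_e)^2/w_e$ in both graphs. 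Putting these together, the entire one-step statement collapses to matching one scalar: the probability $T(s)$ that the elfs process started at $s$ reaches $b$ before sampling any $A$-edge. In $G'$ this is an elementary geometric recursion, since sampling $(s,b)$ sends the source to $s$ or to $b$ with equal probability, and it solves to $T(s)=\mu_b^s/(2-\mu_b^s)$.

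The main obstacle is to reproduce exactly this value $T(s)=\mu_b^s/(2-\mu_b^s)$ in $G$, where the process genuinely wanders through $V_B$ and can even sample $A$-edges while the source sits in $V_B$ (this is precisely the phenomenon behind the two-ended path / cycle example, and it is the delicate point). I would compute $T$ on $V_B\cup\{s\}$ by first-step analysis, treating it as a killed elfs process on the $B$-side in which sampling any $A$-edge kills the excursion, with kill probability $\kappa_y^2 R_{s,M_A}^A/R_y$ from source $y$ ($\kappa_y$ being the current $y$ sends into $A$), and sampling a $B$-edge either moves the source or terminates at $b$. The cleanest route I see is to replace the whole $A$-side by a single resistor of resistance $R_{s,M_A}^A$ from $s$ to an auxiliary sink: this leaves every $B$-side flow, effective resistance, and kill probability unchanged, and the resulting small harmonic system for $T$ can be solved and checked to give $\mu_b^s/(2-\mu_b^s)$. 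As a parallel approach I would try to push everything through the random-walk coupling of \cref{lem:coupling}: the elfs samples subsample a plain random walk from $x$ to $M$, whose restriction to $V_A\cup\{s,b\}$ is itself a random walk on $G'$ by the computation in \cref{sec:Schur}, leaving only the task of verifying that the source-dependent stopping rule on the $B$-side integrates to the $(s,b)$ behaviour in $G'$. Either way, the bookkeeping of $A$-edges sampled during $B$-excursions is where the real work lies.
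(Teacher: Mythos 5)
Your proposal follows the paper's proof architecture remarkably closely: the Schur-complement facts you list are the ones the paper uses; your source-independence observations for the $B$-excursion and for the law $\propto(\tilde f^A_e)^2/w_e$ of the returning $A$-edge are exactly the paper's justification that $p_{B\rightarrow A}$ is well defined; your scalar $T(s)=\mu_b^s/(2-\mu_b^s)$ is the paper's $p_{B\rightarrow A}=f_A/(1+f_A)$ in disguise (with $f_A=1-\mu_b^s$ and $T(s)=\mu_b^s(1-p_{B\rightarrow A})$); your geometric recursion in $G'$ is stage (i) of \cref{lem:pBA}; and your replacement of the $A$-side by a single resistor of resistance $R^A_{s,M_A}$ is stage (iii) --- where your justification (preservation of all $B$-side flows and of the kill probability $\kappa_y^2 R^A_{s,M_A}/R_y$) is actually spelled out more explicitly than the paper's one-line assertion.

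The genuine gap is the step you wave through with ``the resulting small harmonic system for $T$ can be solved and checked to give $\mu_b^s/(2-\mu_b^s)$''. After contracting $A$ to a single edge, the unknowns $T(y)$ still range over all of $V_B\cup\{s\}$, and the coefficients are elfs transition probabilities between $B$-side sources, which involve a fresh set of electric voltages for every source $y$; this system is not small and admits no direct solution in general --- the paper's own cautionary example (a path with both endpoints in the sink, i.e.\ $A$ a single edge and $B$ a bare path) is precisely an instance of this computation and is already flagged as fiddly. The missing idea is stage (ii) of \cref{lem:pBA}: with $A$ a single edge, one avoids all harmonic analysis on $B$ by invoking the globally known arrival probability $\Pr(Y_\rho\in A)=f_A$ (item 1 of \cref{lem:pBA}, or \cref{cor:arrival-distribution}) and expanding it as a geometric series over alternations between $A$ and $B$ in the two unknowns $p_{A\rightarrow B}$ and $p_{B\rightarrow A}$; since $A$ is now a single edge, $p_{A\rightarrow B}=(1-f_A)/(2-f_A)$ follows from the very recursion you carried out in $G'$, and solving the resulting scalar equation yields $p_{B\rightarrow A}=f_A/(1+f_A)$ with no computation on the $B$-side at all. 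Your fallback route via the random-walk coupling of \cref{lem:coupling} is likewise left unexecuted by your own admission, so as written the core of the lemma --- that the excursion-return probability depends on the $B$-side only through $f_A$ --- remains unproven, even though you have correctly reduced everything else to exactly this identity.
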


To prove Lemma~\ref{lem:ElfsSchur}, let $ \tilde X=\{\tilde X_1, \tilde X_2,\dots\}$ and $ \tilde Y=\{\tilde Y_1, \tilde Y_2,\dots\}$  denote the sequence of all edges sampled in the elfs process on $G$ and $G'$, respectively.
Then $X \subseteq \tilde X$ and $Y \subseteq \tilde Y$ are the subsequences of edges that are in $A$.

As explained in Section~\ref{sec:Schur}, the electric flow from a vertex in $V_A\cup\{s\}$ to the sink $M$ will have the same voltages in both $G$ and $G'$.
This implies that the flows on edges in $A$, and the effective resistance between pairs of vertices in $A$, are equal in $G$ and $G'$.
As a consequence, the first edge sampled by the elfs process will be equally distributed: for any $e \in A$,
\[
\Pr(\tilde X_1=e)
= \Pr(\tilde Y_1=e)
\;\text{ and }\;
\Pr(\tilde X_1 \in B)=\Pr(\tilde Y_1=(s,b)).
\]
Similarly, for any $e,e' \in A$ it holds that
\[
\Pr(\tilde X_i=e \mid \tilde X_{i-1}=e')
= \Pr(\tilde Y_i=e \mid \tilde Y_{i-1}=e')
\]
and
\[
\Pr(\tilde X_i \in B \mid \tilde X_{i-1}=e')
= \Pr(\tilde Y_i=(s,b) \mid \tilde Y_{i-1}=e').
\]

Now consider the case where $\tilde X_i \in B$ or $\tilde Y_i = (s,b)$. If we condition on the next edge $\tilde X_{i+1}$ or $\tilde Y_{i+1}$ being in $A$, then in both cases it will be effectively sampled from the electric flow from $s$ to $M \cap A$, which is the same in both $G$ and $G'$.
It remains to check that this conditioning does not change the distribution of $\tilde X_{i+1}$ or $\tilde Y_{i+1}$.
To this end, we introduce a new quantity $p_{B \rightarrow A}$ that equals the probability that, after visiting an edge in $B$, the elfs process will at some point return to an edge in $A$.
We can write this as
\begin{align*}
p_{B \rightarrow A}
\coloneqq& \, \Pr(\exists t'>t \text{ such that } Y_{t'}\in A \mid Y_t \in B) \\
=& \, \Pr(\exists t'>1 \text{ such that } Y_{t'}\in A \mid Y_1 \in B).
\end{align*}
The equality follows from the fact that the quantity $p_{B \rightarrow A}$ is independent of $t$.
Indeed, let $t_0$ be the first time such that $Y_{t_0} \in B$.
Since there is a cut vertex that disconnect $B$ from the rest of the graph, the electric flow into $B$ (and hence the distribution of $Y_{t_0}$) from a source outside $B$ is independent of the source.
Hence, $Y_{t_0}$ is distributed according to a sample from the electric flow from $s$ to $M \cap B$, and so independent of any source or time $t_0$.

We have to show that $p_{B \rightarrow A}$ is the same for both $G$ and $G'$. 
We do this in the following lemma.
Specifically, we show that $p_{B\rightarrow A}$ depends only on $f_A=\sum_{m \in M\cap A} f_m$, which is the total flow into $A$ from the unit electric $s$-$M$ flow.
As this is the same for both $G$ and $G'$, this implies the claimed statement, and hence proves \cref{lem:ElfsSchur}.

 \begin{lemma} \label{lem:pBA}
 Consider the elfs process from source $s$ to sink $M$, with $s$ a cut vertex that splits the graph into two components with edge sets $A$ and $B$. Let $f_A$ be the total flow into $A$, and let $Y_1,Y_2,\dots Y_{\rho}$ be the edges sampled in the elfs process.
 Then
 \begin{enumerate}
 \item $\Pr(Y_1 \in A)=f_A$,
 \item $p_{B \rightarrow A} =\dfrac{f_A}{1+f_A}$.
 \end{enumerate}
 \end{lemma}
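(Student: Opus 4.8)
The plan is to prove the two parts separately, with the first being a short flow-decomposition and the second resting on identifying an explicit harmonic function. Throughout, write $v_x = v_x^{(s)}$ for the voltages of the unit $s$-$M$ flow, and let $R_A$ (resp.\ $R_B$) denote the effective resistance from $s$ to $M \cap A$ in the subgraph on $A$ (resp.\ $s$ to $M \cap B$ in $B$). Since $s$ is a cut vertex, the two sides hang in parallel between $s$ and the grounded sink, so $\frac{1}{R_s} = \frac{1}{R_A} + \frac{1}{R_B}$ and hence $f_A = R_s/R_A$, $f_B = R_s/R_B$, giving in particular $R_s = f_A R_A = f_B R_B$.

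For part (1), recall the elfs process samples edge $e$ with probability $f_e^2/(R_s w_e)$, so $\Pr(Y_1 \in A) = \frac{1}{R_s}\sum_{e \in A} f_e^2/w_e$. The key observation is that, because $s$ is a cut vertex, every unit of flow reaching $A$ must enter through $s$; hence the restriction of the $s$-$M$ flow to $A$ is a potential flow from $s$ to $M \cap A$ of value $f_A$, and by uniqueness it equals $f_A$ times the unit $s$-to-$(M\cap A)$ flow inside $A$. Its energy is therefore $f_A^2 R_A$, and $\Pr(Y_1 \in A) = \frac{f_A^2 R_A}{R_s} = \frac{f_A^2 R_A}{f_A R_A} = f_A$, as claimed.

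For part (2), I would first reduce $p_{B\to A}$ to a closed absorbing system. As the text notes, the distribution of the first $B$-edge sampled from any source outside $B$ is fixed, so $p_{B\to A} = \E[\Phi(\text{endpoint of }Y_t) \mid Y_t \in B]$, where $\Phi(x)$ is the probability that the elfs process started from source $x$ ever samples an $A$-edge. The crucial structural point is that before the \emph{first} $A$-edge is sampled, every sampled edge lies in $B$ and hence every source lies in $V_B \cup \{s\}$; declaring ``sample an $A$-edge'' to be an absorbing success therefore makes $\Phi$ the unique solution of the substochastic system $\Phi(x) = a(x) + \sum_{y \in V_B \cup \{s\}} Q^B_{xy}\,\Phi(y)$ on $V_B \cup \{s\}$, with $a(x) = \Pr(\text{sample an }A\text{-edge}\mid x)$ and $Q^B$ the elfs transition restricted to $B$-edges. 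I would then verify the harmonic ansatz $\Phi(x) = \lambda\, v_x$. Using the paper's one-step potential identity $(Qv)_x = v_x\bigl(1 - \tfrac{v_x}{2R_x}\bigr)$, split by edge class into $(Q^A v)_x + (Q^B v)_x$; the ansatz reduces to checking $a(x) = \lambda\bigl(\tfrac{v_x^2}{2R_x} + (Q^A v)_x\bigr)$ for all $x \in V_B\cup\{s\}$. The $A$-contribution is computed by the same cut-vertex argument as in part (1): inside $A$ the $x$-flow voltages are $(v_s^{(x)}/R_A)$ times, and the $s$-flow voltages are $f_A$ times, the same unit $s$-to-$(M\cap A)$ voltages $\hat v^A$; a telescoping sum over $A$-edges (flow conservation, with $\hat v^A = 0$ on $M \cap A$) collapses to the energy $R_A$, and reciprocity $v_s^{(x)} = v_x$ yields the clean identities $a(x) = \frac{v_x^2}{R_A R_x}$ and $(Q^A v)_x = \frac{f_A v_x^2}{2R_x}$. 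Substituting makes every $x$-dependent factor cancel, forcing the \emph{constant} $\lambda = \frac{2}{R_A(1+f_A)}$; this is exactly what pins $\Phi$ down to a multiple of $v$.

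Finally I would average over the entry edge: conditioned on $Y_t \in B$ the edge $(p,q)$ has probability $\frac{(v_p-v_q)^2 w_{pq}}{f_B^2 R_B}$ and a uniform endpoint, so $p_{B\to A} = \frac{\lambda}{2 f_B^2 R_B}\sum_{(p,q)\in B}(v_p^2 - v_q^2)(v_p - v_q)w_{pq}$. Another telescoping sum over $B$-edges (flow conservation, sinks at potential $0$, net out-flow $f_B$ at $s$ with $v_s = R_s$) evaluates the sum to $R_s^2 f_B$, and using $R_s = f_B R_B$ and $R_s = f_A R_A$ collapses the whole expression to $p_{B\to A} = \frac{\lambda R_s}{2} = \frac{f_A}{1+f_A}$. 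I expect the \textbf{main obstacle} to be conceptual rather than computational: from an interior source of $B$ the elfs process can sample an $A$-edge \emph{directly} (the flow reaches the whole graph), so the naive ``excursion into $B$ stays in $B$'' renewal argument fails; the resolution is precisely the global harmonic ansatz $\Phi = \lambda v$, whose correctness hinges on showing that the combination $\frac{v_x^2}{2R_x} + (Q^A v)_x$ is proportional to $a(x)$ with an $x$-independent constant — which is where the cut-vertex flow-restriction and the two telescoping identities do the real work.
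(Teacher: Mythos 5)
Your proposal is correct, and for the substantive part (item 2) it takes a genuinely different route from the paper's. Item 1 is the same content in both: the paper Schur-contracts $B$ to a single edge and computes $\Pr(Y_1=(s,b))=f_{sb}$, while you compute the energy $f_A^2R_A$ of the restricted flow directly; these are interchangeable. For item 2, the paper argues in three stages: a one-line renewal equation $p_{B\rightarrow A}=\frac{1}{2}\left(f_A+(1-f_A)p_{B\rightarrow A}\right)$ when $B$ is a single edge; then, when $A$ is a single edge, it combines the arrival-distribution identity $\Pr(Y_\rho\in A)=f_A$ with a geometric-series decomposition in $p_{A\rightarrow B},p_{B\rightarrow A}$; and finally it Schur-complements $A$ to a single edge, asserting this leaves $p_{B\rightarrow A}$ invariant. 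You instead characterize $\Phi(x)=\Pr_x(\text{ever sample an $A$-edge})$ as the unique solution of a substochastic system on $V_B\cup\{s\}$ and verify the ansatz $\Phi=\lambda v$ via the paper's one-step potential identity $(Qv)_x=v_x\left(1-\frac{v_x}{2R_x}\right)$; I checked your three computational claims, $a(x)=\frac{v_x^2}{R_AR_x}$, $(Q^Av)_x=\frac{f_Av_x^2}{2R_x}$, and $\sum_{(p,q)\in B}(v_p-v_q)(v_p^2-v_q^2)w_{pq}=R_s^2f_B$, and all are right (each is an instance of summing $g_p$ against the net outflow of the restricted flow, which vanishes except at $s$ and at sinks where $v=0$), so the chain $\lambda=\frac{2}{R_A(1+f_A)}$, $p_{B\rightarrow A}=\frac{\lambda R_s}{2}=\frac{f_A}{1+f_A}$ goes through using $R_s=f_AR_A=f_BR_B$. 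What your route buys: it is self-contained, avoiding the paper's terse stage-(iii) invariance claim (which sits uncomfortably close to \cref{lem:ElfsSchur}, whose proof in turn consumes \cref{lem:pBA}), and it yields strictly more than the lemma: the explicit formula $\Phi(x)=\frac{2v_x}{R_A(1+f_A)}$ for every $x\in V_B\cup\{s\}$, which at $x=s$ gives $\frac{2f_A}{1+f_A}$ and reproduces the quantity $\Pr_a(\text{visit }T_1\text{ at some point})=\frac{2(1-p)}{2-p}$ that the paper re-derives separately inside the proof of \cref{lem:treerecurrence}. What the paper's route buys is elementarity: renewal algebra plus the already-proved arrival distribution, with no harmonic ansatz. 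In a full write-up you should make explicit two points you currently leave implicit: uniqueness for your killed system (immediate since the elfs process terminates almost surely, e.g.\ $\EHT_s\le 2\HT_s<\infty$, so $(Q^B)^n\to 0$), and the convention $\Phi\equiv 0$ on $M$, which makes the endpoint average $\frac{\lambda(v_p+v_q)}{2}$ valid when an endpoint of the entry edge is a sink — automatic here because $v$ vanishes on $M$.
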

 
 \begin{figure}[H]
 \centering
 \begin{tikzpicture}
 \tikzstyle vertex=[draw=black,circle]

 \node[vertex,label=below:s] (s) at (2,0) {};

\draw[dashed] (1,0) ellipse (1 and 0.5) ;
\node at (1,0) {$A$};

\draw[dashed] (3,0) ellipse (1 and 0.5) ;
\node at (3,0) {$B$};
 \end{tikzpicture}
 
 \end{figure}
 
 \begin{proof}
 We start by proving item 1.
 Let $G'$ denote the Schur complement of $G$ with all edges in $B$ contracted to a single edge $(s,b)$. Since the edge weights in $A$ are unchanged and the electric flow (and effective resistance) from $s$ to $M$ in $A$ is the same in both $G$ and $G'$, the probability $\Pr(Y_1 \in A)$ is also unchanged.
 Let $f_{sb}=1-f_A$ be the flow down edge $(s,b)$. Then 
 \[\Pr(Y_1 \notin A)=\Pr(Y_1=(s,b)) = \frac{f_{sb}^2}{R_s}w_{sb}=\frac{f_{sb}(v_s-v_b)}{R_s}=f_{sb},\]
 using that $v_s = R_s$ and $v_b = 0$.
 As $f_{sb} = 1 - f_A$, this implies that $\Pr(Y_1 \in A)=f_A$.
 
 We prove item 2. in three stages: (i) first we consider the special case where $B$ is a single (possibly weighted) edge $(s,b)$; (ii) then we consider the case where $A$ is a single (possibly weighted) edge $(a,s)$; (iii) finally we consider the general case.
 
\emph{Stage (i).}
Assume that the first sampled edge $Y_1 \in B$, i.e. $Y_1= (s,b)$. Then with probability $1/2$, $b$ is chosen and the process terminates. Otherwise, with probability 1/2, $s$ is chosen and a new sample from the electric flow from $s$ is taken, which either selects an edge in $A$ with probability $f_A$, or it selects the edge $(s,b)$ with probability $1-f_A$.
If the edge $(s,b)$ is sampled again then we are back to the initial situation.
Therefore 
\[p_{B\rightarrow A} = \frac{1}{2} (f_A+(1-f_A)p_{B\rightarrow A})\]
which can be rearranged to $p_{B\rightarrow A} = \frac{f_A}{1+f_A}$. 
 
\emph{Stage (ii).}
Now we consider the case where $A$ is a single edge, but $B$ is arbitrary. We will express $\Pr(Y_{\rho}\in A)$, which is the probability of being absorbed in $M \cup A$, in terms of $p_{B \rightarrow A}$ and other known parameters. Since we know  $\Pr(Y_{\rho}\in A)=f_A$, we can then find $p_{B \rightarrow A}$:
\begin{align}
\Pr(Y_{\rho}\in A) &= \Pr(Y_1 \in A) \Pr(Y_{\rho} \in A | Y_1 \in A) + \Pr(Y_1 \in B) \Pr(Y_{\rho} \in A | Y_1 \in B) \nonumber \\
&= f_A \Pr(Y_{\rho} \in A | Y_1 \in A) + (1-f_A) p_{B\rightarrow A} \Pr(Y_\rho \in A | Y_1 \in A) \nonumber \\
&=\left[ f_A+(1-f_A)p_{B \rightarrow A}\right] \Pr(Y_\rho \in A | Y_1 \in A)
\label{eq:Pr_endinA}
\end{align}
where we have substituted in $\Pr(Y_1 \in A)=f_A$, $\Pr(Y_1 \in B) = 1-f_A$ and $\Pr(Y_{\rho} \in A | Y_1 \in B) = p_{B\rightarrow A} \Pr(Y_\rho \in A | Y_1 \in A)$.

Now $\Pr(Y_\rho \in A |Y_1 \in A)$, the probability that the elfs process ends in $A$ given that the first edge sampled is in $A$, is equal to the probability that all samples are from $A$, plus the probability that at some point we switch to $B$ and back and then stay in $A$, plus the probability that we switch to $B$ and back twice and then stay in $A$, and so on. Therefore 
\begin{align*}
\Pr(Y_{\rho}\in A |  Y_1 \in A)
&= \Pr(Y_i \in A \: \forall i | Y_1 \in A)\sum_{k=0}^\infty(p_{B\rightarrow A}p_{A \rightarrow B})^k \\
&= (1-p_{A\rightarrow B})\frac{1}{1-p_{B \rightarrow A}p_{A \rightarrow B}}
\end{align*}
where we have used the geometric sum and $\Pr(Y_i \in A \: \forall i | Y_1 \in A)= 1-p_{A\rightarrow B}$ since this is the probability of staying in $A$ forever.

Substituting into \cref{eq:Pr_endinA} gives 
\[f_A= \left[ f_A+(1-f_A)p_{B \rightarrow A}\right] (1-p_{A\rightarrow B})\frac{1}{1-p_{B \rightarrow A}p_{A \rightarrow B}} \]
Now we use the fact that $A$ is a single edge, and so by (i),  $p_{A \rightarrow B} =\frac{f_B}{1+f_B}=\frac{1-f_A}{2-f_A}$. Substituting this in leads to the claimed result.

\emph{Stage (iii).}
Stages (i) and (ii) imply the result for the general case, as we can  take the Schur complement on $A$ so that it is a single edge.
This leaves the quantity $p_{B \rightarrow A}$ invariant.
 \end{proof}

\section{Electric hitting time on trees} \label{sec:EHT-trees}
We use the ideas of the previous section, and \cref{lem:ElfsSchur} in particular, to prove Theorem~\ref{thm:elfstreebound} and bound the electric hitting time $\EHT_s$ on trees.
On a tree, every vertex is a cut vertex, and this makes Lemma \ref{lem:ElfsSchur} particularly easy to use.

We actually prove a slightly stronger result in Lemma~\ref{lem:elfstreebound} below.
Before we state the lemma, we need to introduce some notation.
We consider a tree $T$ with a fixed sink $M$.
For a vertex $x$ and subset of edges $S$ of $T$, we let $E_x(S)$ denote the expected number of edges sampled from $S$ in the elfs process starting at vertex $x$.
We also use the shorthand $f_{x \rightarrow y}$ for $f_{p(y),y}$ with $p(y)$ the parent of $y$.
I.e., $f_{x \rightarrow y}$ denotes the flow from $p(y)$ to $y$ in the unit electric flow from $x$ to $M$.

\begin{lemma}
\label{lem:elfstreebound}
Fix a root $r$ of a tree $T$.
For a vertex $x$, let $T_x$ be the subtree down from $x$, and let $M_x = M \cap T_x$ be the set of sink vertices below $x$. 
Then
\[E_x(T_x)\le\sum_{m \in M_x} f_{x \rightarrow m} \left( 2+\log \left(\frac{R_x w_m}{f_{x \rightarrow m}^2}\right)\right) \]
\end{lemma}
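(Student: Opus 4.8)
The plan is to prove the bound by strong induction on the number of edges of the subtree $T_x$, summing a contribution over the branches hanging off $x$. Since $x$ is a cut vertex in the tree, its children $c_1,\dots,c_k$ split $T_x$ into disjoint branches $B_i = \{(x,c_i)\}\cup T_{c_i}$, and $E_x(T_x)=\sum_i E_x(B_i)$. I would prove the bound branch by branch: for each $i$, show that $E_x(B_i)\le \sum_{m\in M_{c_i}} f_{x\to m}\big(2+\log(R_x w_m/f_{x\to m}^2)\big)$, and then sum over $i$. The base case is a branch that is a single edge $(x,m)$ into a sink: here the elfs process samples this edge a geometric number of times with success probability $\tfrac12$, so $E_x(B_i)=2$, matching the bound exactly when $m$ is the only sink (and with slack otherwise, where the logarithmic term is positive).

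For the inductive step I would first use the elfs Schur complement (\cref{lem:ElfsSchur}) to reduce the analysis of a single branch to a self-contained tree. Contracting all the other branches together with the part of the tree above $x$ onto the cut vertex $x$ produces one ``escape edge'' $(x,a)$ to an auxiliary sink $a$, carrying the residual flow $1-g_i$, where $g_i=f_{x\to c_i}$ is the flow into the branch; by \cref{lem:ElfsSchur} the edges sampled inside $B_i$ have the same distribution in $G$ and in this reduced tree $G_i$. On $G_i$ I would split $E_x(B_i)=e_i+E_x(T_{c_i})$, where $e_i$ is the expected number of samples of the top edge $(x,c_i)$ and $E_x(T_{c_i})$ counts the samples strictly inside $T_{c_i}$. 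Applying the cut-vertex structure at $c_i$ together with the return probability $p_{B\to A}=f_A/(1+f_A)$ from \cref{lem:pBA}, each sojourn in $T_{c_i}$ is a rescaled copy of the process started at $c_i$, which I expect to yield the clean renewal identity $E_x(T_{c_i})=g_i\,E_{c_i}(T_{c_i})$.

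Plugging the induction hypothesis for $E_{c_i}(T_{c_i})$ into this identity and using $f_{x\to m}=g_i f^{c_i}_m$ for $m\in M_{c_i}$, the factor-of-two terms match exactly, and the residual discrepancy between the target and $g_i E_{c_i}(T_{c_i})$ is precisely $g_i\log\!\big(R_x/(R_{c_i}g_i^2)\big)$ per branch. It therefore suffices to bound the top-edge cost by $e_i\le g_i\log\!\big(R_x/(R_{c_i}g_i^2)\big)$. Here I would exploit the series/parallel identity on trees $R_x=g_i\,(w_{xc_i}^{-1}+R_{c_i})$, which rewrites the target as $e_i\le g_i\big(\log(1/g_i)+\log(1+1/(w_{xc_i}R_{c_i}))\big)$, both terms being nonnegative; the $g_i\log(1/g_i)$ piece is exactly the entropy contribution that appears in the final form of \cref{thm:elfstreebound}. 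Summing the per-branch bounds over $i$ closes the induction, and the logarithms telescope up the levels of the tree exactly as in the path example, where each top edge contributes only $\log(R_x/R_{c_i})\approx 1/n$.

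The main obstacle is the inductive step rather than the algebra: establishing the renewal identity $E_x(T_{c_i})=g_i E_{c_i}(T_{c_i})$ and, above all, computing and bounding the top-edge cost $e_i$. This is where the logarithm is actually generated, so a crude geometric estimate of $e_i$ will not suffice; I expect to need the exact renewal counting of how often the escape edge is taken and the walk re-enters the branch (via the return probabilities of \cref{lem:pBA}), followed by a concavity/$\log$ inequality to match $g_i\log(R_x/(R_{c_i}g_i^2))$. Care is also needed with branches whose child is itself a sink (empty $T_{c_i}$), which must be folded into the base case, and with ensuring the Schur reduction to $G_i$ remains valid even though the process starts in the contracted region above $x$.
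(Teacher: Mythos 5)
Your overall architecture mirrors the paper's (Schur-complement contraction of everything outside a branch, the decomposition $E_x(T_x)=\sum_i\big(e_i+E_x(T_{c_i})\big)$, induction, and a $\log$-inequality for the connecting edge), but the quantitative identity at the heart of your inductive step is false. You posit the renewal identity $E_x(T_{c_i})=g_i\,E_{c_i}(T_{c_i})$ with $g_i=f_{x\to c_i}$; the correct relation, which the paper proves as item 1 of \cref{lem:treerecurrence}, is $E_x(T_{c_i})=\frac{q_c}{1-p_c}E_{c_i}(T_{c_i})$, where $q_c=g_i$ and $p_c$ is the flow into $x$ in the unit electric flow \emph{from} $c_i$. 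The extra factor $1/(1-p_c)$ is not cosmetic: a concrete counterexample is the unit-weight path $m_1-x-c-m_2$ with $M=\{m_1,m_2\}$ and $T_c$ the single edge $(c,m_2)$. Direct computation gives $E_x(T_c)=2/3$ and $E_c(T_c)=4/3$, so the true ratio is $1/2=\frac{q}{1-p}$ (here $p=q=1/3$), whereas your identity predicts $g_i E_c(T_c)=\frac{1}{3}\cdot\frac{4}{3}=\frac{4}{9}$. Reinterpreting $E_{c_i}(T_{c_i})$ as the standalone process on the subtree does not rescue it either: on the path $m_1-x-c-d-m_2$ one finds $E_x(T_c)=3/4$ while $g_i$ times the standalone value is $\frac{1}{4}\cdot\frac{8}{3}=\frac{2}{3}$. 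Since $1-p_c\le 1$, your identity systematically \emph{underestimates} $E_x(T_{c_i})$, so an upper bound derived from it does not bound the true quantity, and the proof is unsound. The companion factorization $f_{x\to m}=g_i f^{c_i}_m$ fails for the same reason when $f^{c_i}_m$ is the whole-tree flow (the correct substitution is $\frac{f_{c\to m}}{f_{x\to m}}=\frac{1-p_c}{q_c}$), and your series identity $R_x=g_i(w_{xc_i}^{-1}+R_{c_i})$ holds only with the subtree-only resistance, not the whole-tree $R_{c_i}$ that the inductive hypothesis refers to; the paper instead uses $\frac{R_c}{R_b}=\frac{p_c(1-p_c)}{q_c(1-q_c)}$, which correctly relates whole-tree resistances.

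Two further points. First, your base case is miscomputed: for a single sink edge $(x,m)$ inside a larger tree, the process does not resample that edge geometrically, since after returning to $x$ it may sample elsewhere and never come back; the correct value is $E_x(B_i)=2f_{x\to m}$ (twice the arrival probability at $m$), not $2$. Indeed, with your value the claimed per-branch bound $f_{x\to m}\big(2+\log(1/f_{x\to m})\big)$ would already be violated at $f_{x\to m}=1/2$, where it equals $3/2<2$. Second, your intuition that the top-edge count $e_i$ ``generates the logarithm'' and requires a delicate renewal computation is slightly off: the paper's exact formula $E_b(e_2)=q\,\frac{1-p-q}{1-p-q+2pq}$ shows each edge is sampled at most once in expectation; the logarithmic bound $q\log\frac{(1-p)(1-q)}{pq}$ is adopted only because it telescopes up the tree. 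What your proposal is genuinely missing is the two-sided linear system (expected samples of the middle edge starting from either endpoint of the Schur-reduced three-edge path) that yields both the exact edge count and the correct shift factor $\frac{q}{1-p}$; without that computation, or a correct substitute for it, the induction cannot close.
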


Theorem~\ref{thm:elfstreebound} immediately follows by setting $x = r$ and $r = s$.
To prove Lemma~\ref{lem:elfstreebound}, we first show how the expected time in a subtree $E_x(T_x)$ varies as the starting vertex $x$ changes (\cref{sec:recurrence}).
Then we use a proof by induction on the depth of $T_x$ (\cref{sec:lem7}).

\subsection{Recurrence relation} \label{sec:recurrence}

Consider an edge $e_2$ in the tree $T$ with endpoints $a,b \notin M$. Let the subtree $T_1$ (respectively $T_3$) be the rest of the tree $T$ connected to $a$ (respectively $b$). So $T$ looks like this:

 \begin{figure}[H]
 \centering
 \begin{tikzpicture}
 \tikzstyle vertex=[draw=black,circle]

 \node[vertex,label=below:a] (a) at (2,0) {};
 \node[vertex,label=below:b] (b) at (4,0) {};

\draw[dashed] (1,0) ellipse (1 and 0.5) ;
\node at (1,0) {$T_1$};
\draw[dashed] (5,0) ellipse (1 and 0.5) ;
\node at (5,0) {$T_3$};
 \draw (a) to node[above] {$e_2$} (b);
 
 \end{tikzpicture}
 
 \end{figure}

Let $p=f^{a \rightarrow M}_{ab}$ be the flow from $a$ to $b$ in the electric flow from $a$ to $M$ and $q=f^{b\rightarrow M}_{ba}$ be the flow from $b$ to $a$ in the electric flow from $b$ to $M$.
Then the following lemma holds.

\begin{lemma}
\label{lem:treerecurrence}
\begin{enumerate}
    \item $E_b(T_1)=\frac{q}{1-p}E_a(T_1)$ 
    \item $E_b(e_2)=q\frac{1-p-q}{1-p-q+2pq} \le q \log \left(\frac{(1-p)(1-q)}{pq}\right)$.
\end{enumerate}
\end{lemma}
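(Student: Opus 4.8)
The plan is to treat the two claims as a pair of electric-flow computations on the three-piece tree $T_1 - a - e_2 - b - T_3$, using the probabilities $p_{A\to B}$ and $p_{B\to A}$ from \cref{lem:pBA} together with the invariance guaranteed by \cref{lem:ElfsSchur}. The first step is to set up the relevant effective resistances. Writing $R_1$ for the effective resistance of $T_1$ seen from $a$ (to the sink inside $T_1$) and $R_3$ similarly for $T_3$ from $b$, and $r_2 = 1/w_{e_2}$ for the resistance of the middle edge, the flows $p = f^{a\to M}_{ab}$ and $q = f^{b\to M}_{ba}$ are determined by series/parallel combinations. Concretely, $p$ is the fraction of a unit current injected at $a$ that crosses $e_2$ toward $b$, so $p = R_1/(R_1 + r_2 + R_3)$, and symmetrically $q = R_3/(R_1 + r_2 + R_3)$; I would verify these and record the useful identities $1-p = (r_2+R_3)/(R_1+r_2+R_3)$ and $1-q = (R_1+r_2)/(R_1+r_2+R_3)$, which will make the final inequality transparent.

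For item 1, the idea is that $E_b(T_1)$ counts edges sampled inside $T_1$ when the source starts at $b$, whereas $E_a(T_1)$ counts them when the source starts at $a$. By \cref{lem:ElfsSchur} I can Schur-complement $T_3$ down to a single edge $(b,b')$, so that the dynamics inside $T_1 \cup \{e_2\}$ are unchanged. From $b$, the process reaches $A := T_1$ only after first crossing $e_2$ into the $a$-side; the probability of ever visiting $A$ is governed by $p_{B\to A}$-type quantities, and once the process enters $A$ with source at $a$ it generates exactly $E_a(T_1)$ expected samples before the next excursion structure repeats. I would make this precise by conditioning on whether the first sampled edge from $b$ lies in $A$ (the $T_1$-side together with $e_2$) or in $B := T_3$, and sum the resulting geometric series over excursions; the ratio of return probabilities collapses to the factor $q/(1-p)$. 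The cleanest route is to apply item 1 of \cref{lem:pBA} with the roles of $A$ and $B$ chosen appropriately from each endpoint, giving $\Pr(Y_1 \in T_1 \mid \text{start } b)$ versus $\Pr(Y_1 \in T_1 \mid \text{start } a)$, and to note that conditioned on being in $A$ the two processes are identically distributed (again by the Schur invariance), so the expected counts differ only by the probability of ever reaching $A$.

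For item 2, I would compute $E_b(e_2)$, the expected number of times the middle edge $e_2$ itself is sampled when starting at $b$. This is a renewal-type calculation: each time the process is positioned to sample $e_2$, it may bounce back and forth between the $a$-side and the $b$-side, and $e_2$ is resampled on each crossing. Using the crossing probabilities $p_{A\to B}$ and $p_{B\to A}$ from \cref{lem:pBA} — where from the $a$-side $A$ has total flow $f_A$ related to $p$ and from the $b$-side the relevant flow relates to $q$ — the expected number of samples of $e_2$ is a geometric sum whose closed form I expect to simplify to $q\,\frac{1-p-q}{1-p-q+2pq}$. The final inequality $\frac{1-p-q}{1-p-q+2pq} \le \log\!\left(\frac{(1-p)(1-q)}{pq}\right)$ is then a purely analytic fact about $p,q \in (0,1)$ with $p+q < 1$; I would prove it by substituting the resistance expressions above (under which the logarithm becomes $\log\frac{(r_2+R_3)(R_1+r_2)}{R_1 R_3}$ up to the normalizing factor) or, more robustly, by the elementary bound $\log(1+u) \ge u/(1+u)$ applied after rewriting the log's argument as a product of terms of the form $1 + (\text{something positive})$.

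The main obstacle I anticipate is item 2: bookkeeping the excursion structure correctly so that the geometric series yields exactly the stated rational function, and then verifying that the inequality to the logarithm holds uniformly on the admissible region $\{p,q>0,\ p+q<1\}$. The logarithmic bound is the crucial ingredient that makes the telescoping in \cref{lem:elfstreebound} produce only a $\log(R_s w_m/f_m^2)$ cost per leaf rather than a polynomial cost, so I would take care to confirm it is tight enough (equality as $p,q\to 0$) and to handle the boundary behavior as $p+q \to 1$, where both sides blow up consistently.
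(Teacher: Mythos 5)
Your proposal is correct and takes essentially the same route as the paper's proof: Schur-complement both subtrees down to the four-vertex path with resistances $r_1=R_1,\,r_2,\,r_3=R_3$ (giving exactly the paper's $p=r_1/(r_1+r_2+r_3)$, $q=r_3/(r_1+r_2+r_3)$), extract the expected counts by first-step/excursion analysis built on \cref{lem:pBA} and the source-independence of the conditional count in $T_1$, arrive at the same closed forms, and finish with an elementary logarithm bound, all of which the paper does via a one-variable equation for item 1, a $2\times 2$ linear system for item 2, and the bound $x\le\log_2(1+x)$ on $[0,1]$. Your two variations are cosmetic and both close correctly --- the factorization $E_x(T_1)=\Pr_x(\text{reach } T_1)\,E_{T_1}(T_1)$ applied at both endpoints yields the ratio $q/(1-p)$ (one can check $\Pr_a=\tfrac{2(1-p)}{2-p}$ and $\Pr_b=\tfrac{2q}{2-p}$), and the chain $\tfrac{u}{1+u}\le\ln(1+u)$ with $u=\tfrac{1-p-q}{2pq}$ gives the stated inequality because $pq\le(1-p)(1-q)$ follows from $1-p-q\ge 0$ --- though note that item 1 of \cref{lem:pBA} gives first-sample probabilities only when the source is the cut vertex, so $\Pr_b(Y_1\in T_1)=\tfrac{pq}{1-q}$ must be read off from your resistance identities (the paper's table in \cref{eq:exps}) rather than from \cref{lem:pBA} directly.
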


It will prove useful to consider the Schur complement of $T$ on $\{a,b,M\}$.
Again, without affecting the dynamics, we can merge the sinks in $T_1$ to a single sink $m_a$, and those in $T_2$ to a single sink $m_b$, to obtain a Schur complemented graph $T^c$ that looks like this:
 \begin{figure}[h!]
 \centering
 \begin{tikzpicture}
 \tikzstyle vertex=[draw=black,circle]

 \node[vertex, label=below:$m_a$] (left) at (0,0) {};
 \node[vertex, label=below:$a$] (a) at (2,0) {};
 \node[vertex, label=below:$b$] (b) at (4,0) {};
 \node[vertex, label=below:$m_b$] (right) at (6,0) {};

 \draw (left) to node[above] {$e_1$} (a);
 \draw (a) to node[above] {$e_2$} (b);
 \draw (b) to node[above] {$e_3$} (right);
 
 \end{tikzpicture}
 \end{figure}
 
\noindent
Let $r_i$ denote the resistance of edge $e_i$ in $T^c$.
Then we can calculate the effective resistances $R_a$ (resp.~$R_b)$ between $a$ (resp.~$b$) and the sinks as
\[
R_a
= \frac{r_1(r_2+r_3)}{r_1+r_2+r_3}
\quad \text{ and } \quad
R_b
= \frac{r_3(r_1+r_2)}{r_1+r_2+r_3}.
\]
From this we can then calculate the flows $p=f^{a \rightarrow M}_{ab}$ and $q=f^{b\rightarrow M}_{ba}$ as 
\[
p
= \frac{R_a}{r_2+r_3}=\frac{r_1}{r_1+r_2+r_3},
\quad \text{ and } \quad
q
= \frac{R_b}{r_1+r_2}=\frac{r_3}{r_1+r_2+r_3}.
\]
 
As a final quantity of interest, let $x$ be a vertex and $S$ a subset of the edges of $T$, and let $P_x(S)$ be the probability that a sample of the flow from $x$ to $M$ is in $S$.
Then we can calculate
\begin{equation}
\label{eq:exps}
\begin{alignedat}{3}
  &P_a(T_1) = \frac{(1-p)^2 r_1}{R_a}=1-p
  \qquad\qquad\qquad
  &&P_b(T_1) = \frac{q^2 r_1}{R_a}=\frac{qp}{1-q} \\
  &P_a(e_2) = \frac{p^2r_2}{R_a}=\frac{p(1-p-q)}{1-p} 
  &&P_b(e_2) = \frac{q^2r_2}{R_b}=\frac{q(1-p-q)}{1-q}\\
  &P_a(T_3) = \frac{p^2r_3}{R_a}=\frac{pq}{1-p}
  &&P_b(T_3) = \frac{(1-q)^2r_3}{R_b}=1-q.
\end{alignedat}
\end{equation}
We now turn to the proof of \cref{lem:treerecurrence}.

\begin{proof}[{\cref{lem:treerecurrence}, item 1.}]
By Lemma~\ref{lem:ElfsSchur} we can replace the edges of $T_3$ with a single edge $e_3$:

 \begin{figure}[H]
 \centering
 \begin{tikzpicture}
 \tikzstyle vertex=[draw=black,circle]

 \node[vertex,label=below:a] (a) at (2,0) {};
 \node[vertex,label=below:b] (b) at (4,0) {};
 \node[vertex] (right) at (6,0) {};

\draw[dashed] (1,0) ellipse (1 and 0.5) ;
\node at (1,0) {$T_1$};

 \draw (a) to node[above] {$e_2$} (b);
 \draw (b) to node[above] {$e_3$} (right);

 \end{tikzpicture}
 
 \end{figure}

 Let $E_{T_1}(T_1)$ be the expected number of visits of the elfs process to $T_1$ conditioned on the first edge sample being in $T_1$ (note that this is independent of the initial source vertex).
 By considering the three possibilities of the first sampled edge from $b$ (i.e., $T_1$, $e_2$ or $e_3$), we get the following expression for $E_b(T_1)$:
 \begin{equation}
 E_b(T_1)
 =  P_b(T_1)E_{a|T_1}(T_1) + P_b(e_2)\frac{E_a(T_1)+E_b(T_1)}{2}
    + P_b(e_3)\frac{E_b(T_1)}{2}. 
 \label{eq:E_b(T_1)}
 \end{equation}
 
We first re-express $E_{T_1}(T_1)$ in terms of $p$ and $E_a(T_1)$:
\[
E_a(T_1)
= \Pr_a(\text{visit }T_1 \text{ at some point}) \, E_{T_1}(T_1)
= \frac{2(1-p)}{2-p} \, E_{T_1}(T_1).
\]
The last inequality follows from
\begin{align*}
\Pr_a(\text{visit }T_1 \text{ at some point})
&= \Pr_a(Y_1 \in A) + p_{B \rightarrow A} \Pr_a(Y_1 \in B) \\
&= 1-p + \frac{1-p}{2-p} p
= \frac{2(1-p)}{2-p},
\end{align*}
using that $p=f^{a \rightarrow M}_{ab} = \Pr_a(Y_1 \in B)$ and $p_{B \rightarrow A} = f^{a \rightarrow M}_{ab}/(1+f^{a \rightarrow M}_{ab}) = (1-p)/(2-p)$ (\cref{lem:pBA}).

Substituting the expression for $E_a(T_1)$ into \cref{eq:E_b(T_1)}, and rearranging, gives
\[\left(1-\frac{P_b(e_2)}{2} -\frac{P_b(e_3)}{2}\right) E_b(T_1) = \left( P_b(T_1) \frac{1+f}{2f}+P_b(e_2)\right)E_a(T_1)\]
\[ \Leftrightarrow \qquad \frac{1-q(1-p)}{2(1-q)}E_b(T_1)=\frac{q}{1-p}\frac{1-q(1-p)}{2(1-q)}E_a(T_1)\]

Since $0<p,q<1$, we can divide through by $\frac{1-q(1-p)}{2(1-q)}$ to get the claimed result.
\end{proof}

\begin{proof}[{\cref{lem:treerecurrence}, item 2.}]
By Lemma~\ref{lem:ElfsSchur} we can consider the aforementioned Schur complement of $T$ on $\{a,b,m_a,m_b\}$, with sink $M = \{m_a,m_b\}$.
The resistance of the new edge $e_1$ (resp.~$e_2$) is the effective resistance in $T$ between $a$ and $M\cap T_1$ (resp.~between $b$ and $M \cap T_2$).
  
  \begin{figure}[h!]
 \centering
 \begin{tikzpicture}
 \tikzstyle vertex=[draw=black,circle]

 \node[vertex, label=below:$m_a$] (left) at (0,0) {};
 \node[vertex, label=below:$a$] (a) at (2,0) {};
 \node[vertex, label=below:$b$] (b) at (4,0) {};
 \node[vertex, label=below:$m_b$] (right) at (6,0) {};

 \draw (left) to node[above] {$e_1$} (a);
 \draw (a) to node[above] {$e_2$} (b);
 \draw (b) to node[above] {$e_3$} (right);
 
 \end{tikzpicture}
 \end{figure}
 \noindent
 
For the elfs process from $b$ to $M$, we again distinguish the three possibilities of sampling the first edge ($e_1$, $e_2$ or $e_3$).
This leads to the following expression for $E_b(e_2)$:
\[
E_b(e_2)
= P_b(e_1) \frac{E_a(e_2)}{2}
 + P_b(e_2) \left(1 + \frac{E_a(e_2)+E_b(e_2)}{2}\right) + P_b(e_3)\frac{E_b(e_2)}{2}.
\]
Similarly, when starting from $a$ we have:
\[
E_a(e_2)
= P_a(e_1)\frac{E_a(e_2)}{2}+P_a(e_2)\left(1+\frac{E_a(e_2)+E_b(e_2)}{2}\right) +P_a(e_3)\frac{E_b(e_2)}{2}.
\]

This yields a linear system in $E_a(e_2)$ and $E_b(e_2)$.
Solving the system, and substituting the expressions in \cref{eq:exps}, we get
\[
E_a(e_2)
= p\frac{1-p-q}{1-p-q+2pq}
\quad\text{ and }\quad
E_b(e_2)
= q\frac{1-p-q}{1-p-q+2pq},
\]
which proves the equality of item 2.
 
For the inequality, observe that 
\[
0
\le\frac{1-p-q}{1-p-q+2pq}
\le 1.
\]
This holds because (i) there is more flow to $m_a$ in the electric flow from $a$ than in the electric flow from $b$, and so $1-p\ge q$ (or, equivalently, $1-p-q \ge 0$), and (ii) both $p$ and $q$ cannot be negative so $2pq\ge 0$.
We can therefore use the inequality\footnote{This holds for base 2 logarithm.} $x \le \log(x+1)$ which holds for $0\le x \le 1$ to get 
\begin{equation}\frac{1-p-q}{1-p-q+2pq} \le \log \left(\frac{2(1-p)(1-q)}{1-p-q+2pq}\right).
\end{equation}
Now we use the fact that $ 1-p-q \ge 0$ again in the denominator to get 
\begin{equation}\frac{2(1-p)(1-q)}{1-p-q+2pq} \le  \frac{(1-p)(1-q)}{pq}.
\end{equation}
Combining these two equations and using the monotonicity of the logarithm function gives the desired inequality.
\end{proof}

\subsection{Subtree lemma} \label{sec:lem7}
 
We can now finalize the proof of \cref{lem:elfstreebound}.

\begin{proof}[Proof of \cref{lem:elfstreebound}]
The proof works by induction on the depth of $T_x$. 

If $T_x$ has depth 1, then after each visit to an edge in $T_x$, the next vertex is either a sink vertex with probability $\frac{1}{2}$ or it returns to $x$. The probability of ending at a sink vertex in $M_x$ is $\sum_{m \in M_x} f_{x \rightarrow m}$, and so the expected number of visits is $2\sum_{m \in M_x} f_{x \rightarrow m}$.
Furthermore, $f_{x \rightarrow m}=R_x w_m$, so $\log(R_x w_m /f_{x \rightarrow m}^2)=\log(1/f_{x \rightarrow m}) \ge 0$.

Now assume by induction that the lemma holds for all children $c$ of a vertex $b$. Denote the set of children of $b$ with $C(b)$. Then by Lemma~\ref{lem:treerecurrence}, 
 \begin{equation}E_b(T_b)= \sum_{c \in C(b)} E_b(bc)+E_b(T_c)
 \le \sum_{c \in C(b)} q_c\log\left(\frac{(1-p_c)(1-q_c)}{p_c q_c}\right) + \frac{q_c}{1-p_c}E_c(T_c)
 \label{eq:Ebinduction}
 \end{equation}
 where $p_c$ is the flow into $b$ in the electric flow starting at $c$, and $q_c$ is the flow into $c$ in the electric flow starting at $b$.

Now let $M_c = M \cap T_c$.
Note that $1-p_c = \sum_{m \in M_c} f_{c \rightarrow m}$, since this is the total flow to elements in $M_c$ in the flow starting at $c$. 
Multiplying the first term of the above \cref{eq:Ebinduction} by $1=\sum_{m \in M_c} f_{c \rightarrow m}/(1-p_c)$, and substituting in $E_c(T_c) \leq \sum_{m \in M_c}f_{c \rightarrow m}\left(2+\log(R_c/f_{c\rightarrow m}^2)\right)$ for the second term by the inductive hypothesis, we get:
 
\begin{align*}
E_b(T_b) & \le \sum_{c \in C(b)} \frac{q_c}{1-p_c}\sum_{m \in M_c} f_{c \rightarrow m} \left( \log\left(\frac{(1-p_c)(1-q_c)}{p_c q_c}\right) + 2+ \log\left(\frac{R_c w_m}{f_{c\rightarrow m}^2}\right) \right)
\nonumber \\
& =\sum_{c \in C(b)} \sum_{m \in M_c} f_{b \rightarrow m} \left(2+\log \left( \frac{q_c(1-q_c)}{p_c (1-p_c)} \frac{R_c w_m}{f_{b\rightarrow m}^2}\right)\right)  \\
& =\sum_{m \in M_b} f_{b \rightarrow m} \left( 2+\log \left(\frac{R_b w_m}{f_{b \rightarrow m}^2}\right)\right)
\end{align*}
where in the first equality we substituted in $\frac{f_{c \rightarrow m}}{f_{b \rightarrow m}}=\frac{1-p_c}{q_c}$, and in the final equality we susbstituted $\frac{R_c}{R_b}= \frac{p_c(1-p_c)}{q_c(1-q_c)}$.
\end{proof}

\newpage
\part{Quantum walks}

\section{Preliminaries: linear algebra and electric networks}

The previous part of this work heavily relied on the connection between electric networks and random walks.
To establish the connection between electric networks and \emph{quantum} walks, we first recollect the useful connection between electric networks and linear algebra.
A more elaborate exposition can be found in \cite{lyons2017probability}.

To a given graph $G = (V,E,w)$ we associate two vector spaces called the node space and the edge space, defined respectively by\footnote{Throughout this part we use Dirac notation: $\ket{\psi}$ is a vector, $\bra{\psi}$ is its conjugate transpose, and $\braket{\phi|\psi}$ is the inner product between $\ket{\phi}$ and $\ket{\psi}$.}
\[
\ell(V) = \linspan\{\ket{x} \mid x \in V\}
\quad \text{ and } \quad
\ell(E) = \linspan\{\ket{xy},\ket{yx} \mid (x,y) \in E\}.
\]
The (anti-)symmetric subspaces $\ell^{\pm}(E) \subset \ell(E)$ are spanned by the vectors $\ket{\psi} \in \ell(E)$ such that $\braket{xy|\psi} = \pm \braket{yx|\psi}$.
The (normalized) \textit{star state} $\ket{\phi_x}$ at $x$ is defined as
\[
\ket{\phi_x}
= \frac{1}{\sqrt{d_x}} \sum_y \sqrt{w_{xy}} \ket{xy}.
\]
The span of these star states defines the \emph{star subspace} $\ell^*(E) \subset \ell(E)$.
To an (antisymmetric) flow $h:V \times V \to \R$ on $G$ we associate a (normalized) flow vector $\ket{h} \in \ell^-(E)$, defined by
\[
\ket{h}
= \frac{1}{\sqrt{2 \cE(h)}}
    \sum_{xy} \frac{h_{xy}}{\sqrt{w_{xy}}} \ket{xy},
\]
where we recall the energy of flow $h$ as $\cE(h) = \sum_{x,y} h_{xy}^2/(2w_{xy})$.\footnote{The additional factor $1/2$ corrects for double counting the edges.}
We say that a flow $h$ is \textit{closed outside $M$} if the net flow out of any vertex $x \notin M$ is zero, or equivalently
\[
\braket{\phi_x|h}
\propto \sum_y h_{xy}
= 0, \quad \forall x \notin M.
\]
Now let $\ell^-_\cl(E)$ denote the subspace spanned by all closed flows outside $M$, and $\Pi_\cl$ the orthogonal projector onto $\ell_\cl^-(E)$.
For source node $s \in V$ and sink $M \subseteq V$, an \textit{$s$-$M$ flow} $h$ yields a vector $\ket{h} \in \ell^-(E)$ for which
\[
\braket{\phi_s|h} = - \sum_{y \in M} \braket{\phi_y|h} \neq 0
\quad \text{ and } \quad
\braket{\phi_x|h} = 0,\; \forall x \notin \{s,M\}.
\]
In particular, an $s$-$M$ flow is closed outside $\{s,M\}$.
The \emph{electric} flow $f$ from $s$ to $M$ is the unique unit $s$-$M$ flow with minimal energy $\cE(f) = R_s$.
Alternatively, it is the unique unit $s$-$M$ flow that obeys \emph{Kirchhoff's cycle law} \cite{lyons2017probability}, which states that $\braket{g|f} = 0$ for any flow $g$ that is closed outside $M$.

\section{Quantum walks and electric networks}

We can define quantum walks using much of the terminology introduced earlier.
While a random walk is defined over the vertex space, a quantum walk takes place over the edge space.
I.e., the state of a quantum walk is described by a quantum state $\ket{\psi} \in \ell(E)$.
It can be helpful to think about the basis state $\ket{\psi} = \ket{x,y}$ as a quantum walk positioned on vertex $x$, but keeping a memory of the previous position $y$. 

To describe the dynamics, we first define the unitary operators
\[
D_x = 2 \ket{\phi_x} \bra{\phi_x} - I 
\quad \text{ and } \quad \bigoplus_x D_x.
\]
These describe reflections around the star state $\ket{\phi_x}$ and the star subspace $\ell^*(E)$, respectively.
We also define the $\swap$ operator by $\swap \ket{xy} = \ket{yx}$.
This operator describes a reflection around the symmetric subspace $\ell^+(E)$.
Following \cite{szegedy2004quantum,magniez2011search}, the quantum walk operator is then defined as the product of reflections $\swap \cdot \bigoplus_x D_x$.
This structure is close to the structure of a random walk on a graph, where the reflection $\bigoplus_x D_x$ mixes up the second (memory) register of the walk, while the $\swap$ operator corresponds to an actual move along the resulting edge.

In this work, we use a slight variation called the \emph{absorbing} quantum walk, as in \cite{belovs2013quantum,piddock2019quantum}.
For a source vertex $s$ and sink $M$, it is defined as
\[
U
= \swap \cdot \bigoplus_{x \notin \{s,M\}} D_x.
\]
This corresponds to a product of reflections around $\ell_{s,M}^*(E) \coloneqq \linspan\{\ket{\phi_x} \mid x \notin \{s,M\} \}$ and $\ell^+(E)$, based on which we can easily characterize the invariant subspace of $U$.

\begin{lemma} \label{lem:inv-U}
The invariant subspace of $U$ is spanned by the electric $s$-$M$ flow $\ket{f}$ and the flows $\ket{g}$ that are closed outside $M$.
\end{lemma}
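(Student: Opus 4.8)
The plan is to treat $U = \swap \cdot R$ as a product of two reflections and read off its fixed space from the geometry of the two reflected subspaces. Here $\swap = 2\Pi^+ - I$ is the reflection about the symmetric subspace $\ell^+(E)$, and $R = \bigoplus_{x\notin\{s,M\}} D_x = 2\Pi^* - I$ is the reflection about the restricted star subspace $\ell^*_{s,M}(E)$, where $\Pi^+$ and $\Pi^*$ denote the orthogonal projectors onto $\ell^+(E)$ and $\ell^*_{s,M}(E)$ respectively. The goal is to show that the eigenvalue-$1$ (invariant) subspace of $U$ equals $\mathcal F := \ell^-(E)\cap \ell^*_{s,M}(E)^\perp$, and then to recognise $\mathcal F$ as $\linspan\{\ket f\}\oplus\ell^-_\cl(E)$.

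First I would reduce the fixed-point equation. Since $\swap^2 = I$, the identity $U\ket\psi = \ket\psi$ is equivalent to $R\ket\psi = \swap\ket\psi$. Substituting the reflection formulas, this reads $2\Pi^*\ket\psi - \ket\psi = 2\Pi^+\ket\psi - \ket\psi$, i.e. $\Pi^*\ket\psi = \Pi^+\ket\psi$. Call this common vector $\ket\chi$. By construction $\ket\chi = \Pi^*\ket\psi$ lies in the star subspace $\ell^*_{s,M}(E)$, while $\ket\chi = \Pi^+\ket\psi$ lies in the symmetric subspace $\ell^+(E)$; hence $\ket\chi \in \ell^*_{s,M}(E)\cap\ell^+(E)$.

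The crux of the argument is then the claim that $\ell^*_{s,M}(E)\cap\ell^+(E) = \{0\}$, i.e. that there is no nonzero symmetric state supported on the star subspace once the reflections at $s$ and $M$ have been removed. I would prove this by writing a candidate $\ket\chi = \sum_{x\notin\{s,M\}} c_x\ket{\phi_x}$ and comparing the amplitudes on $\ket{xy}$ and $\ket{yx}$: symmetry forces $c_x/\sqrt{d_x} = c_y/\sqrt{d_y}$ along every edge between two interior vertices, and $c_x = 0$ on every interior vertex adjacent to $s$ or $M$, since those boundary vertices contribute no star state to the partner amplitude. Using connectivity of $G$, the vanishing on boundary-adjacent vertices propagates along the constant ratio $c_x/\sqrt{d_x}$ to all of $V\setminus\{s,M\}$, so $\ket\chi = 0$. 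Granting this, $\ket\chi = 0$ gives both $\Pi^+\ket\psi = 0$ (so $\ket\psi\in\ell^-(E)$) and $\Pi^*\ket\psi = 0$ (so $\ket\psi \perp \ell^*_{s,M}(E)$), which is exactly $\ket\psi\in\mathcal F$; the forward inclusion is immediate, since any $\ket\psi\in\mathcal F$ is reflected to $-\ket\psi$ by both $\swap$ and $R$ and hence fixed by $U$.

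Finally I would identify $\mathcal F$ with the flow space in the statement. By definition $\mathcal F$ consists of the antisymmetric flows $\ket h$ with $\braket{\phi_x|h} = 0$ for all $x\notin\{s,M\}$, i.e. the flows whose only net sources lie in $\{s,M\}$. The linear functional $\ket h\mapsto\braket{\phi_s|h}$ (the divergence at $s$) has kernel exactly $\ell^-_\cl(E)$, the flows closed outside $M$, and does not vanish on the electric flow $\ket f\in\mathcal F$; hence $\mathcal F = \linspan\{\ket f\}\oplus\ell^-_\cl(E)$, as claimed. I expect the main obstacle to be the intersection-triviality step $\ell^*_{s,M}(E)\cap\ell^+(E) = \{0\}$: this is where the absorbing structure (deleting the star states at $s$ and $M$) and the connectedness of $G$ genuinely enter, and without it the invariant subspace would pick up spurious symmetric components and the lemma would fail, as one already sees on a single edge.
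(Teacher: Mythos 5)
Your proof is correct and takes essentially the same route as the paper: both view $U$ as a product of reflections about $A=\ell^*_{s,M}(E)$ and $B=\ell^+(E)$, note that $A\cap B=\{0\}$ when $G$ is connected, and identify $A^\perp\cap\ell^-(E)$ with the flows closed outside $\{s,M\}$, which split as $\linspan\{\ket{f}\}$ plus $\ell^-_\cl(E)$ via the divergence at $s$. The only difference is that you fill in details the paper merely asserts --- deriving the fixed-point condition $\Pi^*\ket{\psi}=\Pi^+\ket{\psi}$ directly instead of citing the two-reflection fixed-space fact, and proving the triviality of $A\cap B$ by the amplitude-matching and connectivity argument --- which is an elaboration of the same proof rather than a different one.
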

\begin{proof}
First note that the product of reflections around subspaces $A = \ell^*_{s,M}(E)$ and $B = \ell^+(E)$ has an invariant subspace spanned by vectors in $(A \cap B) \cup (A^\perp \cap B^\perp)$.

Then, notice that $\ell_{s,M}^*(E) \cap \ell^+(E)$ is empty when $G$ is connected, while $\ell(E)_{s,M}^{*\perp} \cap \ell^-(E)$ contains exactly the flows perpendicular to all star states $\ket{\phi_x}$ for $x \notin \{s,M\}$.
These are the flows that are closed outside of $s$ and $M$, and these are spanned by the electric $s$-$M$ flow $\ket{f}$ and the flows $\ket{g}$ that are closed outside $M$.
\end{proof}

The following is a useful consequence of the lemma above.
Let $\Pi_0 = \ket{f}\bra{f} + \Pi_\cl$ be the projector onto the invariant subspace of $U$, where $\ket{f}$ is the electric $s$-$M$ flow and $\Pi_\cl$ is the projector onto the subspace of closed flows outside $M$.
Let $\ket{\phi_s^-} = \frac{I - \swap}{\sqrt{2}} \ket{\phi_s}$ so that $\ket{\phi_s^-} \in \ell^-(E)$.
Then note that $\Pi_\cl \ket{\phi^-_s} = 0$, and hence
\begin{align*}
\Pi_0 \ket{\phi^-_s}
&= \big( \ket{f}\bra{f} + \Pi_\cl \big) \ket{\phi^-_s} \\
&= \ket{f} \braket{f|\phi^-_s}
= \sqrt{\frac{1}{d_s R_s}} \ket{f},
\end{align*}
where the last equality follows from explicit calculation.

\subsection{Quantum elfs} \label{sec:qwsampling}

From last section it is clear that if we can project the initial state $\ket{\phi^-_s}$ onto the invariant subspace of the quantum walk operator $U$, then we can prepare a quantum state $\ket{f}$ corresponding to the electric flow.
Measuring this state returns an edge $e$ with probability $f_e^2/(w_e R_s)$, i.e., it returns an electric flow sample.

We can approximately implement this projection by doing quantum phase estimation.
This leads to the following algorithm, where the escape time quantity $\ET_s$ pops up very naturally.
The algorithm and its analysis largely follow that of Piddock \cite{piddock2019quantum}, extending it from bipartite graphs to general graphs.
For simplicity, we assume that we know the escape time $\ET_s$.

\begin{algorithm}[H]
\caption{\bf Quantum walk algorithm for elfs} \label{alg:QW-elfs}
\begin{algorithmic}
\vspace{-1mm}
\State
\begin{enumerate}
\item
Run phase estimation on the QW operator $U$ and state $\ket{\phi^-_s}$ to precision $\eps/\sqrt{\ET_s}$.
\item
Measure the phase estimation register.
If the estimate returns ``0'', output the resulting state.
Otherwise, go back to step 1.
\end{enumerate}
\end{algorithmic}
\end{algorithm}

Step 1.~can be implemented with $O(\sqrt{\ET_s}/\eps)$ calls to the (controlled) QW operator.
The number of repetitions and the correctness of the algorithm can be analyzed using the following lemma.
It is proven using a refinement of the effective spectral gap lemma from \cite{lee2011quantum}.
\begin{lemma}[{\cite[Lemma 9]{piddock2019quantum}}] \label{lem:eff-gap}
Define the unitary $U = (2\Pi_B-I)(2\Pi_A-I)$ with projectors $\Pi_A,\Pi_B$.
Let $\ket{\psi} = \sqrt{p} \ket{\psi_0} + \Pi_B \ket{\phi}$ be a normalized quantum state such that $U \ket{\psi_0} = \ket{\psi_0}$ and $\ket{\phi}$ is a (unnormalized) vector satisfying $\Pi_A \ket{\phi} = 0$.
Performing phase estimation on $\ket{\psi}$ with operator $U$ and precision $\delta$, and then measuring the phase register, returns output ``0'' with probability $p' \in [p,p+\delta^2 \|\ket{\phi}\|^2]$, leaving a state $\ket{\psi'}$ such that
\[
\frac{1}{2} \big\| \ket{\psi'}\bra{\psi'} - \ket{\psi_0}\bra{\psi_0} \big\|_1
\leq \frac{1}{\sqrt{p'}} \delta \big\|\ket{\phi}\big\|.
\]
\end{lemma}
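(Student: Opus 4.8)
\emph{Proof plan.} The plan is to combine the structure of quantum phase estimation with the \emph{effective spectral gap lemma} of Lee, Mittal, Reichardt, {\v S}palek and Szegedy~\cite{lee2011quantum}, refined so as to handle the soft (rather than sharp) spectral windowing performed by phase estimation. Recall the effective spectral gap lemma: for $U = (2\Pi_B - I)(2\Pi_A - I)$ and any $\Theta \geq 0$, letting $P_\Theta$ denote the projector onto the span of eigenvectors of $U$ whose eigenphase $\theta$ satisfies $|\theta| \leq \Theta$, one has $\norm{P_\Theta \Pi_B \ket{w}} \leq \tfrac{\Theta}{2}\norm{\ket{w}}$ whenever $\Pi_A\ket{w} = 0$. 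A first observation I would record is that the hypotheses force $\Pi_B\ket{\phi}$ to be orthogonal to the entire $1$-eigenspace of $U$: that eigenspace is $(\mathrm{range}\,\Pi_A \cap \mathrm{range}\,\Pi_B)\oplus(\ker\Pi_A\cap\ker\Pi_B)$, and for $u$ in the first summand $\braket{u|\Pi_B\phi} = \braket{\Pi_A u|\phi} = \braket{u|\Pi_A\phi} = 0$, while the second summand is killed by $\Pi_B$. Hence the phase-$0$ component of $\ket{\psi} = \sqrt{p}\ket{\psi_0} + \Pi_B\ket{\phi}$ is \emph{exactly} $\sqrt{p}\ket{\psi_0}$.

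Next I would model phase estimation to precision $\delta$ as a unitary on $\ket{\psi}$ together with a fresh ancilla, sending each eigencomponent $\ket{v_\theta}$ of $\ket{\psi}$ in the eigenbasis of $U$ to $\ket{v_\theta}\otimes\ket{\gamma_\theta}$, where $\ket{\gamma_\theta}$ is a register state concentrated on estimates near $\theta$. Measuring the ancilla and obtaining ``$0$'' leaves the unnormalized vector $\sqrt{p}\,\braket{0|\gamma_0}\ket{\psi_0} + \sum_{\theta\neq 0}\braket{0|\gamma_\theta}\ket{v_\theta}$; using $\braket{0|\gamma_0} = 1$ (the exact phase $0$ is always estimated as $0$) and the previous orthogonality, this is $\sqrt{p}\ket{\psi_0} + \ket{\xi}$ with $\ket{\xi} \perp \ket{\psi_0}$ collecting the surviving nonzero-phase mass of $\Pi_B\ket{\phi}$. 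Since the two pieces are orthogonal, the acceptance probability is $p' = p + \norm{\ket{\xi}}^2$, which already gives $p' \geq p$ and reduces the whole lemma to the single estimate $\norm{\ket{\xi}}^2 \leq \delta^2 \norm{\ket{\phi}}^2$.

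The crux is this last estimate, and it is where I expect the main obstacle to lie: phase estimation is only a soft projection, so $\norm{\ket{\xi}}^2 = \sum_{\theta\neq 0}\kappa(\theta)\norm{\ket{v_\theta}}^2$ with the acceptance kernel $\kappa(\theta) = |\braket{0|\gamma_\theta}|^2$ picking up small but nonzero contributions from \emph{all} phases, not just those below $\delta$. To control this I would write the sum as a Stieltjes integral $\int \kappa(\theta)\,dG(\theta)$ against the cumulative spectral weight $G(\Theta) = \norm{P_\Theta\Pi_B\ket{\phi}}^2$, for which the effective spectral gap lemma supplies $G(\Theta) \leq \tfrac{\Theta^2}{4}\norm{\ket{\phi}}^2$. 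Integrating by parts (using $G(0)=0$ and the decay of the standard phase-estimation kernel, $\kappa(\theta) = O((\delta/\theta)^2)$ outside the central window of width $\sim\delta$) turns this into a bound of the form $\tfrac{\norm{\ket{\phi}}^2}{2}\int_0^\pi \theta\,\kappa(\theta)\,d\theta$, and the concentration of $\kappa$ on the scale $\delta$ makes this integral $O(\delta^2)$; calibrating the precision so the constant is $1$ yields $\norm{\ket{\xi}}^2 \leq \delta^2\norm{\ket{\phi}}^2$, i.e. $p' \leq p + \delta^2\norm{\ket{\phi}}^2$. This ``refinement of the effective spectral gap lemma'' — pairing the kernel tail bound with the quadratic CDF estimate — is the delicate part, both in getting the tail decay of $\kappa$ right and in matching constants.

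Finally, for the trace-distance bound I would use that $\ket{\psi'} = \tfrac{1}{\sqrt{p'}}(\sqrt{p}\ket{\psi_0} + \ket{\xi})$ with $\ket{\xi}\perp\ket{\psi_0}$, so $|\braket{\psi_0|\psi'}|^2 = p/p'$, and the exact formula $\tfrac12\norm{\ket{\psi'}\bra{\psi'} - \ket{\psi_0}\bra{\psi_0}}_1 = \sqrt{1 - |\braket{\psi_0|\psi'}|^2}$ for pure states gives $\tfrac12\norm{\ket{\psi'}\bra{\psi'} - \ket{\psi_0}\bra{\psi_0}}_1 = \sqrt{(p'-p)/p'} = \norm{\ket{\xi}}/\sqrt{p'} \leq \tfrac{1}{\sqrt{p'}}\delta\norm{\ket{\phi}}$, which is the claimed bound.
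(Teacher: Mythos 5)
A preliminary remark: the paper does not prove this lemma itself --- it imports it wholesale from \cite[Lemma 9]{piddock2019quantum}, noting only that the proof is a \emph{refinement} of the effective spectral gap lemma of \cite{lee2011quantum}. So your attempt must be compared against that cited proof. Much of your scaffolding is correct and matches it: the identification of the $1$-eigenspace of $U$ as $(\mathrm{range}\,\Pi_A \cap \mathrm{range}\,\Pi_B)\oplus(\ker\Pi_A\cap\ker\Pi_B)$, the observation that $\Pi_B\ket{\phi}$ is orthogonal to all of it, the model of phase estimation with acceptance kernel $\kappa(\theta)=|\braket{0|\gamma_\theta}|^2$, $\kappa(0)=1$, the resulting decomposition of the accepted state as $\sqrt{p}\ket{\psi_0}+\ket{\xi}$ with $\ket{\xi}\perp\ket{\psi_0}$ and $p'=p+\norm{\ket{\xi}}^2$, and your pure-state trace-distance calculation $\tfrac12\norm{\ket{\psi'}\bra{\psi'}-\ket{\psi_0}\bra{\psi_0}}_1=\norm{\ket{\xi}}/\sqrt{p'}$ are all sound, and they correctly reduce the entire lemma to the single estimate $\norm{\ket{\xi}}^2\leq\delta^2\norm{\ket{\phi}}^2$.

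The gap is exactly where you flagged it, and it is genuine: using the effective spectral gap lemma only through its \emph{conclusion}, i.e.\ the CDF bound $G(\Theta)=\norm{P_\Theta\Pi_B\ket{\phi}}^2\leq\tfrac{\Theta^2}{4}\norm{\ket{\phi}}^2$, together with the kernel tail $\kappa(\theta)=O(\delta^2/\theta^2)$, does \emph{not} give $O(\delta^2)$. After your integration by parts the bound is essentially $\tfrac{1}{2}\norm{\ket{\phi}}^2\int_0^\pi\theta\,\kappa(\theta)\,d\theta$, and for the standard phase-estimation (Fej\'er) kernel $\kappa(\theta)=\frac{\sin^2(N\theta/2)}{N^2\sin^2(\theta/2)}$ with $N\sim 1/\delta$, each dyadic range $\theta\sim 2^k/N$ contributes $\Theta(1/N^2)$, so the integral is $\Theta(\delta^2\log(1/\delta))$. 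This is not a slack constant you can calibrate away: the spectral measure $dG(\theta)=\tfrac{\theta}{2}\norm{\ket{\phi}}^2\,d\theta$ saturates the CDF bound at every scale and realizes the logarithm, so no argument that sees only $G$ can prove the lemma as stated (you would get $p'\leq p+O(\delta^2\log(1/\delta))\norm{\ket{\phi}}^2$ and a $\sqrt{\log}$ loss in the trace distance --- tolerable for the paper's $\tO$ statements, but strictly weaker than the claim).

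The missing idea --- and this is precisely what ``refinement'' means here --- is to open up the \emph{proof} of the effective spectral gap lemma rather than invoke its conclusion. In the Jordan two-dimensional block decomposition for $(\Pi_A,\Pi_B)$, the block with rotation angle $\theta_j$ has $\Pi_A,\Pi_B$ projecting onto unit vectors at angle $\theta_j/2$, and $U$ has eigenphases $\pm\theta_j$ there. The hypothesis $\Pi_A\ket{\phi}=0$ forces the component of $\ket{\phi}$ in block $j$ to be $\mu_j\ket{a_j^\perp}$, so $\norm{\Pi_B\,\mu_j\ket{a_j^\perp}}^2=|\mu_j|^2\sin^2(\theta_j/2)$, with $\sum_j|\mu_j|^2\leq\norm{\ket{\phi}}^2$. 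This per-block $\sin^2(\theta_j/2)$ cancels the denominator of the kernel \emph{exactly}:
\[
\kappa(\theta_j)\,|\mu_j|^2\sin^2(\theta_j/2)
= \frac{|\mu_j|^2\sin^2(N\theta_j/2)}{N^2}
\leq \frac{|\mu_j|^2}{N^2},
\]
and summing over blocks gives $\norm{\ket{\xi}}^2\leq\norm{\ket{\phi}}^2/N^2=\delta^2\norm{\ket{\phi}}^2$ with no logarithm and no tail-splitting at all. The CDF bound discards the constraint $\sum_j|\mu_j|^2\leq\norm{\ket{\phi}}^2$ that powers this cancellation, which is exactly why your black-box route loses the log. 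Substituting this computation for your Stieltjes-integral step, the rest of your argument goes through verbatim.
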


We can massage our algorithm into the appropriate form.
Define the projectors $\Pi^- = (I-\swap)/2$ and $\Pi^*_{s,M} = \sum_{x \in \{s,M\}} \ket{\phi_x}\bra{\phi_x}$.
We can rewrite
\[
\ket{f}
= \Pi^- \sqrt{\frac{2}{R_s}} \sum_x v_x \sqrt{d_x} \ket{\phi_x},
\]
with $v_x$ the voltage at vertex $x$ in the unit electric flow from $s$ to $M$.
Now define the (unnormalized) state $\ket{\nu} = \sqrt{\frac{2}{R_s}} \sum_{x \neq s} v_x \sqrt{d_x} \ket{\phi_x}$.
Recalling that $v_y = 0$ for $y \in M$, it holds that
\[
(I - \Pi_{s,M}^*) \ket{\nu} = 0
\quad \text{ and } \quad
\Pi^- \ket{\nu}
= \ket{f} - \Pi^- \sqrt{\frac{2}{R_s}} v_s \sqrt{d_s} \ket{\phi_s}
= \ket{f} - \sqrt{R d_s} \ket{\phi^-_s},
\]
where we used that $v_s = R_s$ and $\ket{\phi_s^-} = \frac{I - \swap}{\sqrt{2}} \ket{\phi_s} = \sqrt{2} \Pi^- \ket{\phi_s}$.

Now let
\begin{equation*}
\begin{gathered}
\Pi_A = I - \Pi_{s,M}^*, \qquad \Pi_B = \Pi^- \\
\ket{\psi} = \ket{\phi^-_s}, \qquad
\ket{\psi_0} = \ket{f}, \qquad
\ket{\phi} = \frac{1}{\sqrt{R_s d_s}} \ket{\nu}, \qquad
p = \frac{1}{R_s d_s}.
\end{gathered}
\end{equation*}
Then we can rewrite the QW operator $U = \swap \cdot \bigoplus_{x \notin \{s,M\}} D_x = (2\Pi_B-I)(2\Pi_A-I)$, and it holds that $\ket{\psi} = \sqrt{p} \ket{\psi_0} + \Pi_B \ket{\phi}$.
Moreover, we can bound
\[
\big\| \ket{\phi} \big\|^2
= \frac{1}{R_s d_s} \frac{2}{R_s} \sum_{x \neq s} v_x^2 d_x
\leq \frac{2}{R_s d_s} \ET_s.
\]
Now we apply \cref{lem:eff-gap}, stating that if we apply phase estimation on $\ket{\psi} = \ket{\phi^-_s}$ with respect to the QW operator $U$ to precision $\delta$, and then measure the phase register, we retrieve ``0'' with probability $p'$ such that
\begin{equation} \label{eq:prob-QPE}
\frac{1}{R_s d_s}
\leq p'
\leq \frac{1}{R_s d_s} (1 + \delta^2 \, 2 \ET_s),
\end{equation}
leaving a state $\ket{f'}$ such that
\begin{equation} \label{eq:TV-distance}
\frac{1}{2} \big\| \ket{f'}\bra{f'} - \ket{f}\bra{f} \big\|_1
\leq \delta \sqrt{2 \ET_s}.
\end{equation}
As a consequence, if we pick $\delta = \eps/\sqrt{2 \ET_s}$ then the phase estimation step succeeds with probability $p' \geq 1/(R_s d_s)$, and upon success it returns a state $\ket{f'}$ that is $\eps$-close to the flow state $\ket{f}$.

The above analysis goes through even if we use an \emph{upper bound} $\hat\ET_s \geq \ET_s$ instead of the exact escape time.
Moreover, similar to \cite{belovs2013quantum,apers2019unified,piddock2019quantum}, we show in \cref{app:QW-sampling} that it is possible to modify the graph by adding an extra edge to $s$ so that $d_s \in \Omega(1/R_s)$.
This leads to the following theorem, where the quantum walk operator is now on the modified graph.

\begin{restatable}[QWs]{theorem}{qwsampling}
\label{thm:qwsampling}
Given an upper bound $\hat\ET_s \geq \ET_s$, there is a quantum walk algorithm based on \cref{alg:QW-elfs} that returns a state $\ket{\tilde f}$ such that $\| \ket{\tilde f} - \ket{f} \|_2 \leq \eps$.
It requires $O\left(\sqrt{\hat\ET_s} \left(\frac{1}{\eps} +\log(R_s d_s)\right) \right)$ calls in expectation to the (controlled) quantum walk operator $U$.
\end{restatable}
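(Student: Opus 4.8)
The plan is to analyze \cref{alg:QW-elfs} directly, treating one pass of steps~1--2 as a single \emph{trial} and controlling three quantities: (i) the cost of one trial, (ii) the probability that a trial succeeds (the phase register reads ``$0$''), and (iii) the fidelity of the state left by a successful trial. The cost is immediate: phase estimation of the walk operator $U$ to precision $\delta$ uses $O(1/\delta)$ calls to the controlled walk. The other two quantities are exactly what \cref{lem:eff-gap} delivers through \cref{eq:prob-QPE,eq:TV-distance}, after inserting the bound $\| \ket{\phi} \|^2 \le 2\ET_s/(R_s d_s)$ established just before the statement. Fixing $\delta = \eps/\sqrt{2\hat\ET_s}$ (legitimate since the analysis only uses an upper bound $\hat\ET_s \ge \ET_s$), \cref{eq:TV-distance} gives trace distance $\le \eps$ between the conditioned output and $\ket{f}$; as both states are pure, $\tfrac12\|\ket{f'}\bra{f'}-\ket{f}\bra{f}\|_1 = \sqrt{1-|\braket{f'|f}|^2} \le \eps$ converts, after fixing the global phase, into $\|\ket{\tilde f}-\ket{f}\|_2 \le \sqrt2\,\eps$, which is the claimed guarantee up to rescaling $\eps$. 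With this $\delta$ a single trial costs $O(\sqrt{\hat\ET_s}/\eps)$ controlled-walk calls, while \cref{eq:prob-QPE} pins the success probability at $p' = (1 \pm O(\eps^2))/(R_s d_s)$, so the false-positive mass is an $O(\eps^2)$ fraction of $p'$ and a successful trial is genuinely close to $\ket{f}$.

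Next I would bound the number of trials. Because step~2 loops back on failure, the number of trials to the first success is geometric with mean $1/p' \le R_s d_s$; used naively this gives $O(R_s d_s \cdot \sqrt{\hat\ET_s}/\eps)$, which is far too large. This is where the modification of \cref{app:QW-sampling} enters: adding a single edge at $s$ so that $d_s \in \Omega(1/R_s)$ controls the product $R_s d_s$ and thereby keeps the per-trial success probability $p'$ from being negligible, so that (with the walk operator now taken on the modified graph) the expected number of trials is essentially constant. The key point to verify here is that this modification does not perturb the target flow state $\ket{f}$ on the original edges, so that the output of a successful trial is still $\eps$-close to the flow state we actually wish to sample from.

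Putting the pieces together yields $O(\sqrt{\hat\ET_s}(1/\eps + \log(R_s d_s)))$: the $\sqrt{\hat\ET_s}/\eps$ contribution is the cost of the $O(1)$ fine phase estimations needed for $\eps$-fidelity, and the additive $\sqrt{\hat\ET_s}\log(R_s d_s)$ accounts for the coarse-precision (precision $\Theta(1/\sqrt{\hat\ET_s})$, cost $O(\sqrt{\hat\ET_s})$ each) certifications of the ``$0$''-outcome that are repeated $O(\log(R_s d_s))$ times to suppress the accumulated false-accept probability below $1/(R_s d_s)$. The main obstacle I anticipate is precisely this balancing act: one must make the effective overlap $p' = 1/(R_s d_s)$ large enough for a constant per-trial success probability \emph{without} disturbing $\ket{f}$, while simultaneously keeping $\delta$ small relative to $\| \ket{\phi} \|^2 \le 2\ET_s/(R_s d_s)$ so that a ``$0$'' measurement leaves a state $\eps$-close to $\ket{f}$ rather than one contaminated by the non-invariant component. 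Securing the expected trial count and the per-trial fidelity at the same time is the quantitatively delicate step, and the surviving $\log(R_s d_s)$ factor is the price of reliably certifying that a trial has genuinely projected onto the electric flow.
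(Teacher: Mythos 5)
Your per-trial analysis (cost $O(1/\delta)$, success probability via \cref{eq:prob-QPE}, fidelity via \cref{eq:TV-distance}, and the purity argument converting trace distance to $2$-norm) matches the paper, as does the idea of appending an edge at $s$ to force $R_s d_s \in \Theta(1)$. But there is a genuine gap at exactly the step you wave at: to make the product constant, the new edge must carry weight $w_{ss'} = \eta\, d_s$ with $\eta \approx 1/(R_s d_s)$ --- i.e.\ setting the weight requires the very quantity the algorithm does not know (on the modified graph $R_{s'} d_{s'} = 1 + \eta R_s d_s$, so only $\eta = \Theta(1/(R_s d_s))$ gives constant success probability). The paper fills this hole with an adaptive procedure (\cref{alg:QW-upper-bounds} in \cref{app:QW-sampling}): start with the guess $\tilde p_s = 1$, and repeatedly halve it, certifying each guess by amplitude estimation of the ``$0$''-outcome probability using \emph{coarse} phase estimation at precision $\Theta(1/\sqrt{\hat\ET_{s'}})$, cost $O(\sqrt{\hat\ET_s})$ per round. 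The $O(\log(R_s d_s))$ halvings are the true origin of the additive $\sqrt{\hat\ET_s}\log(R_s d_s)$ term. Your alternative account of that term --- repeated certifications ``to suppress the accumulated false-accept probability below $1/(R_s d_s)$'' --- does not correspond to any needed step: already at coarse precision $\delta = 1/(2\sqrt{\hat\ET_{s'}})$ one has $\delta^2 \|\ket{\phi}\|^2 \leq 1/(2 R_{s'} d_{s'})$, so \cref{eq:prob-QPE} pins $p' = (1 \pm \tfrac12)/(R_{s'} d_{s'})$ in a single shot, and no iterated false-accept suppression arises.

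A second, smaller issue: the ``key point to verify'' you state --- that the modification does not perturb the target flow state --- is false as literally stated. The flow state on the modified graph is $\ket{f'} = \frac{1}{\sqrt{2R_{s'}}}\bigl(\frac{1}{\sqrt{\eta d_s}}(\ket{ss'} - \ket{s's}) + \sqrt{2R_s}\,\ket{f}\bigr)$, which has overlap only $|\braket{f'|f}|^2 = R_s/R_{s'} \geq 1/3$ with $\ket{f}$. What survives the modification is the \emph{restriction} of $\ket{f'}$ to the original edges, which is proportional to $\ket{f}$ (the voltages on $G$ are unchanged); the paper exploits this by measuring and postselecting on an edge other than $(s,s')$, succeeding with constant probability. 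One must also check that the modification does not inflate the escape time, which the paper does via $\ET_{s'} \leq 1 + 2\eta R_s d_s + \ET_s \leq 4\ET_s$ for $\eta \leq 1$ (using \cref{eq:xbound}), so that $4\hat\ET_s$ remains a valid upper bound on the modified graph. Without the halving search and this accounting, your argument establishes only the known-$R_s d_s$ case and does not yield the stated $O\bigl(\sqrt{\hat\ET_s}\bigl(\frac{1}{\eps} + \log(R_s d_s)\bigr)\bigr)$ bound.
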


\noindent
If $\hat\ET_s \in O(\ET_s)$ then this gives a quadratic improvement compared to the expected number of random walk steps $\ET_s$ in the random walk coupling for electric flow sampling (\cref{lem:coupling}).

\subsection{Quantum elfs process}

We can implement the elfs process by repeatedly using the quantum walk sampling algorithm (and, in fact, this was the initial inspiration).
Consider some initial source vertex $s$ and a fixed sink $M$.
Starting from $s$, prepare the electric $s$-$M$ flow state $\ket{f}$ and measure.
Update the source vertex to the resulting vertex, and repeat, until a sink vertex is found.
If we ignore the error, and assume we know estimates $\hat \ET_s \approx \ET_s$ and $\hat R_s \approx R_s$ for every source $s$ along the elfs path, then this process requires an expected number of quantum walk steps
\[
\E\left[ \sum_{j=0}^{\rho-1} \sqrt{\ET_{Y_j}} \right],
\]
where $\{Y_0 = s,Y_1,\dots,Y_\rho \in M\}$ denotes the elfs process.
We can bound this as follows:
\begin{lemma} \label{lem:sum-sqrt}
$\EHT_s \le \E\left[ \sum_{j=0}^{\rho-1} \sqrt{\ET_{Y_j}} \right]
\le \sqrt{ 2 \, \HT_s \, \EHT_s} \le 2 \, \HT_s.$
\end{lemma}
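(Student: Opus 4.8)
The plan is to establish the chain one link at a time, since each relies on a different elementary tool together with the identities already recorded in \cref{eq:xbound} and \cref{eq:Esumx}. I would start with the leftmost inequality $\EHT_s \le \E[\sum_{j=0}^{\rho-1}\sqrt{\ET_{Y_j}}]$. Every source $Y_0,\dots,Y_{\rho-1}$ visited strictly before the process stops lies outside the sink $M$, so by the lower bound $\ET_{Y_j}\ge R_{Y_j} d_{Y_j}\ge 1$ from \cref{eq:xbound} we have $\sqrt{\ET_{Y_j}}\ge 1$ termwise. Summing gives $\sum_{j=0}^{\rho-1}\sqrt{\ET_{Y_j}}\ge \rho$ pointwise, and taking expectations together with $\E[\rho]=\EHT_s$ yields the claim.

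The middle inequality $\E[\sum_{j=0}^{\rho-1}\sqrt{\ET_{Y_j}}]\le\sqrt{2\,\HT_s\,\EHT_s}$ is the crux, and I would prove it by applying Cauchy--Schwarz twice. The delicate point is that the number of terms $\rho$ is a random variable correlated with the summands, so I would first apply Cauchy--Schwarz pointwise, before taking any expectation: for each realization,
\[
\sum_{j=0}^{\rho-1}\sqrt{\ET_{Y_j}}
= \sum_{j=0}^{\rho-1} 1\cdot \sqrt{\ET_{Y_j}}
\le \sqrt{\rho}\,\Big(\sum_{j=0}^{\rho-1}\ET_{Y_j}\Big)^{1/2}.
\]
Taking expectations and applying Cauchy--Schwarz a second time, now for the expectation (i.e.\ $\E[UV]\le\sqrt{\E[U^2]\E[V^2]}$ with $U=\sqrt{\rho}$ and $V=(\sum_j\ET_{Y_j})^{1/2}$), gives
\[
\E\Big[\sum_{j=0}^{\rho-1}\sqrt{\ET_{Y_j}}\Big]
\le \sqrt{\E[\rho]}\;\sqrt{\E\Big[\sum_{j=0}^{\rho-1}\ET_{Y_j}\Big]}
= \sqrt{\EHT_s}\cdot\sqrt{2\HT_s},
\]
where the final equality substitutes $\E[\rho]=\EHT_s$ and the identity $\E[\sum_{j=0}^{\rho-1}\ET_{Y_j}]=2\HT_s$ from \cref{eq:Esumx}.

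Finally, the rightmost inequality $\sqrt{2\,\HT_s\,\EHT_s}\le 2\,\HT_s$ follows immediately from the bound $\EHT_s\le 2\HT_s$ (also recorded in \cref{eq:Esumx}): substituting it under the square root gives $\sqrt{2\HT_s\cdot 2\HT_s}=2\HT_s$. I expect the only real subtlety in the whole argument to be the bookkeeping around the random stopping time $\rho$ in the middle step; applying Cauchy--Schwarz pointwise first---rather than trying to separate $\rho$ from the $\ET_{Y_j}$ inside the expectation---cleanly avoids any issue with the correlation between the number of steps and the per-step escape times.
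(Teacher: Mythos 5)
Your proof is correct. The first and third links coincide with the paper's argument ($\ET_x \ge 1$ off the sink from \cref{eq:xbound} together with $\E[\rho] = \EHT_s$, and $\EHT_s \le 2\HT_s$ from \cref{eq:Esumx}), but for the middle inequality you take a genuinely different route. The paper invokes its tailored Jensen-type inequality (\cref{lem:Jensensum}): it rewrites $\E\big[\sum_{j=0}^{\rho-1} f(\ET_{Y_j})\big]$ as $\sum_x c_x f(\ET_x)$ over the occupation measure $c_x$ (the expected time the elfs process spends at vertex $x$, with $\sum_x c_x = \EHT_s$) and then applies weighted Jensen with $f(x)=\sqrt{x}$. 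You instead apply Cauchy--Schwarz twice — once pointwise per realization, which correctly handles the random number of summands without any independence assumption, and once at the level of expectations to split $\sqrt{\rho}$ from $\big(\sum_j \ET_{Y_j}\big)^{1/2}$. The two arguments yield the numerically identical bound $\sqrt{\E[\rho]\cdot\E[\sum_j \ET_{Y_j}]} = \sqrt{2\,\HT_s\,\EHT_s}$, and your flagged subtlety about the correlation between $\rho$ and the per-step escape times is indeed the right thing to worry about and is resolved cleanly by ordering the two applications as you do. What your version buys is self-containedness: no occupation-measure bookkeeping and no auxiliary lemma. What the paper's \cref{lem:Jensensum} buys is generality: it holds for an arbitrary concave $f$ and is reused later with $f = \log$ in the proof of \cref{lem:Z-sum}, a case your square-root-specific Cauchy--Schwarz trick would not cover.
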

\begin{proof}
The first inequality follows trivially from the fact that $\ET_x \ge 1$ for all $x$ and $\E[\rho] = \EHT_s$. 
The second inequality is an application of Jensen's inequality (Lemma~\ref{lem:Jensensum} with $f(x)=\sqrt{x}$) and the fact that $\E\left[\sum_{j=0}^{\rho-1} \ET_{Y_j}\right] = 2 \HT_s$ (see \cref{eq:Esumx}).
The third inequality follows from the fact that $\EHT_s \le 2 \HT_s$ (again using \cref{eq:Esumx}).
\end{proof}

The expected number of quantum walk steps is hence upper bounded by twice the random walk hitting time.
If the electric hitting time is small (as is the case for trees), then it is almost quadratically smaller than the hitting time.

Taking into account the errors, we conjecture that there exists a quantum walk algorithm that samples from a distribution $\eps$-close to the elfs process with a runtime of the form $\tO(\sqrt{\HT_s \EHT_s}/\poly(\eps))$.
A particular appeal of this is that it is always upper bounded by the hitting time (see also next section).
Falling short of proving this, we do prove a slightly weaker bound.
\begin{theorem} \label{thm:QW-elfs}
Given upper bounds $\hat\HT_s \geq \HT_s$ and $\hat\EHT_s \geq \EHT_s$, we can implement the elfs process from $s$ to $M$ up to error $\eps$ using
\[
O\left( \frac{\sqrt{\hat\HT_s}}{\eps^2} \hat\EHT_s^{3/2} \log\left( \frac{\hat\HT_s}{\hat\EHT_s} \right) \right)
\]
quantum walk steps.
This yields a sample $\eps$-close to the random walk arrival distribution from $s$ to $M$.
\end{theorem}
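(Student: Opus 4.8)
The plan is to implement the elfs process directly, replacing each exact electric-flow sample by an approximate one produced by the quantum sampling subroutine of \cref{thm:qwsampling}, and then to balance the two resulting sources of error—truncating a process of random length, and accumulating per-step approximation error—against the budget $\eps$. Concretely, from the current source we prepare a state $\ket{\tilde f}$ that is $\eps'$-close to the electric flow state $\ket{f}$, measure an edge, move to a random endpoint, and repeat; by \cref{cor:arrival-distribution} the exact version of this process returns a vertex distributed exactly as the random walk arrival distribution, so it suffices to show that the approximate, truncated version stays within $\eps$ in total variation distance.

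First I would control the truncation. Since $\E[\rho] = \EHT_s \le \hat\EHT_s$, Markov's inequality gives $\Pr[\rho > T] \le \hat\EHT_s/T$, so choosing $T = 2\hat\EHT_s/\eps$ makes the truncation contribute at most $\eps/2$ to the total variation distance. Next I would fix the per-step precision. Measuring a state $\eps'$-close in $\ell_2$ norm yields an edge (hence vertex) distribution that is $O(\eps')$-close in total variation distance, so a hybrid argument over the at most $T$ steps bounds the accumulated approximation error by $O(T\eps')$. Setting $\eps' = \Theta(\eps/T) = \Theta(\eps^2/\hat\EHT_s)$ keeps this below $\eps/2$, for a total error of $\eps$.

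It remains to bound the cost. By \cref{thm:qwsampling}, step $j$ (from source $Y_j$) costs $O(\sqrt{\hat\ET_{Y_j}}(1/\eps' + \log(R_{Y_j} d_{Y_j})))$ calls to the quantum walk operator. The key quantitative input is \cref{lem:sum-sqrt}, which bounds the expected sum of the square roots of the escape times along the elfs path by $\E[\sum_{j=0}^{\rho-1}\sqrt{\ET_{Y_j}}] \le \sqrt{2\,\HT_s\,\EHT_s} \le \sqrt{2\,\hat\HT_s\,\hat\EHT_s}$. Multiplying by the precision factor $1/\eps' = \Theta(\hat\EHT_s/\eps^2)$ gives an expected cost of order $\sqrt{\hat\HT_s}\,\hat\EHT_s^{3/2}/\eps^2$, with the $\log(\hat\HT_s/\hat\EHT_s)$ factor absorbing the per-step logarithmic overhead.

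The main obstacle, which I would treat with care, is twofold. On the analytic side, \cref{lem:sum-sqrt} governs the ideal elfs process, whereas the algorithm runs the approximate one; I would transfer the cost bound across the small total variation distance, using the truncation at $T$ steps to keep the worst-case per-step cost finite. On the algorithmic side, \cref{thm:qwsampling} requires an upper bound $\hat\ET_{Y_j} \ge \ET_{Y_j}$ at every intermediate source, yet the hypothesis only furnishes global bounds at $s$; unlike the hitting time, the escape times of the $Y_j$ need not be controlled by $\hat\ET_s$. I would resolve this by a doubling search over the escape-time guess at each step, verified through the success probability of the phase-estimation step in \cref{alg:QW-elfs}, which both yields a constant-factor estimate (so that \cref{lem:sum-sqrt} still applies up to constants) and accounts for the remaining logarithmic factor.
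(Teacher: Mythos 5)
Your overall architecture matches the paper's: approximate each electric-flow sample with the quantum subroutine, truncate the process at $\Theta(\hat\EHT_s/\eps)$ steps via Markov's inequality, accumulate per-step errors by a hybrid/coupling argument, and feed everything into \cref{lem:sum-sqrt} to get the main term $\sqrt{\hat\HT_s}\,\hat\EHT_s^{3/2}/\eps^2$. The material difference is your error allocation, and that is where the proposal has a genuine gap. You demand a \emph{uniform} per-step error $\eps'=\Theta(\eps^2/\hat\EHT_s)$, which by \cref{thm:qwsampling} forces phase estimation at precision $\delta_j\approx\eps'/\sqrt{\ET_{Y_j}}$ and hence requires an upper bound on $\ET_{Y_j}$ at every intermediate source. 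You correctly flag that no such bound is available, but your repair --- a doubling search on the escape-time guess ``verified through the success probability'' --- does not go through. The guarantees of \cref{lem:eff-gap} are one-sided: the success probability satisfies $p'\in[p,\,p+\delta^2\|\ket{\phi}\|^2]$ and the post-measurement trace distance is only bounded \emph{above} by $\delta\|\ket{\phi}\|/\sqrt{p'}$, where $\|\ket{\phi}\|^2\le 2\ET_{Y_j}/(R_{Y_j}d_{Y_j})$ is precisely the unknown quantity. A small observed excess $p'-p$ does not certify that $\delta\|\ket{\phi}\|$ is small, so your search can terminate with a guess $\tilde\ET\ll\ET_{Y_j}$ and no fidelity guarantee. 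Even granting a converse relating excess probability to infidelity (plausible, but not provided by \cref{lem:eff-gap}), certifying per-step error $\eps'$ would require estimating $p'$ to relative accuracy $\Theta(\eps'^2)=\Theta(\eps^4/\hat\EHT_s^2)$, and amplitude estimation at that accuracy costs a factor $\poly(\hat\EHT_s/\eps)$ beyond your budget. Note also that the paper's own doubling search in \cref{app:QW-sampling} is on the escape \emph{probability} $1/(R_sd_s)$ with $\hat\ET_s$ assumed given as input, so it cannot be repurposed for your problem.

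The paper sidesteps this entirely by inverting the allocation: it fixes the \emph{computational budget} per step, running phase estimation at uniform precision $1/T$ with $T=\sqrt{2\hat\HT_s\hat\EHT_s}/\eps$, and lets the per-step error float as $\sqrt{\ET_{Y_j}}/T$ (via \cref{eq:TV-distance}) --- which requires no per-step knowledge of $\ET_{Y_j}$ whatsoever. The coupling argument then sums these floating errors and \cref{lem:sum-sqrt} bounds the total by $\sqrt{2\,\HT_s\,\EHT_s}/T\le\eps$; the cost is $\Gamma\cdot T$ with $\Gamma=\hat\EHT_s/\eps$, which is exactly the claimed bound. Your plan converges to this once you abandon error-first in favor of budget-first allocation, so the fix is conceptual rather than cosmetic: the one step you identified as delicate must be restructured, not patched. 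A smaller point: your claim that the $\log(\hat\HT_s/\hat\EHT_s)$ factor ``absorbs'' the per-step $\log(R_{Y_j}d_{Y_j})$ overheads is not a pointwise bound --- individual $R_{Y_j}d_{Y_j}$ can exceed $\hat\HT_s/\hat\EHT_s$ --- and requires the Jensen-type argument of \cref{lem:Z-sum}, which bounds $\E[\sum_j\log(R_{Y_j}d_{Y_j})]\le\EHT_s\log(\HT_s/\EHT_s)$ using $d_xR_x\le\ET_x$ and \cref{eq:Esumx}, transferred from the ideal process to the actual run through the total-variation bound, with a final truncation of \emph{cumulative} running time (not just the number of steps) to convert the expected-cost bound into the stated guarantee.
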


First we assume that in each step $Y_j$ of the elfs process we know $R_{Y_j} d_{Y_j}$.
Our algorithm then approximates every step by using the quantum walk algorithm with a fixed budget $T$ per step (i.e., in each step we implement phase estimation with precision $1/T$).
If $Y_1$ is an electric flow sample from $s$, then the QW algorithm on the modified graph (see \cref{app:QW-sampling} and \cref{eq:TV-distance}) gives a sample $Z_1$ satisfying
\begin{equation} \label{eq:TV-QW}
d_{TV}(Y_1,Z_1)
\leq \frac{\sqrt{\ET_s}}{T}.
\end{equation}
Now, for some time parameter $\Gamma$, let $Y = \{Y_0=s,Y_1,\dots,Y_\Gamma\}$ denote the elfs process and let $Z = \{Z_0=s,Z_1,\dots,Z_\Gamma\}$ denote the Markov chain induced by the approximate quantum walk algorithm.
We can prove the following, where $\rho$ is the electric hitting time.
\begin{lemma}
For any $\Gamma > 0$, $d_{TV}(Y,Z) \leq \frac{1}{T} \E\left[ \sum_{j=0}^{\Gamma-1} \sqrt{\ET_{Y_j}} \right] \leq \frac{1}{T} \E\left[ \sum_{j=0}^{\rho-1} \sqrt{\ET_{Y_j}} \right].$
\end{lemma}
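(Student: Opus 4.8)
The plan is to lift the single-step error guarantee \cref{eq:TV-QW} to the whole trajectory by a hybrid argument, and then to control the resulting sum using nonnegativity and absorption. Write $Q$ for the exact elfs transition kernel (as in \cref{lem:algebraic-Q}) and $\tilde Q$ for the approximate kernel implemented by the quantum walk algorithm with budget $T$. By \cref{eq:TV-QW}, these satisfy $d_{TV}(Q(\cdot \mid x), \tilde Q(\cdot \mid x)) \le \sqrt{\ET_x}/T$ from every source $x \notin M$; on $M$ both kernels are absorbing (no quantum walk is run), so they agree exactly.

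For the first inequality, introduce for $0 \le k \le \Gamma$ the hybrid chain $W^{(k)}$ that applies the exact kernel $Q$ for its first $k$ transitions and the approximate kernel $\tilde Q$ thereafter, so that $W^{(0)} = Z$ and $W^{(\Gamma)} = Y$. The triangle inequality for total variation gives
\[
d_{TV}(Y,Z) \;\le\; \sum_{k=0}^{\Gamma-1} d_{TV}\!\left(W^{(k+1)}, W^{(k)}\right).
\]
The chains $W^{(k)}$ and $W^{(k+1)}$ are driven by identical kernels at every transition except the $k$-th, where $W^{(k+1)}$ applies $Q$ and $W^{(k)}$ applies $\tilde Q$, after which both apply $\tilde Q$. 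Since their first $k$ transitions are both exact, their common law at time $k$ is precisely the elfs marginal of $Y_k$. Because applying the same kernel to both after step $k$ cannot increase the total variation distance, we obtain $d_{TV}(W^{(k+1)}, W^{(k)}) \le \E\!\left[\, d_{TV}(Q(\cdot\mid Y_k), \tilde Q(\cdot\mid Y_k))\,\right] \le \frac{1}{T}\,\E\!\left[\sqrt{\ET_{Y_k}}\right]$. Summing over $k$ yields $d_{TV}(Y,Z) \le \frac{1}{T}\E\!\left[\sum_{j=0}^{\Gamma-1}\sqrt{\ET_{Y_j}}\right]$. Equivalently, one may run the maximal coupling of $Q$ and $\tilde Q$ at each step while the two chains still agree, bound $d_{TV}(Y,Z)$ by the probability that they ever disagree before time $\Gamma$, and apply a union bound over the first-disagreement time; the expectation is then over the elfs marginal that the coupling preserves.

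For the second inequality, observe that once the elfs process reaches the sink at time $\rho$ it stays absorbed, and at those steps no quantum walk is performed, so the associated summands vanish (i.e.\ $\sqrt{\ET_{Y_j}}$ contributes nothing for $j \ge \rho$). Hence if $\Gamma \ge \rho$ the truncated sum $\sum_{j=0}^{\Gamma-1}\sqrt{\ET_{Y_j}}$ already equals $\sum_{j=0}^{\rho-1}\sqrt{\ET_{Y_j}}$, whereas if $\Gamma < \rho$ it is a sum of fewer nonnegative terms; in either case $\sum_{j=0}^{\Gamma-1}\sqrt{\ET_{Y_j}} \le \sum_{j=0}^{\rho-1}\sqrt{\ET_{Y_j}}$, and taking expectations gives the claim.

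The main obstacle is chaining the per-step guarantee \cref{eq:TV-QW} correctly: that bound holds for a single step from a \emph{fixed} source, but the step errors are state-dependent, and once the exact and approximate chains diverge they occupy different vertices. The hybrid (or couple-until-first-disagreement) construction is exactly what ensures that we only ever pay $\sqrt{\ET_x}/T$ at a vertex $x$ distributed according to the genuine elfs marginal, so the accumulated cost equals $\frac{1}{T}\E\!\left[\sum_j \sqrt{\ET_{Y_j}}\right]$ rather than an expression involving the uncontrolled trajectory of the approximate chain.
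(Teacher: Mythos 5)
Your proof is correct, and it reaches the paper's bound by a slightly different mechanism. The paper constructs a step-by-step coupling of $Y$ and $Z$, factorizes $\Pr(Y=Z)$ into per-step agreement probabilities via the Markov property, and applies the generalized Bernoulli inequality $\prod_j (1-\eps_j) \geq 1 - \sum_j \eps_j$, where $\eps_{j+1} \leq \E[\sqrt{\ET_{Y_j}}]/T$ comes from \cref{eq:TV-QW}. You instead telescope through the hybrid chains $W^{(k)}$ and invoke data processing (appending identical kernels to two trajectory laws does not increase total variation), which avoids constructing any coupling and makes transparent the key point that each per-step error is paid at a vertex distributed according to the \emph{genuine} elfs marginal at time $k$ --- indeed, since the differing kernel is followed by identical suffix kernels, the $k$-th telescoping term is exactly $\E[\,d_{TV}(Q(\cdot\mid Y_k), \tilde Q(\cdot\mid Y_k))\,]$. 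Your closing remark about running the maximal coupling until first disagreement and union-bounding is, in essence, the paper's own argument, so the two routes are near-equivalent in substance; the hybrid version is arguably cleaner because the paper's factorization $\Pr(Y=Z) = \prod_j \Pr(Y_{j+1}=Z_{j+1} \mid Y_j = Z_j)$ tacitly relies on the coupled pair being constructed so that this chain-rule decomposition holds. One small service your write-up performs that the paper omits: the paper's proof establishes only the first inequality and leaves the second implicit, whereas you justify it explicitly via the convention that no quantum walk is run after absorption (so summands with $Y_j \in M$ vanish) together with nonnegativity of the remaining terms, splitting into the cases $\Gamma \geq \rho$ and $\Gamma < \rho$. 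You also correctly note that \cref{eq:TV-QW} must be read as a uniform per-source guarantee ($d_{TV} \leq \sqrt{\ET_x}/T$ for every source $x \notin M$), not merely for the initial vertex $s$, which is the right interpretation of the quantum walk sampling analysis.
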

\begin{proof}
We consider a coupling between $Y$ and $Z$, and use the bound
\[
d_{TV}(Y,Z)
\leq \Pr(Y \neq Z)
= 1 - \Pr(Y = Z).
\]
We can rewrite
\[
\Pr(Y = Z)
= \Pr(Y_0 = Z_0) \; \Pr(Y_1 = Z_1 \mid Y_0 = Z_0)
    \; \dots \; \Pr(Y_\Gamma = Z_\Gamma \mid Y_{\Gamma-1} = Z_{\Gamma-1}),
\]
where we used that both $Y$ and $Z$ are Markov chains.
Using the shorthand $\eps_{j+1} = 1-\Pr(Y_{j+1} = Z_{j+1} \mid Y_j = Z_j)$ we can bound
\[
\Pr(Y = Z)
= (1-\eps_0) (1-\eps_1) \dots (1-\eps_\Gamma)
\geq 1 - \eps_0 - \eps_1 - \dots - \eps_\Gamma,
\]
where the inequality corresponds to the generalized Bernoulli inequality.
This gives the bound
\[
d_{TV}(Y,Z)
\leq \eps_0 + \eps_1 + \dots + \eps_\Gamma.
\]
Since $Z_0 = Y_0$ we have $\eps_0 = 0$, and by \cref{eq:TV-QW} we have that
\[
\eps_{j+1}
= 1 - \Pr(Y_{j+1} = Z_{j+1} \mid Y_j = Z_j)
\leq \E_{Y_j} \left[\frac{\sqrt{\ET_{Y_j}}}{T}\right],
\]
where $Y_j$ is drawn from the elfs process.
This gives
\begin{align*}
d_{TV}(Y,Z)
&\leq \eps_0 + \eps_1 + \dots + \eps_\Gamma \\
&\leq \sum_{j=0}^{\Gamma-1} \E_{Y_j}\left[ \frac{\sqrt{\ET_{Y_j}}}{T} \right]
= \frac{1}{T} \E_{(Y_0,\dots,Y_\Gamma)}\left[ \sum_{j=0}^{\Gamma-1} \sqrt{\ET_{Y_j}} \right].
\qedhere
\end{align*}
\end{proof}

By \cref{lem:sum-sqrt} we can bound the right hand side by $\sqrt{2 \, \HT_s \, \EHT_s}$.
Hence, if we pick $T \geq \sqrt{2 \, \HT_s \, \EHT_s}/\eps$ we get $d_{TV}(Y,Z) \leq \eps$.
In particular, this implies that the arrival distribution of the QW process will be $\eps$-close to the arrival distribution of the elfs process (and hence the random walk arrival distribution).

To bound the expected number of quantum walk steps, we assume that we have upper bounds $\hat\HT_s \geq \HT_s$ and $\hat\EHT_s \geq \EHT_s$.
We set $T = \sqrt{2 \, \hat\HT_s \, \hat\EHT_s}/\eps$ and run every iteration with $T$ quantum walk steps per iteration.
By the lemma above, this yields a sample $Z_\Gamma$ that is $\eps$-close to $Y_\Gamma$.
By Markov's inequality we have $\Pr(\rho \leq \Gamma) \geq 1 - \E[\rho]/\Gamma$, and so if we set $\Gamma = \hat\EHT_s/\eps$ then $Y_\Gamma$ will be $\eps$-close (and so $Z_\Gamma$ will be $2\eps$-close) to the arrival distribution.
Assuming we have good estimates $R_{Z_j} d_{Z_j}$ of the effective resistances along the path, the total number of quantum walk steps is then
\[
\Gamma \cdot T
\in O\left( \sqrt{\hat\HT_s} \, \hat\EHT_s^{3/2}/\eps^2 \right).
\]

If we do not have estimates of $R_{Z_j} d_{Z_j}$ along the path but we do know the vertex degrees $d_{Z_j}$ of the vertices along the elfs path, then we can estimate the resistances as in \cref{thm:qwsampling}.
This gives an $O(\log(d_{Z_j} R_{Z_j}))$-overhead per iteration, and yields a total expected cost of the order
\[
\E_Z\left[ \sum_{j=0}^{\Gamma-1} T \log(d_{Z_j} R_{Z_j}) \right],
\]
where we use the convention that $\log(d_{Z_j} R_{Z_j}) = 0$ for $Z_j \in M$.
We bound this using the following lemma.
\begin{lemma} \label{lem:Z-sum}
If $d_{TV}(Y,Z) \leq \eps$ then
\[
\Pr_Z\left( \sum_{j=0}^{\Gamma-1} \log(d_{Z_j} R_{Z_j}) > \frac{1}{\eps} \EHT_s \log\left(\frac{\HT_s}{\EHT_s}\right) \right)
\leq 2 \eps.
\]
\end{lemma}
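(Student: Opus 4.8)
The plan is to first bound the \emph{expectation} of the sum under the true elfs process $Y$, and then transfer the resulting tail estimate to the quantum process $Z$ using the hypothesis $d_{TV}(Y,Z)\le\eps$. Throughout I would use that $\log(d_x R_x)\ge 0$ for every $x$, which follows from $R_x d_x\ge 1$ (\cref{eq:xbound}); this nonnegativity is exactly what licenses both the truncation of the sum at $\Gamma$ and the application of Markov's inequality. It is convenient to regard $E=\{\sum_{j=0}^{\Gamma-1}\log(d_{x_j}R_{x_j})>\tfrac1\eps\EHT_s\log(\HT_s/\EHT_s)\}$ as a single measurable event on path space, so that $\Pr_Y(E)$ and $\Pr_Z(E)$ denote its probability under the two processes.

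For the expectation bound I would first observe that $\sum_{j=0}^{\Gamma-1}\log(d_{Y_j}R_{Y_j})\le\sum_{j=0}^{\rho-1}\log(d_{Y_j}R_{Y_j})$: for indices $j\ge\rho$ the walk already sits in $M$, so these terms vanish by convention, while if $\Gamma<\rho$ we merely drop nonnegative terms. Next I use $d_{Y_j}R_{Y_j}\le\ET_{Y_j}$ (again \cref{eq:xbound}) to pass to $\E_Y[\sum_{j=0}^{\rho-1}\log\ET_{Y_j}]$. The heart of the argument is a Jensen-type inequality for this random-length sum: invoking \cref{lem:Jensensum} with the concave function $f(x)=\log x$, together with $\E[\rho]=\EHT_s$ and $\E[\sum_{j=0}^{\rho-1}\ET_{Y_j}]=2\HT_s$ (\cref{eq:Esumx}), gives
\[
\E_Y\!\left[\sum_{j=0}^{\Gamma-1}\log(d_{Y_j}R_{Y_j})\right]
\le \EHT_s\,\log\!\left(\frac{2\HT_s}{\EHT_s}\right)
= O\!\left(\EHT_s\log\frac{\HT_s}{\EHT_s}\right).
\]

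With this in hand, Markov's inequality applied to the nonnegative quantity $\sum_{j=0}^{\Gamma-1}\log(d_{Y_j}R_{Y_j})$ shows that it exceeds the threshold $\tfrac1\eps\EHT_s\log(\HT_s/\EHT_s)$ with probability $\Pr_Y(E)=O(\eps)$. Finally I transfer this to the quantum process: since $E$ is a fixed event on path space and $d_{TV}(Y,Z)\le\eps$, we have $\Pr_Z(E)\le\Pr_Y(E)+\eps$, and combining the two estimates yields the claimed $\Pr_Z(E)\le 2\eps$.

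The main obstacle is the Jensen step, because the number of summands $\rho$ is itself correlated with the summands $\ET_{Y_j}$. The clean way around this is to treat \cref{lem:Jensensum} as a black box — it effectively reweights each index pair $(j,\omega)$ uniformly and normalizes by $\E[\rho]$, so that the average of $\log\ET$ is controlled by $\log$ of the average $\E[\sum\ET]/\E[\rho]=2\HT_s/\EHT_s$ — exactly as was done for $f(x)=\sqrt{x}$ in \cref{lem:sum-sqrt}. A secondary nuisance is the factor $2$ inside the logarithm inherited from $\E[\sum\ET_{Y_j}]=2\HT_s$; in the regime where the bound is meaningful (i.e. $\HT_s/\EHT_s$ bounded away from $1$) this costs only an absolute constant, which I would fold into the slack already present in the final $2\eps$ estimate.
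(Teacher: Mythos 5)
Your proposal is correct and follows essentially the same route as the paper's proof: truncate the sum using nonnegativity of $\log(d_x R_x)$, bound $d_{Y_j}R_{Y_j}\le\ET_{Y_j}$ via \cref{eq:xbound}, apply \cref{lem:Jensensum} with $f=\log$ together with $\E_Y[\sum_{j=0}^{\rho-1}\ET_{Y_j}]=2\HT_s$ from \cref{eq:Esumx}, then conclude with Markov's inequality and the total-variation transfer $\Pr_Z(E)\le\Pr_Y(E)+\eps$. The factor $2$ inside the logarithm that you flag is present in the paper's own derivation too (it is silently dropped when stating \cref{eq:Y-sum}), so your explicit acknowledgement of that slack is, if anything, more careful than the original.
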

\begin{proof}
First consider the related quantity $\sum_{j=0}^{\Gamma-1} \log(d_{Y_j} R_{Y_j})$ with $Y$ the elfs process.
We can prove that
\begin{equation} \label{eq:Y-sum}
\E_Y\left[ \sum_{j=0}^{\Gamma-1} \log(d_{Y_j} R_{Y_j}) \right]
\le \E_Y\left[ \sum_{j=0}^{\rho-1} \log(d_{Y_j} R_{Y_j}) \right]
\le \EHT_s \log\left(\frac{\HT_s}{\EHT_s}\right)
\end{equation}
where the first inequality follows from the convention that $\log(d_{Y_j} R_{Y_j}) = 0$ for $Y_j \in M$.
For the second inequality we first note that $d_{Y_j} R_{Y_j} \leq \ET_{Y_j}$ (\cref{eq:xbound}), and then use \cref{lem:Jensensum} with $f(x)=\log(x)$ to bound
\[
\E_Y\left[ \sum_{j=0}^{\rho-1} \log(d_{Y_j} R_{Y_j}) \right]
\le \E_Y\left[ \sum_{j=0}^{\rho-1} \log(\ET_{Y_j}) \right]
\le \EHT_s \log\left(\frac{\E_Y\left[ \sum_{j=0}^{\rho-1} \ET_{Y_j} \right]}{\EHT_s}\right).
\]
Finally we use that $\E_Y\left[ \sum_{j=0}^{\rho-1} \ET_{Y_j} \right] = 2\HT_s$ (\cref{eq:Esumx}) to get the claimed bound in \cref{eq:Y-sum}.

By Markov's inequality, \cref{eq:Y-sum} implies that
\[
\Pr_Y\left[ \sum_{j=0}^{\Gamma-1} \log(d_{Y_j} R_{Y_j}) > \frac{1}{\eps} \EHT_s \log\left(\frac{\HT_s}{\EHT_s}\right) \right]
\leq \eps.
\]
Since $d_{TV}(Y,Z) \leq \eps$, this implies that
\[
\Pr_Z\left[ \sum_{j=0}^{\Gamma-1} \log(d_{Z_j} R_{Z_j}) > \frac{1}{\eps} \EHT_s \log\left(\frac{\HT_s}{\EHT_s}\right) \right]
\leq 2\eps. \qedhere
\]
\end{proof}

From this lemma, it follows that we can truncate the cumulative time of the quantum algorithm over the different iterations at
\[
\frac{T}{\eps} \hat\EHT_s \log\left(\frac{\hat\HT_s}{\hat\EHT_s}\right)
= \frac{\sqrt{2 \hat\HT_s}}{\eps^2} \hat\EHT_s^{3/2} \log\left(\frac{\hat\HT_s}{\hat\EHT_s}\right).
\]
If the algorithm has not finished within this time, we output an arbitrary vertex.
By \cref{lem:Z-sum} this truncation only modifies the outcome with probability $2\eps$.
As the original output distribution was $2\eps$-close to the elfs process, the output distribution of the truncated algorithm will be $4\eps$-close to the elfs process.
This proves \cref{thm:QW-elfs}.

\section{Applications}

In this final section we discuss some direct applications of the quantum elfs process.

\subsection{Quantum walk search}

There is a vast literature on what is called \emph{quantum walk search}, which is the use of quantum walks on graphs to find distinguished or ``marked'' vertices \cite{apers2019unified}.
The key idea is that in different scenarios quantum walks can have a quadratically improved hitting time as compared to random walks.
This can be interpreted as a local or graph-theoretic analogue of the quadratic speedup in Grover's quantum search algorithm.
Despite the long line of progress, many key questions remain unanswered.
What can we say about the arrival distribution of a quantum walk?
Can quantum walks always achieve a speedup over random walks?

Our work sheds new light on these questions.
On the one hand, we describe a quantum walk search algorithm that retains the same arrival distribution as a random walk.
This contrasts with former works, where typically nothing could be said about the quantum walk arrival distribution.
We note that the random walk arrival distribution plays a critical role in numerous algorithms, from sampling random spanning trees \cite{broder1989generating} and semi-supervised learning \cite{zhu2003semi} to linear system solving \cite{chung2015solving}.
Our work opens an avenue towards quantum walk speedups for all of these problems.
Moreover, as discussed in \cref{sec:stopping-rules}, if we are given appropriate ``stopping probabilities'' as in the local stopping rule, then our quantum walk algorithm can speed up the access time to \emph{any distribution} $\gamma$.
This complements a recent result on quantum walk speedups over stopping times \cite{bencivenga2020sampling} which assumed access to the target probabilities $\{\gamma_x\}$, rather than the stopping probabilities.\footnote{Both settings are incomparable. E.g., when $\gamma$ is the random walk arrival distribution, then the probabilities $\gamma_x$ are typically unknown but the stopping probabilities are trivial ($p_x = 1$ if $x \in M$ and $p_x = 0$ otherwise).}

On the other hand, we make progress on the question whether quantum walks can always achieve a quantum speedup over the random walk hitting time.
If we have upper bounds $\hat\HT_s$ and $\hat\EHT_s$ on the hitting time and electric hitting time, respectively, then \cref{thm:QW-elfs} shows how to find a marked element $\eps$-close to the arrival distribution in $\tO(\sqrt{\hat\HT_s}\EHT_s^{3/2}/\eps^2)$ quantum walk steps.
We also conjectured that there exists a quantum walk algorithm with runtime $\tO(\sqrt{\hat\HT_s \EHT_s}/\poly(\eps))$.
This is always upper bounded by the hitting time (up to polylog-factors and $\eps$-dependency), and is close to a quadratic speedup when the electric hitting time is small.

A differentiating factor with former quantum walk search algorithms is that our quantum elfs process requires upper bounds on the hitting time and electric hitting time.
This seems unavoidable if we want to sample a marked element from the right arrival distribution.
If, however, we do not care about the arrival distribution, then we do not need upper bounds on $\HT_s$ and $\EHT_s$ as we show in the following theorem.\footnote{Intuitively, this is because we can verify that we found a marked vertex, but not that we sampled it from the right distribution.}
\begin{theorem}
Given a source vertex $s$ on a graph, and query access\footnote{We make queries of the form ``Is $x \in M$''? No other prior knowledge of $M$ is required.} to the vertices in the sink $M$.
There is a quantum walk algorithm that returns an element from the sink $M$ using a number of quantum walk steps
\[
\tO(\sqrt{\HT_s} \EHT_s^{3/2}).
\]
\end{theorem}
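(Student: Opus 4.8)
The plan is to reduce to \cref{thm:QW-elfs} by a geometric search over the unknown quantities $\HT_s$ and $\EHT_s$, using the query access to $M$ to verify candidate outputs. The key relaxation is that we only need to return \emph{some} element of $M$, not a sample from the correct arrival distribution. I would therefore run \cref{thm:QW-elfs} with a \emph{constant} error parameter, say $\eps = 1/4$: given correct upper bounds, this outputs a vertex whose distribution is $1/4$-close in total variation to the arrival distribution, and since that distribution is supported on $M$, the output lies in $M$ with probability at least $3/4$. A single query checks this. This observation already removes the $\eps$-dependence from the final bound, since constant $\eps$ turns the $1/\eps^2$ factor of \cref{thm:QW-elfs} into a constant.

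To dispense with the input bounds $\hat\HT_s, \hat\EHT_s$, I would guess them geometrically. For a budget parameter $k = 1, 2, \dots$, I run \cref{thm:QW-elfs} (with $\eps = 1/4$) for every pair $(\hat\HT_s, \hat\EHT_s) = (2^i, 2^j)$ satisfying $\hat\EHT_s \le 2\hat\HT_s$ and $\sqrt{2^i}\,(2^j)^{3/2} \le 2^k$ (equivalently $i + 3j \le 2k$), and after each run I query whether the returned vertex is in $M$, halting and outputting it if so. Every accepted output is genuinely a sink vertex, so the algorithm is \emph{always} correct. For termination, once $2^i \ge \HT_s$ and $2^j \ge \EHT_s$ the run succeeds with probability at least $3/4$; because $\EHT_s \le 2\HT_s$ by \cref{eq:Esumx}, the minimal such pair $(i^*,j^*)$ has $i^* + 3j^* \le 2\log_2 C + O(1)$ with $C = \sqrt{\HT_s}\,\EHT_s^{3/2}$, so it is reached by round $k^* = \lceil \log_2 C\rceil + O(1)$.

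For the running time I would bound the cost of a single budget round. Crucially, the cost of a run is determined by its \emph{input} parameters, not the true values: by \cref{thm:QW-elfs}, the pair $(2^i, 2^j)$ costs $\tO(2^{(i+3j)/2})$ quantum walk steps (the $\log(\hat\HT_s/\hat\EHT_s)$ factor is absorbed into $\tO$), even when the guesses are too small. Grouping the pairs in round $k$ by the value $s = i + 3j \le 2k$, there are $O(k)$ pairs at each level $s$, so the total round cost is $\sum_{s \le 2k} O(k)\,\tO(2^{s/2}) = \tO(2^{k})$. Summing this geometric series over rounds $1, \dots, k^*$ gives total cost $\tO(2^{k^*}) = \tO(\sqrt{\HT_s}\,\EHT_s^{3/2})$. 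To upgrade a high-probability bound to an expected-time bound, I would repeat each round $O(k)$ times so that the per-round failure probability (once $k \ge k^*$) drops below $2^{-k}$; the contribution of rounds beyond $k^*$ then forms a rapidly decaying series, while the cost inflates by only a $\polylog$ factor.

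The main obstacle I anticipate is the two-dimensional nature of the search. Unlike the usual single-parameter guess-and-double for quantum walk search, the runtime of \cref{thm:QW-elfs} depends on $\hat\HT_s$ and $\hat\EHT_s$ \emph{separately} (through the precision $T \sim \sqrt{\hat\HT_s\hat\EHT_s}$ and the iteration count $\Gamma \sim \hat\EHT_s$), so a single budget admits no canonical split. The crux is thus the counting argument showing that trying \emph{all} admissible splits of a given budget $2^k$ still costs only $\tO(2^k)$—i.e., that the many cheap-but-incorrect pairs do not dominate—together with the standard care needed to avoid a heavy tail when passing from a with-high-probability to an in-expectation running-time guarantee.
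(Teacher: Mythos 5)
Your proposal is correct and follows essentially the same route as the paper: a guess-and-double outer loop over budgets, an inner enumeration of geometrically spaced pairs $(\hat\HT_s,\hat\EHT_s)$ compatible with the budget, a constant error parameter in \cref{thm:QW-elfs} combined with a single membership query to verify the output, and a counting argument showing each budget round costs only $\tO(2^k)$ so the geometric sum is dominated by the first successful round. Your minor deviations (enumerating the full region $i+3j\le 2k$ rather than just the budget frontier, and repeating rounds $O(k)$ times to boost success probability, which is unnecessary since the constant per-round success probability already makes the expected-cost series converge) affect the bound only by polylogarithmic factors.
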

\begin{proof}
This essentially follows from invoking \cref{thm:QW-elfs} with a fixed $\eps = 1/10$ and geometrically varying guesses.

Let the function $T(\tilde\HT_s,\tilde\EHT_s) = \tO(\sqrt{\tilde\HT_s} \tilde\EHT_s^{3/2})$ denote the runtime of the algorithm in \cref{thm:QW-elfs} with precision $\eps = 1/10$ and guesses $\tilde\HT_s$ and $\tilde\EHT_s$.
We know that if $\tilde\HT_s \geq \HT_s$ and $\tilde\EHT_s \geq \EHT_s$ then the algorithm returns a marked element with probability at least $9/10$.

To vary over parameter guesses, we consider an outer loop over geometrically increasing budgets $b_i = 2^i$, for $i = 0,1,2,\dots$.
For a fixed budget $b_i$, we run an inner loop over possible upper bound guesses $\tilde\HT_s$ and $\tilde\EHT_s$ compatible with that budget.
Specifically, for $0 \le j \le i$ we set $\tilde\HT_s = b_i/2^j$ ($0 \le j \le i$), and we pick $\tilde\EHT_s$ such that $f(\tilde\HT_s,\tilde\EHT_s) \in [b_i,2b_i)$.
For every budget $b_i$, there are at most $i+1$ guesses for $\tilde\HT_s$ and $\tilde\EHT_s$, so that the inner loop has cost $O((i+1) b_i)$.

By definition, if the budget $b_i \geq f(\HT_s,\EHT_s)$ then the inner loop must encounter a pair of guesses $\tilde\HT_s \geq \HT_s$ and $\tilde\EHT_s \geq \EHT_s$, in which case the algorithm returns a marked element with probability at least $9/10$ (by our choice of $\eps = 1/10$).
We can now bound the total expected complexity.
Let $i^* \in \tO(\log(\sqrt{\HT_s} \EHT_s^{3/2}))$ be such that the budget $b_{i^*} = 2^{i^*}$ is in $[f(\HT_s,\EHT_s),2f(\HT_s,\EHT_s))$.
Then the total expected cost is bounded by
\begin{align*}
\sum_{i=0}^{i^*-1} (i+1) 2^i
&+ \sum_{i\ge i^*} (i+1) 2^i \frac{1}{10^{i-i^*}} \\
&= \sum_{i=0}^{i^*-1} (i+1) 2^i + 2^{i^*} \sum_{i\ge i^*} (i+1) \frac{1}{5^{i-i^*}} \\
&= \sum_{i=0}^{i^*-1} (i+1) 2^i + 2^{i^*} \sum_{i\ge i^*} (i-i^*+1) \frac{1}{5^{i-i^*}}
+ i^* 2^{i^*} \sum_{i\ge i^*} \frac{1}{5^{i-i^*}}.
\end{align*}
We can bound these terms using that for $x \neq 1$
\[
\sum_{\ell=0}^{k-1} (\ell+1) x^\ell
= \frac{\mathrm{d}}{\mathrm{d} x} \left(\sum_{\ell=0}^k x^\ell \right)
= \frac{\mathrm{d}}{\mathrm{d} x} \frac{1 - x^{k+1}}{1-x}
= \frac{k x^{k+1} - (k+1) x^k + 1}{(1-x)^2}.
\]
This shows that $\sum_{i=0}^{i^*-1} (i+1) 2^i = i^* 2^{i^*} - 2^{i^*} + 1 \leq i^* 2^{i^*}$, $2^{i^*} \sum_{i\ge i^*} (i-i^*+1) \frac{1}{5^{i-i^*}}
= \frac{2^{i^*}}{(1-1/5)^2}$, and finally by the usual geometric series we have that $i^* 2^{i^*} \sum_{i\ge i^*} \frac{1}{5^{i-i^*}} = \frac{i^* 2^{i^*}}{1-1/5}$.
The overall expected complexity is hence $O(i^* 2^{i^*}) \in \tO(\sqrt{\HT_s} \EHT_s^{3/2})$.
\end{proof}

For the special case of trees this yields a near-quadratic speedup over the random walk hitting time.
Indeed, in \cref{part:trees} we showed that the electric hitting time on trees is logarithmic in the graph size and edge weights.
Combining this with \cref{thm:QW-elfs} yields the following corollary.
\begin{corollary}
Let $G = (V,E,w)$ be a tree graph, and $M \subseteq V$ a subset of marked or sink vertices.
From an initial vertex $s$, there is a quantum algorithm that makes $\tO(\sqrt{\HT_s})$ quantum walk steps\footnote{Here the $\tO(\cdot)$-notation also hides polylogarithmic factors in the tree size $n$ and maximum ratio $w_{\max}/w_{\min}$ of the edge weights.} and returns an element from $M$ with constant probability.
\end{corollary}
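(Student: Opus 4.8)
The plan is to combine the polylogarithmic bound on the electric hitting time for trees (\cref{thm:elfstreebound}) with the parameter-free quantum search algorithm of the preceding theorem, whose guarantee is precisely ``returns an element of $M$ with constant probability in $\tO(\sqrt{\HT_s}\,\EHT_s^{3/2})$ quantum walk steps'' and which, crucially, needs no prior bounds on $\HT_s$ or $\EHT_s$. First I would record the tree-specific consequence of \cref{thm:elfstreebound}: on a weighted tree, using $R_s \le n/w_{\min}$ together with $H(\mu) \le \log|M|$ and $D(\mu\|\sigma_M) \ge 0$, the stated bound rearranges to $\EHT_s \in O(\log(n\,w_{\max}/w_{\min}))$. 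In particular $\EHT_s$ is polylogarithmic in the tree size $n$ and in the weight ratio $w_{\max}/w_{\min}$.

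The main step is then a direct substitution. Plugging $\EHT_s \in O(\log(n\,w_{\max}/w_{\min}))$ into the complexity $\tO(\sqrt{\HT_s}\,\EHT_s^{3/2})$ of the parameter-free algorithm gives $\EHT_s^{3/2} \in O(\log^{3/2}(n\,w_{\max}/w_{\min}))$, which is a polylogarithmic factor in $n$ and $w_{\max}/w_{\min}$. Since the $\tO(\cdot)$ in the corollary is defined (via its footnote) to absorb exactly polylogarithmic factors in $n$ and $w_{\max}/w_{\min}$, this factor is swallowed and the complexity collapses to $\tO(\sqrt{\HT_s})$. No post-processing is needed, because the invoked theorem already outputs an element of $M$ with constant probability, using query access to verify that a sampled vertex is marked.

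An alternative route, matching the narrative that cites \cref{thm:QW-elfs} directly, avoids treating the parameter-free theorem as a black box. Here one feeds \cref{thm:QW-elfs} the \emph{explicit} upper bound $\hat\EHT_s = O(\log(n\,w_{\max}/w_{\min}))$ from \cref{thm:elfstreebound}, fixes $\eps$ to a small constant, and handles the single remaining unknown $\HT_s$ by geometric doubling: run the algorithm with guesses $\hat\HT_s = 2^i$ for $i = 0,1,2,\dots$, checking after each run whether the returned vertex lies in $M$ and halting the first time it does. Once $2^i \ge \HT_s$, \cref{thm:QW-elfs} returns a sample $\eps$-close to the arrival distribution; since that distribution is supported entirely on $M$, the returned vertex lies in $M$ with probability at least $1 - O(\eps)$, a constant. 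The cost of the final successful iteration dominates the geometric sum and equals $\tO(\sqrt{\HT_s}\,\hat\EHT_s^{3/2}) = \tO(\sqrt{\HT_s})$.

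I expect the only delicate points to be bookkeeping rather than mathematics. One must check that the doubling search contributes at most an extra $O(\log \HT_s)$ factor, which is itself $\polylog(n, w_{\max}/w_{\min})$ since $\HT_s$ is polynomially bounded on an $n$-vertex weighted graph, and hence is again absorbed into $\tO$. One must also note that converting \cref{thm:QW-elfs}'s ``sample $\eps$-close to the arrival distribution'' into the weaker conclusion ``returns some element of $M$'' is legitimate precisely because the arrival distribution lives on $M$. No new tree-specific argument is required beyond \cref{thm:elfstreebound}, which is where all the real work has already been carried out.
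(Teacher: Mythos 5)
Your proposal is correct and follows essentially the same route as the paper: the corollary is obtained by combining the tree bound $\EHT_s \in O(\log(n\,w_{\max}/w_{\min}))$ from \cref{thm:elfstreebound} with the parameter-free search theorem (itself derived from \cref{thm:QW-elfs} via geometric doubling and membership queries), so the $\EHT_s^{3/2}$ factor is polylogarithmic and absorbed into the $\tO(\cdot)$. Your second route merely inlines the same doubling argument with $\hat\EHT_s$ known explicitly, so it is a specialization rather than a genuinely different proof.
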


\subsection{Approximating effective resistances}

As a second application we describe a quantum walk algorithm for estimating the escape probability or effective resistance $R_s$.
The algorithm gives a quadratic speedup over a similar random walk algorithm based on the escape time from \cref{sec:RW-eff-res}.

The algorithm is a variation on the quantum walk algorithms described in \cite{ito2019approximate,piddock2019quantum}, which have a complexity scaling as $\tO(\sqrt{\CT_s}/\eps^{3/2})$.
These algorithms are most naturally compared to the random walk algorithm based on the \emph{hitting time} from \cref{sec:RW-eff-res}, which has a runtime $O(\HT_s/\eps^2)$.
While $\CT_s \geq \HT_s$, these quantities are in general incomparable, and the quantum algorithm only has a speedup when $\CT_s \in o(\HT_s^2)$.

We propose a quantum walk algorithm with complexity $\tO(\sqrt{\hat\ET_s}/\eps^{3/2})$ for a given upper bound $\hat\ET_s \geq \ET_s$.
This gives a quadratic speedup over the corresponding random walk algorithm from \cref{sec:RW-eff-res}, which has complexity $\tO(\hat\ET_s/\eps^3)$ under the same assumptions.
It also improves over the prior quantum walk algorithms if we have access to a bound $\ET_s \leq \hat\ET_s \leq \CT_s$.

\begin{theorem}
Given an upper bound $\hat\ET_s \geq \ET_s$, there is a quantum walk algorithm that $\eps$-multiplicatively estimates $R_s d_s$ using a number of quantum walk steps
\[
O\left(\sqrt{\hat\ET_s} \left( \frac{1}{\eps^{3/2}} + \log(R_s d_s) \right) \right).
\]
\end{theorem}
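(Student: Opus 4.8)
The plan is to quantize the classical escape-time estimator of \cref{sec:RW-eff-res} by replacing its two sources of cost with their quantum-accelerated counterparts: the $O(1/\eps^2)$ Monte-Carlo repetitions become $O(1/\eps)$ rounds of amplitude estimation, and the truncated walk of length $\sim\hat\ET_s/\eps$ becomes a phase estimation of the quantum walk operator $U$ costing $\sim\sqrt{\hat\ET_s/\eps}$ steps. The product of these two factors gives the advertised $\tO(\sqrt{\hat\ET_s}/\eps^{3/2})$, and I would follow the template of \cite{ito2019approximate,piddock2019quantum}, reusing the machinery built for \cref{thm:qwsampling}.

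Concretely, the primitive is the phase estimation of \cref{alg:QW-elfs}: running phase estimation on $U$ and $\ket{\phi_s^-}$ at precision $\delta$ outputs ``$0$'' with probability $p'$, and by \cref{eq:prob-QPE} this $p'$ lies in $[\,1/(R_sd_s),\,(1+2\delta^2\ET_s)/(R_sd_s)\,]$, since $\|\Pi_0\ket{\phi_s^-}\|^2 = 1/(R_sd_s)$. Taking $\delta = \sqrt{\eps/(2\hat\ET_s)}$ makes $p'$ a multiplicative $(1\pm\eps)$-approximation of the escape probability $p=1/(R_sd_s)$, at a per-run cost of $O(1/\delta)=O(\sqrt{\hat\ET_s/\eps})$ calls to $U$. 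Thus estimating $p'$ to relative error $\eps$ and reporting $1/p'$ estimates $R_sd_s$ to relative error $O(\eps)$. I would then apply amplitude estimation to the amplitude $\sqrt{p'}$ of the ``$0$'' outcome, implementing the reflection about the eigenvalue-$1$ eigenspace of $U$ through this finite-precision phase estimation.

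The first obstacle is that $p=1/(R_sd_s)$ can be as small as $1/\hat\ET_s$, and estimating a small amplitude $\sqrt p$ to relative error $\eps$ by plain amplitude estimation costs $\Theta(1/(\eps\sqrt p))=\Theta(\sqrt{R_sd_s}/\eps)$ rounds; combined with the per-round cost this only yields $\tO(\sqrt{R_sd_s\,\ET_s}/\eps^{3/2})\le\tO(\hat\ET_s/\eps^{3/2})$ by \cref{eq:xbound}, losing the quadratic factor. To recover $O(1/\eps)$ rounds I would force the estimated amplitude to be $\Theta(1)$ by the same device used for \cref{thm:qwsampling} (\cref{app:QW-sampling}): attach a reference edge from $s$ into the sink and tune its conductance by a geometric search over $O(\log(R_sd_s))$ dyadic scales. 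At the scale where the reference conductance matches $1/R_s$ the flow splits in a fixed proportion, so the relevant overlap is bounded below by a constant and its value is a well-conditioned function of $R_s$; this both removes the $\sqrt{R_sd_s}$ penalty and produces the additive $\sqrt{\hat\ET_s}\log(R_sd_s)$ term, each search step being a single coarse phase estimation of cost $O(\sqrt{\hat\ET_s})$.

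The step I expect to be the crux is the error analysis of amplitude estimation under an \emph{approximate} reflection: phase estimation realizes the reflection about the eigenvalue-$1$ eigenspace only up to the effective-gap error quantified by \cref{lem:eff-gap} and \cref{eq:TV-distance}. I would bound the induced bias on the estimated amplitude using $\|\ket\phi\|^2\le 2\ET_s/(R_sd_s)$ from \cref{sec:qwsampling}, show that at precision $\delta=\sqrt{\eps/(2\hat\ET_s)}$ this bias is $O(\eps)$ relative, and carefully partition the total $\eps$-budget among (i) the phase-estimation bias of \cref{eq:prob-QPE}, (ii) the reflection error of \cref{eq:TV-distance}, and (iii) the amplitude-estimation resolution. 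Once this bookkeeping is done, multiplying the $O(1/\eps)$ amplitude-estimation rounds by the $O(\sqrt{\hat\ET_s/\eps})$ per-round cost and adding the search overhead yields the stated $O(\sqrt{\hat\ET_s}(1/\eps^{3/2}+\log(R_sd_s)))$.
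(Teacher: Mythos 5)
Your overall architecture is exactly the paper's: phase estimation of $U$ on $\ket{\phi_s^-}$ at precision $\delta=\Theta(\sqrt{\eps/\hat\ET_s})$ so that the ``0''-outcome probability satisfies $p'=(1\pm\eps)/(R_sd_s)$ by \cref{eq:prob-QPE}, amplitude estimation on that outcome, and a geometric/halving search over $O(\log(R_sd_s))$ scales of a graph modification so that the estimated probability is $\Theta(1)$ and only $O(1/\eps)$ amplitude-estimation rounds are needed. (Your anticipated ``crux'' about approximate reflections is actually unnecessary: the paper amplitude-estimates the probability of the ``0'' outcome of the phase-estimation circuit itself, so the only bias is the one already quantified in \cref{eq:prob-QPE}, and no separate reflection-error analysis is required.)

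However, your gadget is wrong, and this is a genuine gap. A reference edge from $s$ \emph{into the sink} with conductance $w_0$ gives new parameters $d_s''=d_s+w_0$ and $R_s''=(1/R_s+w_0)^{-1}$, so the phase-estimation success probability becomes
\[
p''=\frac{1}{R_s''d_s''}=\frac{1/R_s+w_0}{d_s+w_0}.
\]
At your matched scale $w_0\approx 1/R_s$ this is $\approx 2/(R_sd_s+1)$, still $\Theta(1/(R_sd_s))$: the flow does split evenly, but the overlap $\|\Pi_0\ket{\phi_s^-}\|^2$ is governed by $R_s''d_s''\approx R_sd_s/2$, not by the flow split, so amplitude estimation still needs $\Theta(\sqrt{R_sd_s}/\eps)$ rounds and the quadratic speedup is lost. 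Pushing $p''$ up to $\Theta(1)$ requires $w_0\gtrsim d_s$, but then recovering $1/R_s=p''(d_s+w_0)-w_0$ amplifies the relative error by a factor $\approx w_0R_s\geq R_sd_s$; conditioning forces $w_0\lesssim 1/R_s$ while constant success probability forces $w_0\gtrsim d_s$, and these are incompatible whenever $R_sd_s\gg 1$, so no scale of your gadget works. The paper's gadget (\cref{app:QW-sampling}) differs in a way that matters: it appends a \emph{pendant} (non-sink) vertex $s'$ to $s$ with edge weight $\eta d_s$ and moves the source to $s'$, so that $R_{s'}d_{s'}=1+\eta R_sd_s$. At $\eta\approx 1/(R_sd_s)$ the new source has degree $\approx 1/R_s$, the success probability $p'=1/(1+\eta R_sd_s)$ is constant, $\ET_{s'}\leq 4\ET_s$, and $R_sd_s=(1/p'-1)/\eta$ is a well-conditioned readout given $\eta$. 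With that substitution (running your dyadic search over $\eta$, each step a coarse phase estimation plus constant-precision amplitude estimation, as in the algorithm of \cref{app:QW-sampling}), the rest of your argument goes through and reproduces the stated bound $O\bigl(\sqrt{\hat\ET_s}\bigl(\eps^{-3/2}+\log(R_sd_s)\bigr)\bigr)$.
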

\noindent
As mentioned in \cref{sec:RW-eff-res}, since the escape probability $p_s = 1/(R_s d_s)$, this gives a multiplicative estimate of $p_s$.
If we are also given $d_s$ (or an $\eps$-approximation of $d_s$), then we can derive an $\eps$-approximation of the effective resistance $R_s$.

The quantum walk algorithm starts from the initial state $\ket{\phi_s^-}$.
Let $Q_\delta$ denote the phase estimation operator with respect to QW operator $U$ and precision $\delta$.
From \cref{eq:prob-QPE} we know that $Q_\delta \ket{\phi_s^-}$ has probability $p' = (1 \pm \delta^2 \ET_s)/(R_s d_s)$ of having ``0'' in its phase estimation register.
Setting $\delta = \sqrt{\eps/\hat\ET_s}$ for any upper bound $\hat\ET_s \geq \ET_s$, this is $p' = (1 \pm \eps)/(R_s d_s)$.
We can then apply quantum amplitude estimation on $\ket{\phi}$ and $U$ \cite[Theorem 12]{brassard2002quantum} to get an $\eps$-multiplicative estimate of $p'$, at the cost of $O(1/(p' \eps))$ calls to $U$.
The cost of $U$ corresponds to $O(1/\delta)$ quantum walk steps, and so the overall complexity is a number of quantum walk steps that scales with
\[
\frac{1}{p' \eps} \frac{1}{\delta}
\in O\left( \frac{\sqrt{\hat\ET_s}}{\eps^{3/2}} R_s d_s \right).
\]
Using the same idea as in \cref{app:QW-sampling} we can improve this complexity to the claimed $O\big(\sqrt{\hat\ET_s} \big( \frac{1}{\eps^{3/2}} + \log(R_s d_s) \big) \big)$ by modifying the graph.

\bibliographystyle{alpha}
\bibliography{bibliography.bib}

\newcommand{\etalchar}[1]{$^{#1}$}
\begin{thebibliography}{KMOR16}

\bibitem[AGJ21]{apers2019unified}
Simon Apers, Andr{\'a}s Gily{\'e}n, and Stacey Jeffery.
\newblock A unified framework of quantum walk search.
\newblock In {\em Proceedings of the 38th Symposium on Theoretical Aspects of
  Computer Science ({STACS})}. LIPIcs, 2021.

\bibitem[AGJK20]{ambainis2019quadratic}
Andris Ambainis, Andr\'as Gily\'en, Stacey Jeffery, and Martins Kokainis.
\newblock Quadratic speedup for finding marked vertices by quantum walks.
\newblock In {\em Proceedings of the 52nd {ACM} Symposium on Theory of
  Computing ({STOC})}, pages 412--424. ACM, 2020.

\bibitem[AK17]{ambainis2017quantum}
Andris Ambainis and Martins Kokainis.
\newblock Quantum algorithm for tree size estimation, with applications to
  backtracking and 2-player games.
\newblock In {\em Proceedings of the 49th {ACM} Symposium on Theory of
  Computing ({STOC})}, pages 989--1002, 2017.

\bibitem[AKP19]{andoni2018solving}
Alexandr Andoni, Robert Krauthgamer, and Yosef Pogrow.
\newblock On solving linear systems in sublinear time.
\newblock In {\em Proceedings of the 10th Innovations in Theoretical Computer
  Science Conference (ITCS)}, pages 3:1--3:19. ACM, 2019.

\bibitem[AS19]{apers2019quantum}
Simon Apers and Alain Sarlette.
\newblock Quantum fast-forwarding {Markov} chains and property testing.
\newblock {\em Quantum Information \& Computation}, 19(3\&4):181--213, 2019.

\bibitem[BCJ{\etalchar{+}}13]{belovs2013time}
Aleksandrs Belovs, Andrew~M. Childs, Stacey Jeffery, Robin Kothari, and
  Fr{\'e}d{\'e}ric Magniez.
\newblock Time-efficient quantum walks for 3-distinctness.
\newblock In {\em Proceedings of the 40th International Colloquium on Automata,
  Languages, and Programming ({ICALP})}, pages 105--122. Springer, 2013.

\bibitem[Bel13]{belovs2013quantum}
Aleksandrs Belovs.
\newblock Quantum walks and electric networks.
\newblock arXiv: \href{https://arxiv.org/abs/1302.3143}{\ttfamily{1302.3143}},
  2013.

\bibitem[Ben20]{bencivenga2020sampling}
Dante Bencivenga.
\newblock Sampling using controlled quantum walks.
\newblock Master's thesis, University of Calgary, 2020.

\bibitem[BHMT02]{brassard2002quantum}
Gilles Brassard, Peter Hoyer, Michele Mosca, and Alain Tapp.
\newblock Quantum amplitude amplification and estimation.
\newblock {\em Contemporary Mathematics}, 305:53--74, 2002.

\bibitem[BHOP19]{ben2019estimating}
Anna Ben-Hamou, Roberto~I. Oliveira, and Yuval Peres.
\newblock Estimating graph parameters with random walks.
\newblock {\em Mathematical Statistics and Learning}, 1(3):375--399, 2019.
\newblock Also appeared in SODA'18.

\bibitem[BKL{\etalchar{+}}06]{benjamini2006waiting}
Itai Benjamini, Gady Kozma, L{\'a}szl{\'o} Lov{\'a}sz, Dan Romik, and Gabor
  Tardos.
\newblock Waiting for a bat to fly by (in polynomial time).
\newblock {\em Combinatorics, Probability and Computing}, 15(5):673--683, 2006.

\bibitem[Bro89]{broder1989generating}
Andrei~Z. Broder.
\newblock Generating random spanning trees.
\newblock In {\em Proceedings of the 30th Annual Symposium on Foundations of
  Computer Science (SFCS)}, volume~89, pages 442--447, 1989.

\bibitem[CRR{\etalchar{+}}96]{chandra1996electrical}
Ashok~K. Chandra, Prabhakar Raghavan, Walter~L. Ruzzo, Roman Smolensky, and
  Prasoon Tiwari.
\newblock The electrical resistance of a graph captures its commute and cover
  times.
\newblock {\em Computational Complexity}, 6(4):312--340, 1996.
\newblock doi:
  \href{https://doi.org/10.1007/BF01270385}{\ttfamily{\nolinkurl{10.1007/BF01270385}}}.

\bibitem[CRS14]{cooper2014estimating}
Colin Cooper, Tomasz Radzik, and Yiannis Siantos.
\newblock Estimating network parameters using random walks.
\newblock {\em Social Network Analysis and Mining}, 4(1):1--19, 2014.

\bibitem[CS15]{chung2015solving}
Fan Chung and Olivia Simpson.
\newblock Solving local linear systems with boundary conditions using heat
  kernel pagerank.
\newblock {\em Internet Mathematics}, 11(4-5):449--471, 2015.

\bibitem[DS84]{doyle1984random}
Peter~G. Doyle and J.~Laurie Snell.
\newblock {\em Random walks and electric networks}, volume~22.
\newblock American Mathematical Society, 1984.

\bibitem[GLP22]{gao2022fully}
Yu~Gao, Yang~P. Liu, and Richard Peng.
\newblock {Fully dynamic electrical flows: Sparse maxflow faster than
  Goldberg-Rao}.
\newblock In {\em Proceedings of the 62nd {IEEE} Symposium on Foundations of
  Computer Science ({FOCS})}, pages 516--527. IEEE, 2022.

\bibitem[HHL09]{harrow2009quantum}
Aram~W Harrow, Avinatan Hassidim, and Seth Lloyd.
\newblock Quantum algorithm for linear systems of equations.
\newblock {\em Physical Review Letters}, 103(15):150502, 2009.

\bibitem[IJ19]{ito2019approximate}
Tsuyoshi Ito and Stacey Jeffery.
\newblock Approximate span programs.
\newblock {\em Algorithmica}, 81(6):2158--2195, 2019.

\bibitem[JW18]{jarret2018improved}
Michael Jarret and Kianna Wan.
\newblock Improved quantum backtracking algorithms using effective resistance
  estimates.
\newblock {\em Physical Review A}, 97(2):022337, 2018.

\bibitem[KMOR16]{krovi2016quantum}
Hari Krovi, Fr{\'e}d{\'e}ric Magniez, Maris Ozols, and J{\'e}r{\'e}mie Roland.
\newblock Quantum walks can find a marked element on any graph.
\newblock {\em Algorithmica}, 74(2):851--907, 2016.

\bibitem[KS83]{kemeny1983finite}
John~G. Kemeny and J.~Laurie Snell.
\newblock {\em Finite Markov chains}.
\newblock Springer, 1983.

\bibitem[Kyn17]{kyng2017approximate}
Rasmus Kyng.
\newblock {\em Approximate {G}aussian elimination}.
\newblock PhD thesis, Yale University, 2017.

\bibitem[LMR{\etalchar{+}}11]{lee2011quantum}
Troy Lee, Rajat Mittal, Ben~W. Reichardt, Robert {\v{S}}palek, and Mario
  Szegedy.
\newblock Quantum query complexity of state conversion.
\newblock In {\em Proceedings of the 52nd {IEEE} Symposium on Foundations of
  Computer Science ({FOCS})}, pages 344--353. IEEE, 2011.
\newblock arXiv: \href{https://arxiv.org/abs/1011.3020}{\ttfamily{1011.3020}}.

\bibitem[LP17]{lyons2017probability}
Russell Lyons and Yuval Peres.
\newblock {\em Probability on trees and networks}, volume~42.
\newblock Cambridge University Press, 2017.
\newblock
  {\small\href{http://mypage.iu.edu/~rdlyons/prbtree/book.pdf}{\ttfamily{link}}}.

\bibitem[LPW17]{levin2017markov}
David~A. Levin, Yuval Peres, and Elizabeth~L. Wilmer.
\newblock {\em {M}arkov chains and mixing times}.
\newblock American Mathematical Society, 2017.
\newblock doi:
  \href{https://doi.org/10.1090/mbk/058}{\ttfamily{\nolinkurl{10.1090/mbk/058}}}.

\bibitem[LW95]{lovasz1995efficient}
L{\'a}szl{\'o} Lov{\'a}sz and Peter Winkler.
\newblock {Efficient stopping rules for Markov chains}.
\newblock In {\em Proceedings of the twenty-seventh annual ACM Symposium on
  Theory of computing}, pages 76--82, 1995.

\bibitem[MLM17]{moylett2017quantum}
Dominic~J. Moylett, Noah Linden, and Ashley Montanaro.
\newblock Quantum speedup of the traveling-salesman problem for bounded-degree
  graphs.
\newblock {\em Physical Review A}, 95(3):032323, 2017.

\bibitem[MNRS11]{magniez2011search}
Fr{\'e}d{\'e}ric Magniez, Ashwin Nayak, J{\'e}r{\'e}mie Roland, and Miklos
  Santha.
\newblock Search via quantum walk.
\newblock {\em SIAM Journal on Computing}, 40(1):142--164, 2011.
\newblock arXiv:
  \href{https://arxiv.org/abs/quant-ph/0608026}{\ttfamily{quant-ph/0608026}}.

\bibitem[Mon18]{montanaro18quantum}
Ashley Montanaro.
\newblock Quantum-walk speedup of backtracking algorithms.
\newblock {\em Theory of Computing}, 14(15):1--24, 2018.

\bibitem[Mon20]{montanaro2020quantum}
Ashley Montanaro.
\newblock Quantum speedup of branch-and-bound algorithms.
\newblock {\em Physical Review Research}, 2(1):013056, 2020.

\bibitem[Pid19]{piddock2019quantum}
Stephen Piddock.
\newblock Quantum walk search algorithms and effective resistance.
\newblock arXiv:
  \href{https://arxiv.org/abs/1912.04196}{\ttfamily{1912.04196}}, 2019.

\bibitem[PLYG21]{peng2021local}
Pan Peng, Daniel Lopatta, Yuichi Yoshida, and Gramoz Goranci.
\newblock Local algorithms for estimating effective resistance.
\newblock In {\em Proceedings of the 27th ACM Conference on Knowledge Discovery
  \& Data Mining ({SIGKDD})}, pages 1329--1338, 2021.

\bibitem[Sze04]{szegedy2004quantum}
Mario Szegedy.
\newblock Quantum speed-up of {M}arkov chain based algorithms.
\newblock In {\em Proceedings of the 45th {IEEE} Symposium on Foundations of
  Computer Science ({FOCS})}, pages 32--41. IEEE, 2004.
\newblock arXiv:
  \href{https://arxiv.org/abs/quant-ph/0401053}{\ttfamily{quant-ph/0401053}}.

\bibitem[ZGL03]{zhu2003semi}
Xiaojin Zhu, Zoubin Ghahramani, and John~D. Lafferty.
\newblock Semi-supervised learning using {G}aussian fields and harmonic
  functions.
\newblock In {\em Proceedings of the 20th International conference on Machine
  learning (ICML)}, pages 912--919, 2003.

\bibitem[Zha06]{zhang2006schur}
Fuzhen Zhang.
\newblock {\em The Schur complement and its applications}, volume~4.
\newblock Springer Science \& Business Media, 2006.

\end{thebibliography}

\appendix
\addtocontents{toc}{\protect\setcounter{tocdepth}{0}}

\appendix
\addtocontents{toc}{\protect\setcounter{tocdepth}{0}}

\section{Alternative proof of escape time identity} \label{app:escape-time-doob}

Let $P'$ denote the Markov chain that is absorbing in $\{s,M\}$.
The function $h(x) = \Pr_x\{\tau_s < \tau_M\} = v_x/R_s$ is harmonic with respect to $P'$ on $U \backslash \{s\}$.
As a consequence, we can define the \emph{Doob $h$-transform}\footnote{For more details on the Doob transform, see e.g.~\cite[Section 17.6]{levin2017markov}.} $\hat P$ of $P$ with respect to $h$ by
\[
\hat P(x,y)
= \frac{P(x,y) h(y)}{h(x)}
= \frac{P(x,y) v_y}{v_x}, \qquad x \in U.
\]
It is easy to check that this describes a stochastic transition matrix on $U$ that is absorbing in $s$.
Using that $v_x = R_s \Pr_x[\tau_s < \tau_M]$, we have that for $x \in U$,
\[
\hat P(x,y)
= \frac{P(x,y) \Pr_y[\tau_s < \tau_M]}{\Pr_x[\tau_s < \tau_M]}
= \frac{\Pr_x[X_1=y,\tau_s < \tau_M]}{\Pr_x[\tau_s < \tau_M]}
= \Pr_x[X_1=y \mid \tau_s < \tau_M].
\]
As a consequence, the transformed Markov chain with transition matrix $\hat P$ corresponds to the original chain \emph{conditioned} on hitting $s$ before $M$.

We can alternatively describe $\hat P$ as a random walk that is absorbing in $s$ on a graph $\hat G = (U,\hat E,\hat w)$ with weights
\[
\hat w_{x,y}
= w_{x,y} h(x) h(y)
= w_{x,y} \frac{v_x v_y}{R^2_{s,M}}.
\]
For the \emph{non-absorbing} walk on $\hat G$, we can calculate the degree $\hat d_x$ for $x \neq s$ as
\[
\hat d_x
= \sum_y \hat w_{x,y}
= h(x) \sum_y w_{x,y} h(y)
= d_x h(x)^2
= \frac{d_x}{R_s^2} v_x^2,
\]
where we used the harmonic condition $h(x) = \sum_y \frac{w_{x,y}}{d_x} h(y)$.
At $s$, $h$ is not harmonic, but since $v_y$ are the voltages for a unit flow from $s$, we have $\sum_y w_{sy} (v_s-v_y)=1$ and so
\[\hat d_s = \sum_y \hat w_{s,y} = \sum_y w_{s,y} \frac{v_y}{R_s}
= \frac{1}{R_s}\left(\sum_y w_{sy} v_s -1\right)
= d_s- \frac{1}{R_s}.\]
We can calculate the total edge weight $\hat W$ as
\begin{equation} \label{eq:Doob-weight}
2 \hat W
= \sum_x \hat d_x
= \frac{1}{R_s^2} \sum_{x} d_x v_x^2 - \frac{1}{R_s}=\frac{x_s-1}{R_s}.
\end{equation}

We can now easily prove the escape time identity in \cref{lem:escape-time}, which states that the expected value of the escape time $\esc = 1 + \max\{t \mid X_t = s\}$ on $G$ equals the quantity $\ET_s = \frac{1}{R_s} \sum_x v_x^2 d_x$.

\begin{proof}[Proof of \cref{lem:escape-time} (escape time identity)]
We first find an expression for the return time $\tau^+_s$ of the original walk from $s$, \emph{conditioned} on returning to $s$ before hitting $M$ (i.e., $\tau^+_s < \tau_M$).
This is equal to the expected return time $\hat\tau^+_s$ of the Doob transformed chain $\hat P$.
By Kac's lemma (\cref{eq:exp-return}) and \cref{eq:Doob-weight} this is
\[
\E_s(\tau_s^+ | \tau_s^+ < \tau_M)
= \E_s(\hat\tau_s^+)
= \frac{1}{\hat \pi(s)}
= \frac{2 \hat W}{\hat d_s}
= \frac{x_s-1}{ R_sd_s -1}.
\]
Now recall that $\Pr_s[\tau_s^+ > \tau_M]=1/R_sd_s$ (\cref{eq:return-prob}), and hence 
\begin{align*}
\E_s(\esc)
&= \E_s(1+\max\{ t \mid X_t = s\}) \\
&= 1 + \Pr_s[\tau_s^+ > \tau_M] \times 0+\sum_{j=1}^{\infty}(1-\Pr_s[\tau_s^+ > \tau_M])^j\E_s(\tau_s^+ | \tau_s^+ < \tau_M) \\
&= 1 + \frac{1-\Pr_s[\tau_s^+ > \tau_M]}{\Pr_s[\tau_s^+ > \tau_M]}\E_s(\tau_s^+ | \tau_s^+ < \tau_M)
= x_s.
\end{align*}
This proves \cref{lem:escape-time}.
\end{proof}

\section{Edge coupling lemma} \label{app:edge-coupling}

Here we prove the edge coupling lemma, which we recall below.
\edgecoupling*
\begin{proof}
Similar to the (vertex) coupling rule (\cref{lem:coupling}), we can equivalently describe the edge stopping rule $\mu$ by a random walk on a modified graph $\hat G$.
The graph is obtained by (i) splitting every edge $(x,y)$ of $G$ into a pair of edges $(x,(xy))$ and $((xy),y)$ with weights $w_{x,(xy)} = w_{(xy),y} = w_{xy}$, and then (ii) adding another vertex $(xy)'$ that is connected to the new vertex $(xy)$ with an edge of weight
\[
w_{(xy),(xy)'}
= \frac{(v_x-v_y)^2}{v_x v_y} w_{xy}.
\]
Note that for $y \in M$ we get $w_{(xy),(xy)'} = \infty$.
We let $\hat M = \{ (xy)' \mid (x,y) \in E \}$ denote the new sink\footnote{While the random walk on $\hat G$ is also absorbing in $M$, it will never hit a vertex in $M$ as $w_{(xy),(xy)'} = \infty$ for $y \in M$.}.
A random walk from $x$ in $\hat G$ will pick a random outgoing edge $(x,(xy))$ with probability $w_{xy}/d_x$, after which it will jump to the sink vertex $(xy)'$ with probability
\[
\frac{w_{(xy),(xy)'}}{2 w_{xy} + w_{(xy),(xy)'}}
= \frac{(v_x-v_y)^2}{v_x^2+v_y^2},
\]
and otherwise jump to either $x$ or $y$, each with probability $1/2$.
As a consequence, a random walk on $\hat G$ from vertex $s$ to $\hat M$ will stop at vertex $(xy)'$ corresponding to edge $(x,y)$ with probability $\Pr_s[Z_\mu = (x,y)]$.

It remains to analyze the arrival probability in the sink $\hat M$.
For this, we claim that the unit electric flow from $s$ to $\hat M$ is induced by the vertex potentials $\hat v_x = v_x^2/R_s$, $\hat v_{(xy)} = v_x v_y/R_s$ and $\hat v_{(xy)'} = 0$.
Indeed, for any original vertex $x$ we have an outgoing flow of 
\[
\sum_y \hat f_{x,(xy)}
= \sum_y (\hat v_x - \hat v_{(xy)}) w_{xy}
= \frac{v_x}{R_s} \sum_y (v_x - v_y) w_{xy}
= \delta_{sx}.
\]
For a new vertex $(xy)$ we have an outgoing flow of
\begin{align*}
\hat f_{(xy),x} + \hat f_{(xy),y} + \hat f_{(xy),(xy)'}
&= (\hat v_{(xy)} - \hat v_x) w_{xy} + (\hat v_{(xy)} - \hat v_y) w_{xy} + \hat v_{(xy)} w_{(xy),(xy)'} \\
&= \frac{w_{xy}}{R_s} \left( (v_xv_y - v_x^2) + (v_xv_y - v_y^2) + v_x v_y \frac{(v_x-v_y)^2}{v_xv_y} \right)
= 0.
\end{align*}
This proves that indeed these potentials induce a unit electric flow from $s$ to $M$.
We can now calculate the incoming flow in a vertex $(xy)'$ as
\[
\hat f_{(xy),(xy)'}
= \hat v_{(xy)} w_{(xy),(xy)'}
= \frac{1}{R_s} (v_x-v_y)^2 w_{xy},
\]
which equals the probability of sampling edge $(x,y)$ from the electric flow from $s$ to $M$ in~$G$.
As this equals the arrival probability of a random walk from $s$ to $\hat M$ on $\hat G$, this proves the lemma.
\end{proof}

\section{Improved QW sampling algorithm} \label{app:QW-sampling}

Here we prove \cref{thm:qwsampling}, which we restate below.

\qwsampling*

In \cref{sec:qwsampling} of the main text we already described a quantum walk algorithm with complexity $O(R_s d_s \sqrt{\hat\ET_s}/\eps)$, given an upper bound $\hat\ET_s \geq \ET_s$.
In the following, we show how to modify the graph so that $R_s d_s \in \Theta(1)$.
First we consider the case where we have an estimate of $R_s d_s$.

We will run the QW sampling algorithm on a modified graph $G'$, which equals $G$ except that we append a new vertex $s'$ to $s$ with weight $w_{s s'} = \eta d_s $ (see \cref{fig:R-graph}).

\begin{figure}[htb]
\centering
\includegraphics[width=.4\textwidth]{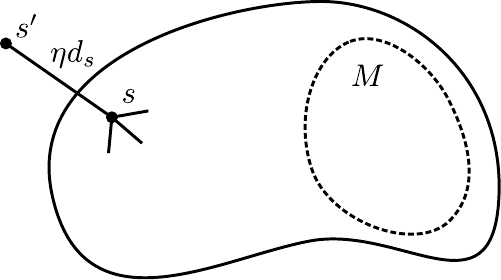}
\caption{Modified graph $G'$.}
\label{fig:R-graph}
\end{figure}

A random walk can be easily implemented on $G'$ using the random walk on $G$ and knowledge of $\eta$ (knowledge of $d_s$ is not required). At vertex $s$, step to $s'$ with probability $\frac{\eta}{1+\eta}$, otherwise take a step of the random walk on $G$. Similarly the quantum walk operator on $G'$ must reflect around $\ket{\phi'_s}$ instead of $\ket{\phi_s}$ where 
\[\ket{\phi'_s}=\sqrt{\frac{\eta}{1+\eta}}\ket{ss'}+\sqrt{\frac{1}{1+\eta}} \ket{\phi_s}\]
We do not consider the precise details of how to implement this operator, but note that if the quantum walk operator is implemented using a unitary $W:\ket{x}\ket{0}\rightarrow \ket{\phi_x}$, we can use a linear combination of unitaries to implement $W':\ket{x}\ket{0}\rightarrow \ket{\phi'_x}$ at a similar cost.

On $G'$ we have that $R_{s'} = R_s+\frac{1}{\eta d_s}$, $d_{s'} = \eta d_s$ and hence $R_{s'} d_{s'}=1+\eta R_s d_s $.
The degree at $s$ in $G'$ is $d'_s=d_s+d_{s'}$, and the degree of all other vertices in $G$ is unchanged.
Moreover, the unit electric flow from $s'$ to $M$ has the original voltages $v_x$ for $x \neq s'$ and $v_{s'} = R_{s'}$, so that the escape time from the new vertex
\begin{align*}
\ET_{s'}
&= \frac{1}{R_{s'}} \left( d_{s'} v_{s'}^2 + d_{s'}v_s^2+\sum_{x \in V} d_x v^2_x \right) \\
&=R_{s'}d_{s'}+d_{s'}R_s\frac{1}{R_{s'}}+\frac{R_{s}}{R_{s'}}\ET_s \\
&=1+\eta R_s d_s+\frac{(\eta R_s d_s)^2}{1+\eta R_s d_s}+\frac{\eta R_s d_s}{1+\eta R_s d_s}\ET_s \\
&\leq 1+2\eta R_s d_s +\ET_s.
\end{align*}
By choosing $\eta \leq 1$, we ensure that $\ET_{s'}\leq 4\ET_s$ (recalling $1\leq R_sd_s \leq \ET_s$ from \cref{eq:xbound}).
\newcommand{\tpesc}{\tilde{p}_s}
\paragraph{Given estimate $\tpesc$.}
Assume that we are given an estimate $\tpesc$ such that $\frac{1}{2 R_s d_s} \leq \tpesc \leq \frac{1}{R_s d_s}$ and an upper bound $\hat\ET_s \geq \ET_s$.
We set $\eta=\tpesc$, so that $R_{s'}d_{s'} =1+\tpesc R_s d_s \leq 2$.

Running the quantum walk sampling algorithm on the modified graph has complexity $O(R_{s'} d_{s'} \sqrt{\hat\ET_s}/\eps^2) = O(\sqrt{\hat\ET_s}/\eps^2)$.
Note that the algorithm prepares an $\eps$-approximation of the flow state $\ket{f'}$ from $s'$ to $M$ instead of the original flow state $\ket{f}$.
However, we can decompose
\[
\ket{f'}
= \frac{1}{\sqrt{2 R_{s'}}} \left( \frac{1}{\sqrt{\eta d_s}} (\ket{ss'}-\ket{s's}) + \sqrt{2R_s} \ket{f} \right).
\]
Since $\braket{ss'|f} = \braket{s's|f} = 0$, we can simply measure the state $\ket{f'}$ and postselect on receiving an edge $(x,y) \neq (s,s'),(s',s)$.
This succeeds with probability $|\braket{f'|f}|^2 = R_s/R_{s'} \geq 1/3$, in which case it returns a sample from $\ket{f}$.

\paragraph{Without a good estimate of $\tpesc$.}
Now assume that we only have an estimate $\tpesc$ satisfying $\frac{1}{2R_sd_s} \leq \tpesc \leq 1$ (e.g., we could trivially set $\tilde p_s = 1$).
We again define the modified graph $G'$ with $\eta = \tpesc$, so that we have $R_{s'} d_{s'} = 1 + \tpesc R_s d_s$.

We can again run the quantum walk algorithm with upper bound $\hat\ET_{s'} = 4 \hat\ET_s \geq \ET_{s'}$.
The same analysis goes through.
We run quantum phase estimation for time $O(\sqrt{\hat\ET_{s'}}/\eps) = O(\sqrt{\hat\ET_s}/\eps)$, and measure the phase estimation register.
By \cref{eq:prob-QPE} this returns ``0'' with probability $p = (1\pm \eps^2)/(R_{s'} d_{s'}) = (1\pm \eps^2)/(1+\tpesc R_s d_s) $.
In that case, we succesfully prepared an $\eps$-approximation of the flow state
\[
\ket{f'}
= \frac{1}{\sqrt{2 R_{s'}}} \left( \frac{1}{\sqrt{\eta d_s}} (\ket{ss'}-\ket{s's}) + \sqrt{2R_s} \ket{f} \right).
\]
The state $\ket{f'}$ now has overlap $|\braket{f'|f}|^2 = R_s/R_{s'} \geq 1/3$.

While $\tpesc \gg 1/(R_s d_s)$, the algorithm has a low success probability.
We remedy this by halving our estimate $\tpesc$ until the success probability becomes constant:

\begin{algorithm}[H]
\caption{\bf Quantum walk algorithm with bounds} \label{alg:QW-upper-bounds}
\begin{algorithmic}
\State
Given: upper bound $\hat \ET_s \geq \ET_s$.
\begin{enumerate}
\item
Set $\tpesc=1$. Construct the modified graph $G'$ with $\eta = \tpesc$ and set $\hat\ET_{s'} = 4 \hat\ET_s$.
\item
Run phase estimation on the QW operator $U$ and state $\ket{\phi^-_{s'}}$ on $G'$ to precision $1/(2\sqrt{\hat\ET_{s'}})$.
\item
Run quantum amplitude estimation to estimate the success probability $p = (1\pm \frac{1}{4})/(1+\tpesc R_s d_s)$ to precision $1/8$.
If the estimate is less than $1/8$, halve $\tpesc$ and go back to step 1.
\item
Run phase estimation on the QW operator $U$ and state $\ket{\phi^-_{s'}}$ on $G'$ to precision $\eps/(2\sqrt{\hat\ET_{s'}})$.
Measure the phase estimation register.
If the estimate returns ``0'', output the resulting state.
Otherwise, go back to step 1.
\end{enumerate}
\end{algorithmic}
\end{algorithm}

We divide the algorithm in a first phase where $\tpesc > 3/(R_sd_s)$, and a second phase where $\tpesc \leq 3/(R_sd_s)$.
During the first phase we have $p \leq (1+\frac{1}{4})/4$, and so from step 3.~we can still return to step 1.
As we halve $\tpesc$ with every repetition, the total number of repetitions of steps 1.-3. is $O(\log(R_s d_s))$.

After that, we have $3/(2R_sd_s) < \tpesc \leq3/(R_sd_s)$ so that $p \geq 1/4$ and our estimate in step 3.~will be at least $1/8$.
At this point, the algorithm will no longer change our estimate $\tpesc$.
In every repetition the algorithm proceeds to step 4., in which it succeeds with probability at least $1/4$.
The expected number of repetitions in this phase is hence constant.

Overall, in expectation the algorithm terminates after $O(\log(R_s d_s))$ iterations of steps 1.-3., each of which has complexity $O(\sqrt{\hat\ET_s})$, and  step 4. has complexity $O(\sqrt{\hat\ET_s}/\eps)$, so the total complexity in expectation is
\[
O\left(\sqrt{\hat\ET_s} \left(\frac{1}{\eps} +\log(R_s d_s)\right) \right).
\]
This proves \cref{thm:qwsampling}.

\section{Jensen's inequality}

We will use the following useful variation of Jensen's inequality.

\begin{lemma} \label{lem:Jensensum}
Consider the elfs process $\{Y_0 = s,Y_1,\dots,Y_\rho \in M\}$ with $\E[\rho] = \EHT_s$.
Let $f$ be a concave function, and let $g \in \mathbb{R}^V$ associate values to all vertices. Then
\[
\E\left[\sum_{i=0}^{\rho-1} f\left(g_{Y_i} \right)\right]
\le	\EHT_s \, f \left(\frac{\E\left[\sum_{i=0}^{\rho-1} g_{Y_i}\right] }{\EHT_s} \right).
\]
\end{lemma}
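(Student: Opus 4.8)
The plan is to reduce the statement to the ordinary finite-form Jensen inequality by passing from a sum over the random-length time axis to a sum over vertices via the expected occupation measure of the elfs process. First I would introduce, for each vertex $x$, the number of visits $T_x = \sum_{i=0}^{\rho-1} \ind[Y_i = x]$ made before absorption, so that both sums in the statement rewrite as vertex sums:
\[
\sum_{i=0}^{\rho-1} g_{Y_i} = \sum_x T_x\, g_x
\qquad\text{and}\qquad
\sum_{i=0}^{\rho-1} f(g_{Y_i}) = \sum_x T_x\, f(g_x).
\]
Taking expectations and using linearity, these become $\sum_x \E[T_x]\, g_x$ and $\sum_x \E[T_x]\, f(g_x)$, and by definition $\sum_x \E[T_x] = \E[\rho] = \EHT_s$.

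Next I would define the weights $\lambda_x = \E[T_x]/\EHT_s$. By the identity above $\sum_x \lambda_x = \E[\rho]/\EHT_s = 1$, and clearly $\lambda_x \ge 0$, so $(\lambda_x)_x$ is a genuine probability distribution over the vertices. The discrete Jensen inequality for the concave function $f$ then gives
\[
\sum_x \lambda_x\, f(g_x) \le f\!\left( \sum_x \lambda_x\, g_x \right).
\]
Substituting back $\sum_x \lambda_x f(g_x) = \frac{1}{\EHT_s}\E\big[\sum_{i=0}^{\rho-1} f(g_{Y_i})\big]$ and $\sum_x \lambda_x g_x = \frac{1}{\EHT_s}\E\big[\sum_{i=0}^{\rho-1} g_{Y_i}\big]$, and finally multiplying through by $\EHT_s$, yields exactly the claimed inequality.

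The only point requiring care is the interchange of expectation and the sum over vertices (and, relatedly, the finiteness of the $\E[T_x]$), which is where the argument would break if the occupation times were not integrable. This is handled by observing that on the finite vertex set the finiteness of $\EHT_s = \sum_x \E[T_x]$ forces every $\E[T_x]$ to be finite, so all the vertex sums converge and the rearrangement is valid. Beyond this bookkeeping there is no genuine obstacle: the crux of the argument is simply the observation that normalizing the expected occupation measure by $\EHT_s$ produces a probability distribution, after which the bound is immediate from ordinary Jensen.
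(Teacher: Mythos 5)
Your proof is correct and takes essentially the same route as the paper: both rewrite the two expectations as vertex sums weighted by the expected occupation times $c_x = \E\bigl[\sum_{i=0}^{\rho-1} \ind[Y_i = x]\bigr]$, note that $\sum_x c_x = \EHT_s$, and conclude by the weighted (finite-form) Jensen inequality for concave $f$. Your closing remark on the integrability of the occupation times and the expectation--sum interchange is a minor technicality the paper leaves implicit, and it does not alter the argument.
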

\begin{proof}
We will use Jensen's inequality, which states that for a concave function $f$, nonnegative weights $\{c_x\}$ and values $\{g_x\}$ we have that
\[
f\left( \frac{\sum_x c_x g_x}{\sum_x c_x} \right)
\geq \frac{\sum_x c_x f(g_x)}{\sum_x c_x}.
\]
Let $c_x \geq 0$ denote the expected time spent by the elfs process in a vertex $x$, so that $\sum_x c_x = \EHT_s$ and
\[
\E\left[\sum_{i=0}^{\rho-1} f\left(g_{Y_i} \right)\right]
= \sum_x c_x f(g_x).
\]
Jensen's inequality then states that
\[
\sum_x c_x f(g_x)
\leq \left(\sum_x c_x\right) f\left( \frac{\sum_x c_x g_x}{\sum_x c_x} \right)
= \EHT_s f\left( \frac{\E\left[\sum_{i=0}^{\rho-1} g_{Y_i} \right]}{\EHT_s} \right). \qedhere
\]
\end{proof}

\end{document}